\definecolor{myred}{RGB}{182,20,50}
\definecolor{myblue}{RGB}{227, 245, 250}
\definecolor{myyellow}{RGB}{255,255,0}
\definecolor{mygreen}{RGB}{250, 250, 235}
\definecolor{myorange}{RGB}{255,128,0}
\definecolor{mygray}{RGB}{192,192,192}
\newtheorem{mydef}{Definition}
\newtheorem{mythm}{Theorem}
\newtheorem{myprob}{Problem}
\newtheorem{mypro}{Proposition}
\newtheorem{mycla}{Claim}
\newtheorem{myexm}{Example}
\newtheorem{remark}{Remark}
\def \M{\mathcal{M}}
\def \S{\mathcal{S}}
\def \A{\mathcal{A}}
\definecolor{myblue}{RGB}{135,206,235}
\definecolor{myyellow}{RGB}{255,255,0}
\definecolor{mygreen}{RGB}{0,255,0}
\title{Optimal Control   of Markov Decision Processes   for Efficiency with Linear Temporal Logic Tasks}
\author{
Yu Chen, Xunyuan Yin, Shaoyuan Li and Xiang Yin%
\thanks{This work was supported by the  National Natural Science Foundation of China (62173226, 62061136004, 61833012).}
	\thanks{Yu Chen, Shaoyuan Li and Xiang Yin are with School of Automation and Intelligent Sensing, Shanghai Jiao Tong University, Shanghai 200240, China.
	{\tt\small \{yuchen26, syli, yinxiang\}@sjtu.edu.cn}.   Xuanyuan Yin is with School of Chemistry, Chemical Engineering and Biotechnology, Nanyang Technological University, Singapore.
    (Corresponding author: Xiang Yin) 
}
}
\begin{document}

\maketitle

\begin{abstract}
We investigate the problem of optimal control synthesis for Markov Decision Processes (MDPs), addressing both qualitative and quantitative objectives. Specifically, we require the system to satisfy a qualitative task specified by a Linear Temporal Logic (LTL) formula with probability one. Additionally, to quantify the system's performance, we introduce the concept of \emph{efficiency}, defined as the \emph{ratio} between rewards and costs. This measure is more general than the standard long-run average reward metric, as it seeks to maximize the reward obtained \emph{per unit cost}. Our objective is to synthesize a control policy that not only ensures the LTL task is satisfied but also maximizes efficiency. We present an effective approach for synthesizing a stationary control policy that achieves $\epsilon$-optimality by integrating state classifications of MDPs with perturbation analysis in a novel manner. Our results extend existing work on efficiency-optimal control synthesis for MDPs by incorporating qualitative LTL tasks. Case studies in robot task planning are provided to illustrate the proposed algorithm.
\end{abstract}

\begin{IEEEkeywords}
Markov Decision Processes, Linear Temporal Logic,  Ratio Objective, Perturbation Analysis.
\end{IEEEkeywords}

\IEEEpeerreviewmaketitle

\section{Introduction} 
Decision-making in dynamic environments is a fundamental challenge for autonomous systems, requiring them to react to uncertainties in real-time to achieve desired tasks with performance guarantees. Markov Decision Processes (MDPs) offer a theoretical framework for sequential decision-making by abstracting uncertainties in both environments and system executions as transition probabilities. Leveraging MDPs allows for the analysis of system behavior and the synthesis of optimal control policies through systematic procedures.
In the context of autonomous systems, MDPs have found extensive applications across various domains such as swarm robotics \cite{haksar2023constrained}, autonomous driving \cite{li2019stochastic}, and underwater vehicles \cite{paull2018probabilistic}; 
reader is referred to recent surveys  for additional references and applications \cite{luckcuck2019formal,lauri2022partially,kurniawati2022partially,yin2024formal}.

To assess the performance of infinite horizon behaviors, two widely recognized measures are the \emph{long-run average reward} (or mean payoff) and the \emph{discounted reward} \cite{puterman}. The long-run average reward quantifies the average reward received per state as the system evolves infinitely towards a steady state. However, this measure overlooks the costs incurred for each reward. For instance, a cleaning robot may prioritize collecting more trash while conserving energy. 
Therefore, recently, the notion of \emph{efficiency} has emerged to capture the  \emph{reward-to-cost ratio}  \cite{bloem2014synthesizing,von2016synthesizing}. Specifically, the efficiency of a system trajectory is defined as the ratio between accumulated reward and accumulated cost. The efficient controller synthesis problem thus aims to maximize the expected long-run efficiency \cite{von2016synthesizing,van2021supervisor,van2018performance,lv2024optimal}.

In addition to maximizing quantitative performance measures, many applications require achieving qualitative tasks. Recently, within the context of  MDPs, there has been a growing interest in synthesizing control policies to maximize the probability of satisfying high-level logic tasks expressed, for example, in linear temporal logic (LTL) or omega-regular languages.
For instance, when the MDP model is known precisely, offline algorithms have been proposed to synthesize optimal controllers under LTL specifications; see, e.g., \cite{ding2014optimal,guo2018probabilistic,niu2019optimal,savas2019entropy,xie2021secure,guo2023hierarchical,bals2024multigain}. Recently, reinforcement learning for LTL tasks has also been investigated for MDPs with unknown transition probabilities \cite{hahn2019omega,cai2021modular,wen2021probably,voloshin2022policy,kantaros2024sample}.
As a special instance, the \emph{surveillance task}, which arises in the persistent surveillance of autonomous systems \cite{smith2011optimal,kantaros2020stylus,CHEN20234601}, can also be captured by an LTL task, as it is essentially equivalent to the concept of the B\"{u}chi accepting condition. This condition requires that certain desired target states are visited infinitely often. In general, LTL tasks can capture more complex behaviors and system constraints.

In this work, we investigate the synthesis of control policies for MDPs with both qualitative and quantitative requirements. Specifically, for the qualitative aspect, we require that the LTL task is satisfied with probability one (w.p.1). For the quantitative aspect, we adopt the efficiency measure. Our overarching objective is to maximize the expected long-run efficiency while ensuring the satisfaction of the LTL task w.p.1.
It is worth noting that existing works typically focus on either efficiency optimization (ratio objectives) without qualitative requirements \cite{von2016synthesizing}, or they consider qualitative requirements under the standard long-run average reward (mean payoff) measure \cite{chatterjee2015measuring}. In \cite{ding2014optimal}, the authors consider qualitative requirements expressed by LTL formulas, with a quantitative measure referred to as the \emph{per-cycle} average reward. However, the per-cycle average reward is essentially a special instance of the ratio objective by setting a unit cost for specific states in the denominator.
To the best of our knowledge, the simultaneous maximization of efficiency while achieving the LTL task has not been addressed in the existing literature. This gap motivates our work, where we propose a novel framework for solving MDPs with both qualitative and quantitative objectives, aiming to balance long-run efficiency and the satisfaction of high-level LTL tasks.

To fill this gap in research, we present an effective approach to synthesize stationary policies achieving $\epsilon$-optimality. Our approach integrates state classifications of MDPs \cite{baier2008principles} and perturbation analysis techniques \cite{cao1998relations,xiren2007,cassandras2008introduction} in a novel manner.
Specifically, the key idea of our approach is as follows. Initially, we decompose the MDPs into accepting maximal end components (AMECs) using state classifications, where for each AMEC, we solve the standard efficiency optimization problem without considering the LTL task \cite{von2016synthesizing}. Subsequently, we synthesize a basic policy that achieves optimal efficiency but may fail to fulfill the LTL task. Finally, we \emph{perturb} the basic policy ``slightly" by introducing a target-seeking policy such that the quantitative performance is decreased to 
$\epsilon$-optimal, while still ensuring that the LTL task is fulfilled.
Our approach demonstrates that perturbation analysis is a conceptually simple yet powerful technique for solving MDPs with both qualitative and quantitative tasks, offering new insights into addressing this class of problems. Furthermore, our results also generalize existing results on perturbation analysis from long-run average reward optimizations to the case of long-run efficiency optimizations. This extension opens up new possibilities for applying perturbation analysis to more complex decision-making scenarios involving both qualitative tasks (such as LTL specifications) and quantitative objectives (such as efficiency maximization).

The rest of the paper is organized as follows. 
In Section II, we present some necessary backgrounds and notations. 
Then, we formulate the efficiency optimization problem under LTL tasks in Section III. 
In Section IV, we solve the problem for the special case of communicating MDPs based on a new result from perturbation analysis. 
The general case of non-communicating MDPs is tackled in Section V. 
Case studies of robot task planning are provided in Section VI. 
Finally, we conclude the paper in Section VII.
A preliminary and partial version of this paper was presented in \cite{chen2024optimal}. 
Compared with the conference version, the present journal version has the following main differences. 
First, this paper considers the general LTL task, while \cite{chen2024optimal} only considers the surveillance task, which is a special instance.
Second, we provide  rigorous proofs that cover  the structural properties of this problem and the existence of an optimal solution. Furthermore, we provide extensive case studies and simulations to illustrate the effectiveness of the proposed method.

\section{Preliminary}\label{sec:prelinimary}
\subsection{Markov Decision Processes}
\begin{mydef}[\bf Markov Decision Processes]\upshape
A (finite and labeled) Markov decision process (MDP) is a $6$-tuple
\[
\mathcal{M} = (S,s_0,A,P,\mathcal{AP},\ell),
\]
 where 
 $S$ is a finite states set, $s_0 \in S$ is the initial state,  $A$ is a finite actions set, 
 $P : S \times A \times S \rightarrow [0,1]$ is a transition function such that
  $\forall s \in S, a \in A: \sum_{s' \in S} P(s'\mid s,a)\in \{0,1\}$, 
  $\mathcal{AP}$ is a atomic propositions set, and
 $\ell:S \to 2^{\mathcal{AP}}$ is a labeling function assigning each state a set of atomic propositions.
\end{mydef}

 We also write $ P(s'\mid s,a)$ as $P_{s,a,s'}$.  
 For $s\in S$, the available actions set at $s$ is defined by $A(s)=\{a\in A: \sum_{s' \in S} P_{s,a,s'} = 1\}$. 
 We assume that each state has at least one available action, i.e., $\forall s\in S: A(s)\neq \emptyset$. 
 An MDP induces a directed graph (digraph) such that each vertex is a state and an edge of form $\langle s,s'\rangle$ is defined if $P_{s,a,s'}>0$ for some $a\in A(s)$. Given an MDP $ \mathcal{M}$,    
a \emph{sub-MDP} is a tuple $(\mathcal{S},\mathcal{A})$ such that $\emptyset \neq \mathcal{S} \subseteq S$ is a states subset and 
$\mathcal{A} :\mathcal{S} \rightarrow 2^{A}\setminus \emptyset$ is a function satisfying
 (i) $\forall s\in \mathcal{S}: \mathcal{A}(s) \subseteq A(s)$; and 
(ii) $\forall s\in \mathcal{S},a\in \mathcal{A}(s): \sum_{s' \in \S} P_{s,a,s'}=1$. 
Essentially,  $(\mathcal{S},\mathcal{A})$ induces a new MDP  by restricting the state space to $\mathcal{S}$ and available actions to $\mathcal{A}(s)$ for each state $s\in \mathcal{S}$. 

\begin{mydef}[\bf Maximal End Components]\upshape
Let  $(\mathcal{S},\mathcal{A})$ be a sub-MDP of $\mathcal{M}= (S,s_0,A,P,\mathcal{AP},\ell)$.
$(\mathcal{S},\mathcal{A})$ is said to be an \emph{end component} (EC) if its induced digraph is strongly connected.  
We say $(\mathcal{S},\mathcal{A})$ is a \emph{maximal end component} (MEC) if it is an EC and there is no other end component $(\mathcal{S}',\mathcal{A}')$  such that 
(i) $\mathcal{S}\subseteq \mathcal{S}'$; and (ii) $\forall s \in \S, \A(s) \subseteq \A'(s)$.
We denote by $\texttt{MEC}(\mathcal{M})$ the MECs set of $\mathcal{M}$.
\end{mydef} 
Intuitively, if $(\mathcal{S},\mathcal{A})$ is an MEC, then we can find a policy such that, once $\S$ is reached, we will stay in the MEC forever and all states in $\S$ will be visited infinitely w.p.1 thereafter.

 A Markov chain (MC) $\mathcal{C}$ is an MDP such that $|A(s)|=1$ for any $s \in S$. 
We denote by $\mathbb{P} \in \mathbb{R}^{|S|\times |S|}$ the transition matrix of MC, i.e.,  $\mathbb{P}_{s,s'}=P(s'\mid s,a)$, where $a\in A(s)$ is the unique action at state $s$.  
 Therefore, we can omit actions set of MC and write it as $\mathcal{C}=(S,s_0,\mathbb{P})$.
 The \emph{limit transition matrix} of MC is defined by
 $\mathbb{P}^\star=\lim_{n \rightarrow \infty} \frac{1}{n} \sum_{k=0}^{n-1} \mathbb{P}^{k}$, which always exists for finite MC \cite{puterman}. Let $\pi_{0} \in \mathbb{R}^{|S|}$ be the \emph{initial distribution} where $\pi_{0}(s)=1$ if $s$ is initial state and $\pi_{0}(s)=0$ otherwise.
 A state is said to be  \emph{transient} if its corresponding column in the limit transition matrix is a zero vector; otherwise, the state is \emph{recurrent}.

For $t=0,1,\dots$, we define the history set up to time instant $t$ recursively by $H_0=S$ and when $t\geq 1$, $H_t=H_{t-1}\times A \times S$. A \emph{policy} for an MDP $\mathcal{M}$ is a sequence 
$\mu = (\mu_{0}, \mu_{1},... )$, where  $\mu_{t} : H_t \times A \rightarrow [0,1]$ satisfies $\forall h_t=s_0a_0\dots s_t \in H_t: \sum_{a \in A(s_t)} \mu_{k}(h_t,a)=1$. A policy $\mu = (\mu_{0}, \mu_{1},... )$ is said to be \emph{stationary} if the decision rules are state-based and same at each time instant, i.e., $\forall i,\mu_{i}=\mu'$ such that $\mu': S \times A \to [0,1]$ satisfies $\forall s \in S: \sum_{a \in A(s)}\mu'(s,a)=1$. We write a stationary policy as $\mu=(\mu,\mu,\dots)$ for simplicity. 
Given an MDP $\mathcal{M}$, the sets of all policies and all stationary policies are denoted by $\Pi_\mathcal{M}$ and $\Pi^{S}_\mathcal{M}$, respectively. 
For policy $\mu \in \Pi_{\M}^S$, it induces a transition matrix $\mathbb{P}^{\mu}$, 
where $\mathbb{P}^{\mu}_{i,j}=\sum_{a\in A(i)}\mu(i,a)P_{i,a,j}$.  

Let $\Omega=(S\times A)^\infty$ be the sample space of the MDP and $X_t$, $Y_t$ be the random variables such that $X_t(w)=s_t$ and $Y_t(w)=a_t$ for $w=s_0a_0s_1a_1\dots \in \Omega$. Define the history process $Z_t$ by $Z_t(w)=(s_0,a_0,s_1,a_1,\dots,s_t)$. A policy $\mu=(\mu_0,\mu_1,\dots)\in \Pi_\M$ induces a probability measure $\textsf{Pr}^\mu_\M$ s.t.
\begin{align}
    &\textsf{Pr}^\mu_\M(X_0=s)=\pi_0(s) \nonumber \\
    &\textsf{Pr}^\mu_\M(Y_t=a \mid Z_t=h_t)=\mu_t(h_t,a) \nonumber \\
   & \textsf{Pr}^\mu_\M(X_{t+1}=s'\mid Z_t=(h_{t-1},a,s),Y_t=a_t)=P(s'\mid s,a_t) \nonumber
\end{align}
where $\pi_0$ is initial distribution, $h_t \in H_t$ is a history up to time $t$. Readers can find detailed information about this standard probability measure in \cite{puterman}.

For $\omega=s_0a_0s_1a_1\dots \in \Omega$, the \emph{limit} of $\omega$, denoted by $\texttt{limit}(\omega)$, is the state action pair $(\S_\omega,\A_\omega)$ such that $\S_\omega \subseteq S$ is  the set of states that are visited infinitely often in $\omega$ and $\A_\omega: S_\omega \to 2^A$ is the set of actions chosen infinitely often, i.e.,
\[
  \A_\omega(s)=\{ a \in A(s) \mid \forall m, \exists n>m , \text{ s.t. } s_n=s, a_n=a \}.
\]
 For $\mu \in \Pi_{\M}$ and MEC $(\S,\A) \in \texttt{MEC}(\M)$, 
let
\begin{equation} \label{eq:stayingforeverinMEC}
    \textsf{Pr}^\mu_\texttt{R}(\S,\A)=\textsf{Pr}^{\mu}_{\M}(\{ \omega \in \Omega \mid \texttt{limit}(\omega) = (\Tilde{\S},\Tilde{\A}), \Tilde{\S} \subseteq \S \})
\end{equation}
be probability of staying forever in MEC $(\S,\A)$.
\subsection{Ratio Objectives for Efficiency}
In the context of MDPs, 
quantitative measures such as \emph{average reward}  have been widely used for systems operating in infinite horizon. 
In \cite{bloem2014synthesizing,von2016synthesizing}, a general quantitative measure called \emph{ratio objective} is proposed to characterize the \emph{efficiency} of policies. 
Specifically, two different functions are involved: 
\begin{itemize}
\item 
a \emph{reward function} $\mathtt{R}: S\times A \to \mathbb{R}$ assigning each state-action pair a reward; and 
\item 
a  \emph{cost function} $\mathtt{C}: S \times A \to \mathbb{R}_+$ assigning each state-action pair a positive cost.
\end{itemize}
Then the \emph{efficiency value}  from  initial state $s_0$ under  policy $\mu \in \Pi_{\M}$ w.r.t.\ reward-cost pair $(\mathtt{R},\mathtt{C})$ is defined by
\begin{equation} \label{eq:ratioobjectdef}
    J^{\mu}(s_0, \mathtt{R}, \mathtt{C}) := \liminf_{N\to +\infty} E \left\{  \frac{\sum_{i=0}^{N} \mathtt{R}(s_i,a_i)}{\sum_{i=0}^{N} \mathtt{C}(s_i,a_i)} \right\}, 
\end{equation}
where $E\left\{ \cdot \right\}$ is the expectation of probability measure $\textsf{Pr}_\M^\mu$. 
We omit the reward and cost functions if they are clear by context.
Intuitively, $J^{\mu}(s_0)$ captures the average reward the system received \emph{per cost}, i.e., the efficiency. 
Let $\Pi \subseteq \Pi_\M$  be a set of policies. 
Then optimal efficiency value among policy set $\Pi$  is denoted by $J(s_0, \Pi) = \sup_{\mu \in \Pi} J^\mu(s_0)$. A policy $\mu \in \Pi_\M$ is \emph{optimal} (respectively, \emph{$\epsilon$-optimal}) among policies set $\Pi$ if for all $s \in 
S$, we have $J^\mu(s)=J(s,\Pi)$ (respectively,  $J^\mu(s)\geq J(s,\Pi)-\epsilon$).
Note that the standard long-run average reward is a special case of ratio objective by taking  $\mathtt{C}(s,a)=1, \forall s\in S,a\in A(s)$.
For this case, 
we denote by 
$W^{\mu}(s_0,\mathtt{R}):=J^{\mu}(s_0,\mathtt{R},\mathbf{1})$ the standard long-run average reward
from  initial state $s_0$ under policy $\mu$, 
and denote by $W(s_0, \Pi) = \sup_{\mu \in \Pi} W^\mu(s_0)$ the optimal long-run average reward among policies set $\Pi$.

\subsection{Linear Temporal Logic}
Let $\mathcal{AP}$ be the atomic propositions set. We express formal tasks by Linear Temporal Logic (LTL), which is constructed based on atomic propositions, Boolean operators and temporal operators. Specifically, the syntax of LTL formulae is defined recursively as follows: 
\[
\varphi 
::= 
true  
\mid a 
\mid \varphi_1\wedge \varphi_2
\mid \neg \varphi
\mid \bigcirc \varphi
\mid \varphi_1 U \varphi_2,
\] 
where $a \in \mathcal{AP}$ is an atomic proposition; 
$\neg$ and $\wedge$ are Boolean operators ``negation" and ``conjunction", respectively; 
$\bigcirc$ and $U$ are temporal operators  ``next'' and ``until'', respectively. 
Note that one can further induce  temporal operators 
such as  ``eventually''  $\lozenge \varphi := true U \varphi$ 
and 
``always''  $\square\varphi := \neg \lozenge \neg \varphi$. 

An LTL formula $\varphi$ is interpreted over infinite words on $2^{\mathcal{AP}}$. 
Readers are referred to \cite{baier2008principles} for details on semantics of LTL formulae.
For infinite word $\sigma \in (2^{\mathcal{AP}})^{\infty}$, 
we denote by $\sigma \models \varphi$ if it satisfies LTL formula $\varphi$. 
The set of all infinite words satisfying  $\varphi$ is denoted by $\mathcal{L}_{\varphi}=\{ \sigma \in (2^{\mathcal{AP}})^{\infty}\mid 
\sigma \models \varphi \}$. 

\begin{mydef}[\bf Deterministic Rabin Automata]\upshape
A \emph{deterministic Rabin automata} (DRA) is a tuple $R=(Q,q_0,\Sigma,\delta,Acc)$, where $Q$ is a finite states set, $q_0 \in Q$ is the initial state, $\Sigma$ is a finite alphabet set, $\delta: Q \times \Sigma \to Q$ is the transition function, and $Acc=\{(B_1,G_1),\dots,(B_n,G_n) \}$ is a finite set of Rabin pairs such that $B_i,G_i \subseteq Q$ for all $i=1,2,\dots,n$.
\end{mydef}

For an infinite word $\sigma=\sigma_1\sigma_2\cdots \in \Sigma^{\infty}$,  
its induced infinite \emph{run} in DRA $R$ is the sequence of states $\rho=q_0 q_1\cdots \in Q^{\infty}$ such that $q_i=\delta(q_{i-1},\sigma_i)$ for all $i \geq 1$. An infinite run $\rho$ is said to be accepted if  there exists a Rabin pair $(B_i,G_i) \in Acc$ such that $\textsf{inf}(\rho)\cap G_i\neq \emptyset$ and $\textsf{inf}(\rho) \cap B_i = \emptyset$, where $\textsf{inf}(\rho)$ is the set of states that occur infinitely many times in $\rho$. 
An infinite word $\sigma$ is said to be \emph{accepted} if its induced  infinite run is accepted. 
We denote by $\mathcal{L}(R) \subseteq \Sigma^{\infty}$ the set of all accepted words of DRA $R$. 
For an arbitrary LTL formula $\varphi$ over $\mathcal{AP}$, 
it is well-known that \cite{baier2008principles}, there exists a DRA with $\Sigma=2^{\mathcal{AP}}$ that accepts all infinite words satisfying $\varphi$, i.e., $\mathcal{L}_{\varphi}=\mathcal{L}(R)$.

For an MDP $\M$, a sample path $\omega=s_0a_0s_1a_1\cdots \in \Omega$   generates a word $\ell(\omega)=\ell(s_0)\ell(s_1)\cdots  \in (2^\mathcal{AP})^\infty$. 
Given an LTL formula $\varphi$ and a policy $\mu\in \Pi_\M$, we define
\[
\textsf{Pr}^{\mu}_{\M}(s_0 \models \varphi) :=\textsf{Pr}^{\mu}_{\M}(\{\omega \in \Omega \mid \ell(\omega) \models \varphi \})
\]
as the probability of satisfying LTL formula $\varphi$ for MDP $\M$ under policy $\mu \in \Pi_{\M}$ initial from $s_0$. 
We denote by $  \Pi^{\varphi}_{\M}$ the set of policies under which the LTL task can be satisfied with probability one, i.e., 
  \[
  \Pi^{\varphi}_{\M}=\{ \mu \in \Pi_{\M} \mid   \textsf{Pr}^{\mu}_{\M}(s_0 \models \varphi)=1\}.
  \]
\subsection{Product MDPs}
 We construct the product system between the original MDP and the DRA representing the LTL task to integrate the task information into the MDP model.
 
\begin{mydef}[\bf Product MDPs]\upshape
Let $\M=(S,s_0,A,P,\mathcal{AP},$ $\ell)$ be an MDP and $R=(Q,q_0,2^{\mathcal{AP}},\delta,Acc)$ be the DRA such that $\mathcal{L}_\varphi=\mathcal{L}(R)$. 
The \emph{product MDP} is a 7-tuples
\[
\M_{\otimes}=(S_{\otimes},s_{0,{\otimes}},A,P_{\otimes},\mathcal{AP},\ell_{\otimes},Acc_{\otimes}),
\]
where $S_{\otimes} = S \times Q$ is the product state space, 
$s_{0,{\otimes}}=(s_0,q)$ is the initial state such that $q=\delta(q_0,\ell(s_0))$,
 $P_{\otimes}:S_{\otimes} \times A \times S_{\otimes} \to [0,1]$ is the transition function defined by
  \begin{align}\label{eq:producttran}
	P_{\otimes}((s,q),a,(s',q')) \!= \!
		\left\{\!\!
		\begin{array}{cl}
			P_{s,a,s'} & \text{if }    q'=\delta(q,\ell(s'))  \\
			0                & \text{otherwise}  
		\end{array}
		\right.\!\!\!,   
\end{align} 
$\ell_{\otimes}$ is the labeling function such that
$\ell_{\otimes}((s,q))=\ell(s)$ and
$Acc_{\otimes}=\{ (B_1^{\otimes},G_1^{\otimes}),\dots,(B_n^{\otimes},G_n^{\otimes}) \}$ such that $B_i^{\otimes}=S \times B_i$ and $G_i^{\otimes}=S \times G_i$ for all $i=1,\dots n$.
\end{mydef}
Note that, since $R$ is deterministic and the action spaces of $\M$ and $\M_{\otimes}$ are same, there exists a one-to-one correspondence between policies in $\M$ and $\M_{\otimes}$ 
\cite{baier2008principles, guo2018probabilistic}. Hereafter in this paper, we will omit the subscript and directly denote by $\M=(S,s_0,A,P,\mathcal{AP},\ell,Acc)$ the product MDP for the sake of simplicity. 
The control synthesis problem is solved based on the product MDP. 
Specifically, the reward and cost functions can be directly defined by first component of product state. Furthermore, for any state sequence $\rho\in S^{\infty}$ in (product) MDP, it satisfies the LTL formula if and only if there exists an accepting pair $(B_{k},G_{k})\in Acc$ such that $\textsf{inf}(\rho)\cap G_{k} \neq \emptyset$ and $\textsf{inf}(\rho)\cap B_{k} = \emptyset$.  This accepting condition can be captured by the notion of maximal accepting end component. 

\begin{mydef}[\bf Maximal Accepting End Components]\upshape
Given a product MDP $\M=(S,s_0,A,P,\mathcal{AP},\ell,Acc)$, an \emph{accepting end component} (AEC) of $\M$ is an EC $(\S,\A)$ such that for some accepting pair $(B_{k},G_{k})\in Acc$, 
we have $\S \cap B_{k} = \emptyset$ and $\S \cap G_{k} \neq \emptyset$. Moreover $(\S,\A)$ is said to be an \emph{maximal accepting end component} (MAEC) 
if there exists no other AEC $(\S',\A')$ such that 
(i) $\mathcal{S}\subseteq \mathcal{S}'$; and (ii) $\forall s\in \S, \A(s) \subseteq \A'(s)$. We denote by $\texttt{AEC}(\M)$ and $\texttt{MAEC}(\M)$ the set of AECs and MAECs of product MDP $\M$, respectively. 
\end{mydef}
Intuitively, for policy $\mu \in \Pi_{\M}$, the probability of satisfying a given LTL formula is equal to the probability of reaching MAEC and staying in there forever. 
Note that both the MECs set and the MAECs set can be computed effectively via graph search over the product state space; see, e.g., \cite{baier2008principles,guo2018probabilistic}. 
For any MAEC $(\S',\A') \in \texttt{MAEC}(\M)$, it is contained in some MEC $(\S,\A) \in \texttt{MEC}(\M)$ such that $\S' \subseteq \S$ and $\forall s \in \S'$, $\A'(s) \subseteq \A(s)$. MEC $(\S,\A) \in \texttt{MEC}(\M)$ is an \emph{accepting maximal end component} (AMEC) if it contains at least one MAEC. We denote by $\texttt{MEC}_{\varphi}(\M)$ the set of AMECs. We use the following example to illustrate notions of different end components.
\begin{myexm}
    Let us consider a product MDP $\M$ shown in Figure~~\ref{fig:exampleend}. For each action, the transition probability is one and the value is omitted in the figure. This MDP has two MECs, i.e., $\texttt{MEC}(\M)=\{ (\S_1,\A_1), (\S_2,\A_2) \}$ such that $\S_1=\{ 2 \}$, $\A_1(2)=\{  a_1\}$ and $\S_2=\{ 3,4\}$, $\A_2(3)=\{a_1\}$, $\A_2(4)=\{ a_1,a_2 \}$. The only accepting pair of $\M$ is $(\{ 3\},\{4 \}) $. Then MDP has one MAEC, i.e., $\texttt{MAEC}(\M)=\{ (\S_3,\A_3) \}$ with $\S_3=\{4\}$ and $\A_3=\{ a_2\}$. Since $(\S_3,\A_3)$ is contained in $(\S_2,\A_2)$, the only AMEC is $(\S_2,\A_2)$, i.e., $\texttt{MEC}_\varphi(\M)=\{ (\S_2,\A_2)\}$.
\end{myexm}

\begin{figure}
    \centering
    \begin{tikzpicture}
[
square/.style={circle, draw=black!255, fill=white!255, very thick, minimum height=5mm,minimum width=5mm},
]
	\node [square](q1)at(0,2){$2$};
	\node [square](q2)at(2,2){$1$};
	\node [square](q3)at(4,2){${3}$};
	\node [square](q4)at(6,2){${4}$};
        \node (a11)at(-0.65,1.6){$a_{1}$};
	\node (a21)at(1,2.2){$a_{1}$};
        \node (a23)at(3,2.2){$a_{2}$};
        \node (a34)at(5,1.8){$a_{1}$};
        \node (a43)at(5,2.2){$a_{1}$};
        \node (a44)at(6.7,1.6){$a_{2}$};
	\draw[->] (q2) -- (q1);
	\draw[->] (q2) -- (q3);

\draw[->] (4.35,1.8)..controls (4.78,1.5) and (5.21,1.5) .. (5.65,1.8);

\draw[->] (5.65,2.2)..controls (5.21,2.5) and (4.78,2.5) .. (4.35,2.2);

	\draw[->] (q4) .. controls +(right:10mm) and +(down:10mm) .. (q4);

 	\draw[->] (q1) .. controls +(left:10mm) and +(down:10mm) .. (q1);
    \end{tikzpicture} 
    \caption{Example to illustrate different end components.}
    \label{fig:exampleend}
\end{figure}

\section{Problem Formulation} \label{sec:problemstatement}
In general, quantitative efficiency cannot precisely capture complex qualitative requirements. As a result, a system optimized purely for efficiency may engage in undesirable or even forbidden behaviors.
In this work, we aim to synthesize a control policy under both performance and correctness considerations such that 
\begin{itemize}
    \item The given LTL task is satisfied with probability 1; and
    \item the efficiency is maximized under LTL task constraint.
\end{itemize}

Now we formulate the problem solved in this paper. 
\begin{myprob}[\bf Efficiency Maximization for Linear Temporal Logic Tasks]\label{problem:ratioltl}
 Given MDP $\mathcal{M}$ and LTL formula $\varphi$, 
 which is equivalent to given the product MDP, reward function $\mathtt{R}$, cost function $\mathtt{C}$ and a threshold value $\epsilon >0$, assme that $\Pi^\varphi_\M \neq \emptyset$.
  Find a stationary policy $\mu^\star \in \Pi^{\varphi}_\mathcal{M} \cap \Pi_{\M}^S$ such that  
  \begin{equation}
  J^{\mu^\star}(s_0) \geq J(s_0, \Pi^{\varphi}_{\M}) -\epsilon.
  \end{equation}
  \end{myprob}
Without loss of generality, we assume that, initial from each state in the product MDP, 
there exists a policy under which the LTL task can be finished w.p.1.  
Otherwise, undesired states can be eliminated by  Algorithm 45 in~\cite{baier2008principles} in polynomial time. 
  
\begin{remark}
Before  proceeding further, we make several comments on the above problem formulation. 
\begin{itemize}[leftmargin=1.5em]
  \item
   First, we seek to find an $\epsilon$-optimal policy $\mu^\star$ among all policies satisfying LTL tasks w.p.1. 
   The motivation for this setting is that in general, to achieve the value $J(s_0, \Pi^{\varphi}_{\M})$, we need to apply an infinite memory policy, which is too expensive to realize in practice.
   One is referred to \cite{chatterjee2015measuring} for this issue when quantitative measure is the long-run average reward, which is a special case of our ratio objective.
  \item   
  Second, we further restrict our attention to stationary policies in $\Pi_{\M}^S$ a priori. 
  We will show in the following result that such a restriction is without loss of generality in the sense that a stationary solution always exists.\medskip
\end{itemize} 
\end{remark}

\begin{mypro} \label{prop:stationaryisenough} 
Let $\M=(S,s_0,A,P,\mathcal{AP},\ell,Acc)$ be the product MDP. It holds that $J(s_0,\Pi^\varphi_\M) = J(s_0, \Pi^\varphi_\M \cap \Pi^S_\M)$.
\end{mypro}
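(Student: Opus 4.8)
The plan is to show that restricting to stationary policies does not decrease the optimal efficiency value; the reverse inequality $J(s_0,\Pi^\varphi_\M \cap \Pi^S_\M) \le J(s_0,\Pi^\varphi_\M)$ is immediate since $\Pi^\varphi_\M \cap \Pi^S_\M \subseteq \Pi^\varphi_\M$. So the real work is: given an arbitrary (possibly infinite-memory) policy $\mu \in \Pi^\varphi_\M$, and given $\delta > 0$, construct a stationary policy $\nu \in \Pi^\varphi_\M \cap \Pi^S_\M$ with $J^\nu(s_0) \ge J^\mu(s_0) - \delta$. Since $\delta$ is arbitrary this yields the claim.

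First I would use the structural fact, recalled in the excerpt, that satisfying $\varphi$ w.p.\ 1 is equivalent to reaching and staying forever in some AMEC (equivalently, that $\texttt{limit}(\omega)$ is contained in a MAEC w.p.\ 1). So under $\mu$, w.p.\ 1 the sample path eventually enters some AMEC $(\S,\A)\in \texttt{MEC}_\varphi(\M)$ and its limit is a MAEC inside it. The key point is that the efficiency value $J^\mu(s_0)$, being defined via a $\liminf$ of expectations of a ratio of running sums (equation \eqref{eq:ratioobjectdef}), depends only on the long-run behavior, i.e.\ on what happens inside the end components visited in the limit. So I would decompose the analysis AMEC-by-AMEC: within a fixed AMEC, the problem reduces to the ``pure'' efficiency-optimization problem without the LTL constraint, because \emph{any} stationary policy whose recurrent states form a MAEC inside that AMEC automatically satisfies $\varphi$ w.p.\ 1 from every state of that component. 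For that unconstrained sub-problem, I invoke the known result of \cite{von2016synthesizing} (and classical MDP theory, e.g.\ \cite{puterman}) that an optimal \emph{stationary} policy exists for the ratio objective on a communicating MDP, and in fact one can take it to be such that the relevant MAEC states are all recurrent.

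The construction then goes as follows. Let $v^*$ be the best efficiency attainable inside any single AMEC (over stationary policies supported there whose recurrent class is a MAEC); this is a finite quantity and, by the above, $J^\mu(s_0) \le v^*$ for every $\mu \in \Pi^\varphi_\M$ (since under any such $\mu$, conditioned on ending up in a particular AMEC, the conditional long-run efficiency cannot beat the best stationary policy there — this is where one needs the fact that even infinite-memory policies cannot beat stationary ones for the ratio objective on an end component, again from \cite{von2016synthesizing}). Pick an AMEC achieving (or $\delta$-achieving) $v^*$ together with an associated stationary policy $\nu_{\text{in}}$ optimal inside it. By the standing assumption (made right after Problem~\ref{problem:ratioltl}) every state of the product MDP can reach some AMEC w.p.\ 1; in particular one can reach this chosen AMEC — or, to be safe, I would instead argue AMEC-by-AMEC and glue, choosing for each AMEC its own locally optimal stationary policy. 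Concretely, define $\nu$ to agree with $\nu_{\text{in}}^{(i)}$ on each AMEC $(\S_i,\A_i)$, and on the transient part $S \setminus \bigcup_i \S_i$ let $\nu$ follow a stationary ``reachability'' policy that reaches $\bigcup_i \S_i$ w.p.\ 1 (such a stationary policy exists by the standing assumption and standard results on positive/qualitative reachability, e.g.\ \cite{baier2008principles}). Then $\nu$ is stationary, lies in $\Pi^\varphi_\M$ (w.p.\ 1 it enters some AMEC and its limit is the MAEC therein), and its efficiency from $s_0$ is a convex combination of the in-AMEC values, hence $\ge \min_i v_i$; choosing the construction to route toward the best AMEC gives $J^\nu(s_0) \ge v^* - \delta \ge J^\mu(s_0) - \delta$ for the $\mu$ realizing the sup up to $\delta$. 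Taking $\delta \to 0$ finishes the proof.

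The main obstacle I expect is the step asserting that \emph{no} policy in $\Pi^\varphi_\M$ — including history-dependent, randomized, infinite-memory ones — can exceed $v^*$, i.e.\ that the $\liminf$-of-expectation ratio value is ``attained in the limit inside end components'' and bounded by the best stationary value there. This requires a careful argument: one must show that the contribution of the transient prefix washes out (the running sums of both numerator and cost diverge, as $\mathtt{C} > 0$ is bounded below on a finite MDP, so a bounded prefix perturbs the ratio by $O(1/N) \to 0$), and then that, conditioned on $\texttt{limit}(\omega) = (\tilde\S,\tilde\A)$, the long-run ratio is at most the optimal stationary ratio on that end component — the latter being exactly the content one can cite from \cite{von2016synthesizing}. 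I would be careful that the interchange of $\liminf$, expectation, and the conditioning on the (random) limiting end component is handled via the law of total expectation over the finitely many end components together with Fatou's lemma, rather than any unjustified limit interchange. Everything else — existence of stationary optimal policies inside a communicating MDP for the ratio objective, existence of stationary a.s.-reaching policies, and the gluing — is routine given the cited results.
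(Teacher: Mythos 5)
Your overall architecture (easy inclusion inequality; reduce the hard direction to an AMEC-by-AMEC analysis; invoke existence of stationary optimal policies for the unconstrained ratio objective; repair $\varphi$-satisfaction at an $\epsilon$ loss) is in the right spirit, and you correctly identify that the delicate point is bounding arbitrary history-dependent policies. But there are two concrete gaps. First, the claim that the unconstrained-optimal stationary policy on a communicating AMEC ``can be taken such that the relevant MAEC states are all recurrent'' is false in general: the recurrent class of the unconstrained optimum is determined by the support of the optimal occupation measure and typically excludes the accepting states (this is exactly why the paper's main result is only $\epsilon$-optimality and why it develops the perturbation machinery of Propositions~\ref{prop:ratiodeviationequation} and~\ref{prop:boundofrandom}). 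What is true is that the supremum over stationary policies whose recurrent class is a MAEC \emph{equals} the unconstrained optimum, but that equality itself requires a continuity-under-perturbation argument (the paper gets it from the continuity of $W^{\mu^\delta}$ in $\delta$ via \cite{cao1998relations}); you assert it without providing the mechanism. Second, your gluing step does not control \emph{which} AMEC the stationary policy gets absorbed into: ``one can reach this chosen AMEC'' w.p.\ 1 is not implied by the standing assumption (only the union of AMECs is reachable w.p.\ 1), and the fallback bound ``convex combination $\geq \min_i v_i$'' does not yield $J^\nu(s_0)\geq J^\mu(s_0)-\delta$, since $\mu$'s value is a convex combination with \emph{its own} absorption weights. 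To close this you would need that any absorption distribution over AMECs achievable by a general policy is achievable by a stationary one (a multi-objective reachability fact you neither state nor cite), or you must avoid the issue as the paper does.

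For contrast, the paper's proof sidesteps both problems at once: it defines a modified reward $\hat{\mathtt{R}}$ that heavily penalizes state-action pairs outside MAECs (Eq.~\eqref{eq:modifiedreward}), proves via Gimbert's theorem on prefix-independent submixing payoffs (Claim~\ref{cla:ratiooptpolicyexist}) that the resulting \emph{unconstrained} ratio problem admits an optimal stationary deterministic policy $\mu^\star$ whose recurrent classes necessarily lie inside MAECs and whose value upper-bounds $J(s_0,\Pi^\varphi_\M)$ (Claims~\ref{claim:hatrewardupperbound}--\ref{claim:deterpolicyproperty}), and then perturbs $\mu^\star$ \emph{in place} inside those MECs. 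Because the perturbation only modifies actions within MECs that are already recurrent, the absorption probabilities are preserved exactly (Eq.~\eqref{eq:samestayingprobability}), and continuity of the value in the perturbation degree gives a stationary policy in $\Pi^\varphi_\M$ within any $\epsilon$ of $J^{\mu^\star}(s_0)$. Your proposal would need to import both the modified-reward (or an equivalent) upper-bound device and the in-place perturbation to become a complete proof.
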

\begin{proof}
Consider the optimal deterministic stationary policy  $\mu^\star \in \Pi^{SD}_{\M}$, which is formally defined in Eq.~\eqref{eq:optimaldeterdef}. 
According to Claim~\ref{cla:regularoptimalpolicy} in  Appendix~\ref{appendix:proof}, we know that $\mu^\star$ is regular.
Let $\texttt{R}(\M) \subseteq \texttt{MEC}(\M)$ be the set of MECs that contain recurrent state in MC $\M^{\mu^\star}$.
By 1) of Claim~\ref{claim:deterpolicyproperty} in Appendix~\ref{appendix:proof}, in MC $\M^{\mu^\star}$, for each $(\S,\A) \in \texttt{R}(\M)$, the only recurrent class of $(\S,\A)$ is in some MAEC, denoted by $\texttt{A}(\S,\A) \in \texttt{MAEC}(\M)$.
We define $\mu_{(\S,\A)}$ the policy over sub-MDP $(\S,\A)$ under which it has only one recurrent class consisting of all states in $\texttt{A}(\S,\A)$.
Policy $\mu_{(\S,\A)}$ exists since $(\S,\A)$ is communicating.
We construct a policy $\mu'$ such that 
\begin{align}
	\mu'(s,a) \!= \!
		\left\{\!\!
		\begin{array}{cl}
			\mu_{(\S,\A)}(s,a)  & \text{if }    s\in \S, a \in \A(s), (\S,\A) \in \texttt{R}(\M)  \\
			\mu^{\star}(s,a)               & \text{otherwise.}  
		\end{array}
		\right.\!\!\! \nonumber
  \end{align}
Now consider policy $\mu^\delta=(1-\delta) \mu^\star+ \delta \mu'$ s.t. $\mu^\delta(s,a)=(1-\delta) \mu^\star(s,a)+\delta \mu'(s,a), \forall s \in S, a \in A(s)$.
It is easy to know that $\mu^\delta \in \Pi^\varphi_\M$ for any $0<\delta \leq 1$. 
By \eqref{eq:MECreachingratio}, for $0 \leq \delta \leq 1$,
\begin{equation} \label{eq:ratiomiddle}
    J^{\mu^\delta}(s_0,\mathtt{R},\mathtt{C})=\sum_{(\S,\A) \in \texttt{R}(\M)}\textsf{Pr}^{\mu^\delta}_\texttt{R}(\S,\A)\frac{W^{\mu^\delta}(s_{(\S,\A)},\mathtt{R})}{W^{\mu^\delta}(s_{(\S,\A)},\mathtt{C})}
\end{equation}
where $\textsf{Pr}^{\mu^\delta}_\texttt{R}(\S,\A)$ is probability of staying forever in MEC $(\S,\A)$ defined in \eqref{eq:stayingforeverinMEC} and $s_{(\S,\A)} \in \S$ can be any state in $\S$. By \eqref{eq:samestayingprobability}, $\textsf{Pr}^{\mu^\delta}_\texttt{R}(\S,\A)$ is constant for $\delta \in [0,1]$. From~\cite{cao1998relations} we know that 
$W^{\mu^\delta}(s_{(\S,\A)},(\cdot))$ is continuous w.r.t. $\delta \in [0,1]$ for $(\S,\A) \in \texttt{R}(\M)$ and $(\cdot) \in \{ \mathtt{R}, \mathtt{C} \}$. Thus $J^{\mu^\delta}(s_0,\mathtt{R},\mathtt{C})$ is continuous w.r.t. $\delta \in [0,1]$. 
Then for any $\epsilon >0$, we can find some $\delta > 0$ such that 
\begin{equation}\label{eq:arbitraryclose}
    |J^{\mu^{\delta}}(s_0,\mathtt{R},\mathtt{C})-J^{\mu^\star}(s_0,\mathtt{R},\mathtt{C})| \leq \epsilon.
\end{equation}
Since $\mu^{\delta} \in \Pi^\varphi_\M$ and $\epsilon>0$ is arbitrary, we know that
\begin{align}
    &J(s_0,\mathtt{R},\mathtt{C},\Pi_\M^\varphi) 
        \nonumber\\
       \geq  &J^{\mu^\star}(s_0,\mathtt{R},\mathtt{C}) 
       =  J^{\mu^\star}(s_0,\hat{\mathtt{R}},\mathtt{C})
       = J(s_0,\hat{\mathtt{R}},\mathtt{C},\Pi_\M).
        \nonumber
\end{align}
The first equality comes from 2) of Claim~\ref{claim:deterpolicyproperty} and second equality holds from \eqref{eq:optimaldeterdef}. With \eqref{eq:stationaryhalf}, we have proven that 
\begin{equation}
J(s_0,\hat{\mathtt{R}},\mathtt{C},\Pi_\M)=J(s_0,\mathtt{R},\mathtt{C},\Pi_\M^\varphi)=J^{\mu^\star}(s_0,\mathtt{R},\mathtt{C}).
\end{equation}
From \eqref{eq:arbitraryclose}, policy $\mu^\delta \in \Pi^\varphi_\M \cap \Pi^S_\M$ can achieve $\epsilon$-optimality by picking proper $\delta$ for any $\epsilon >0$. This completes the proof. \nonumber
\end{proof}

\section{Case of Communicating MDPs} \label{sec:solutionforcommunicating}
Before handling the general case, 
in this section, we consider a special case, where the MDP is communicating. 
Formally, an MDP $\mathcal{M}$ is said to be \emph{communicating} if
  \begin{equation} \label{eq:communicating definition}
      \forall s,s'\in S, \exists  \mu \in \Pi_{\mathcal{M}}^S,\exists n\geq 0:  (\mathbb{P}^{\mu})^{n}_{s,s'}>0.
  \end{equation}
In other words, for a communicating MDP, one state can reach another state under some policy.

\textbf{General Idea: } 
When the MDP is communicating, we solve problem~\ref{problem:ratioltl} by the following steps: 
\begin{itemize}[leftmargin=1.5em]
\item First, we compute set $\texttt{MAEC}(\M)$ of all MAECs and handle each sub-MDP $(\S,\A) \in \texttt{MAEC}(\M)$ individually.
\item Then, for each  $(\S,\A)$, we  solve Problem~\ref{problem:ratioltl} by the following two step:
\begin{itemize}
    \item 
    We first find two policies, denoted by $\mu_{opt}$ and $\mu_{irr}$, which maximizes efficiency without considering the LTL task and ensures all states in $\S$ can be visited infinitely often w.p.1, respectively. The discussion on constructing these policies is presented in Section~\ref{sec:review}.
  \item 
    We then \emph{perturb} policy $\mu_{opt}$ ``slightly" by $\mu_{irr}$ 
  such that the efficiency value of the resulting policy is $\epsilon$-close to that of $\mu_{opt}$, and
  the LTL task can still be achieved due to the presence of perturbation $\mu_{irr}$. The $\epsilon$-optimality of perturbed policy is guaranteed by analysis in Section~\ref{sec:pert}.
\end{itemize}
  \item Finally, for entire communicating MDP $\M$, we synthesize a policy under which it will stay in MAEC achieving highest efficiency value among sub-MDPs in $\texttt{MAEC}(\M)$ forever w.p.1. The Algorithm~\ref{alg:cmdpsolution} in Section~\ref{sec:coverall} formally states overall idea.
\end{itemize}
Now, we proceed the above idea in more detail. 

\subsection{Maximum Efficiency Policy and Irreducible Policy}\label{sec:review}
In this subsection, we discuss how to construct the maximum efficiency policy and irreducible policy, which are key components for solving Problem~\ref{problem:ratioltl}. We first review the existing solution for efficiency optimization. 
It has been shown in  \cite{von2016synthesizing} that, for communicating MDP $\mathcal{M}$, 
there exists a stationary policy $\mu \in \Pi^S_{\M}$ such that $J^\mu(s_0)=J(s_0,\Pi_\M)$ and the induced MC $\M^\mu$ is an \emph{unichain} (MC with a single recurrent class and some transient states).  
Furthermore, we have 
  \begin{equation} \label{eq:ratioobjective}
        J^{\mu}(s_0)= \frac{\sum_{s \in S} \sum_{a \in A(s)} \pi(s) \mu(s,a) \mathtt{R}(s,a)}{\sum_{s \in S} \sum_{a \in A(s)} \pi(s) \mu(s,a) \mathtt{C}(s,a)}, 
  \end{equation}
such that $\pi \in \mathbb{R}^{|S|}$ is the unique stationary distribution with $\pi \mathbb{P}^{\mu} = \pi$. 
With this structural property for communicating MDP,  \cite{von2016synthesizing} transforms the policy synthesis problem for efficiency optimization to a parameter synthesis problem described by the nonlinear program~\eqref{opt1:obj}-\eqref{opt1:con5} as follows:
\begin{align}
 &\max_{\gamma(s,a)} \quad \frac{\sum_{s \in S} \sum_{a \in A(s)} \gamma(s,a) \mathtt{R}(s,a)}{\sum_{s \in S} \sum_{a \in A(s)}  \gamma(s,a) \mathtt{C}(s,a)}  \label{opt1:obj} \\
\!\!\!\!\!\!\!\!\text{s.t. } \ \
&q(s,t) = \sum_{a \in A(s)} \gamma(s,a)P(t \mid s,a),\forall s, t \in S   \label{opt1:con1}\\
&\lambda(s) = \sum_{a \in A(s)} \gamma(s,a), \forall s \in S  \label{opt1:con2}\\    
&\lambda(t) = \sum_{s \in S} q(s,t),\forall t \in S  \label{opt1:con3} \\
&\sum_{s \in S} \lambda(s)=1     \label{opt1:con4}\\
&\gamma(s,a) \geq 0, \forall s \in S ,\forall a \in A(s)   \label{opt1:con5}
\end{align} 
Since we will only leverage this existing result, the reader is referred to \cite{von2016synthesizing} for more details on the intuition of the above nonlinear program. The only point we would like to emphasize is that this nonlinear program is a linear-fractional programming, which  can be solved efficiently by converting to a linear program by Charnes-Cooper transformation~\cite{zionts1968programming}.  
Now, let   $\gamma^{\star}(s,a)$ be the solution to  Equations~\eqref{opt1:obj}-\eqref{opt1:con5}. The \emph{optimal policy}, denoted by $\mu_{opt}$,  can be decoded as follows. 
Let $Q=\{ s \in S \mid \sum_{a \in A(s)} \gamma^{\star}(s,a)>0 \}$ and we define
\begin{equation}\label{eq:optcmdppolicy}
    \mu_{opt}(s,a)=\frac{\gamma^{\star}(s,a)}{\sum_{a \in A(s)}\gamma^{\star}(s,a)}, \quad s \in Q.
\end{equation}
For the remaining part, policy $\mu_{opt}$ only needs to ensure that states in $S\setminus Q$ will reach $Q$ eventually w.p.1 in MC $\M^{\mu_{opt}}$; see, e.g., procedure in \cite[Page 480]{puterman}. 
Then such a policy $\mu_{opt}$ achieves $ J^{\mu_{opt}}(s_0)=J(s_0,\Pi_\M)$.
Since the definition of efficiency in \eqref{eq:ratioobjectdef} is slightly different from that in \cite{von2016synthesizing}, we also prove the existence of stationary optimal efficiency policy in Claim~\ref{cla:ratiooptpolicyexist} of Appendix~\ref{appendix:proof} for completeness.

Note that, under policy $\mu_{opt}$, only states in $Q$ will be visited infinitely often, which has no guarantee on satisfaction of LTL task. To this end, we consider an arbitrary stationary policy  $\mu_{irr}\in \Pi^S_\M$, 
which is referred to as the \emph{irreducible policy}, 
such that  $\M^{\mu_{irr}}$ is irreducible. For policy $\mu_{irr}$, we have 
\begin{itemize}[leftmargin=1.5em]
  \item
  It is well-defined since we already assume that the MDP $\M$ is communicating.
  For example, one can simply use the uniform policy as $\mu_{irr}$, i.e., 
   each available action is enabled with the same probability at each state;
  \item 
 When applying irreducible policy over MAEC, all states in MAEC can be visited infinitely often w.p.1. Then from definition of MAEC, it can finish LTL task w.p.1.
\end{itemize}
\subsection{Perturbation Analysis for Efficiency}\label{sec:pert}
Here, we analyse the ratio objective efficiency performance under perturbation, which is used to ensure $\epsilon$-optimality under LTL task constraint for communicating MDP. To this end, we adopt the idea of perturbation analysis of MDP, 
which is originally developed to quantify the difference of long-run average rewards  between two policies \cite{cao1998relations}. 
First, we introduce some related definitions.  

\begin{mydef}[\bf Utility Vectors \& Potential Vectors]\upshape
Let $\mu\in \Pi_\M^S$ be a stationary policy and  $\mathtt{V}: S \times A \to \mathbb{R}$ be a generic utility function,  which can be either the reward function $\mathtt{R}$ or the 
cost function $\mathtt{C}$. 
Then 
\begin{itemize}[leftmargin=1.5em]
\item 
the \emph{utility vector} of  policy $\mu$ (w.r.t.\ utility function $\mathtt{V}$), 
denoted by $v^{\mu}_\mathtt{V} \in \mathbb{R}^{|S|}$, is defined by
\begin{equation}
v^{\mu }_\mathtt{V}(s)=\sum_{a \in A(s)} \mu(s,a) \mathtt{V}(s,a). 
\end{equation}
\item 
the \emph{potential vector} of  policy $\mu$ (w.r.t.\ utility function $\mathtt{V}$), 
denoted by $g^{\mu}_\mathtt{V} \in \mathbb{R}^{|S|}$, is defined by
\begin{equation}\label{eq:potentialvector}
g_\mathtt{V}^{\mu} =(I-\mathbb{P}^{\mu}+(\mathbb{P}^{\mu})^\star)^{-1} v^{\mu}_\mathtt{V}. \medskip \medskip
\end{equation}
\end{itemize} 
\end{mydef}

In the above definition, the potential vector is well-defined as  matrix $I-\mathbb{P}^{\mu}+(\mathbb{P}^{\mu})^\star$ is always invertible \cite{puterman}, where $(\mathbb{P}^{\mu})^\star$ is the limit transition matrix of $\mathbb{P}^{\mu}$. 
Intuitively, 
the potential vector $g^{\mu}_\mathtt{V}$ contains the information regarding the long run average utility in MC $\M^\mu$. Specifically, let $\pi_0$ be the initial distribution and $\pi_\mu$ be the limit distribution such that $\pi_\mu = \pi_0 (\mathbb{P}^{\mu})^\star$.
Then we have   
\[
\pi_\mu^{\top} g^{\mu}_\mathtt{V}=\pi_\mu^{\top}  v^{\mu}_\mathtt{V}=W^\mu(s_0,\mathtt{V}),
\]
which computes the long run average utility under $\mu$. 
Next, we define notion of deviation vectors of two different policies. 

\begin{mydef}[\bf Deviation Vectors]\upshape
Let $\mu, \mu' \in \Pi^S_\M$ be two stationary policies and $\mathtt{V}: S \times A \to \mathbb{R}$ be a   utility function. 
Then the \emph{deviation vector} from $\mu$ to $\mu'$ (w.r.t.\ utility function $\mathtt{V}$) is defined by 
\begin{equation} \label{eq:deviationvector}
\mathbf{D}_\mathtt{V}(\mu,\mu') = (v^{\mu' }_\mathtt{V}-v^{\mu }_\mathtt{V})+(\mathbb{P}^{\mu'}-\mathbb{P}^{\mu})g_\mathtt{V}^{\mu}.\medskip
\end{equation}
\end{mydef}

The deviation vector can be used to compute the difference between the long-run average utility of the original policy and the perturbed policy. 
Formally, let $\mu, \mu' \in \Pi^S_\M$ be two stationary policies, $\mathtt{V}: S \times A \to \mathbb{R}$ be a   utility function and $\delta\in (0,1)$ be the perturbation degree.
We define 
\[
\mu_{\delta}=(1-\delta)\mu+\delta \mu'
\]
s.t. $\mu_\delta(s,a)=(1-\delta) \mu(s,a)+\delta \mu'(s,a), \forall s \in S, a \in A(s)$. 
It was shown in \cite{cao1998relations} that, when $\M^{\mu}$ is a unichain, the differences between 
the long run average utilities of the perturbed policy and the original policy can be calculated as follow:  
\begin{equation} \label{eq:rewarddeviation}
    W^{\mu_\delta}(s_0,\mathtt{V})-W^\mu(s_0,\mathtt{V})
=   \pi_{\mu_{\delta}}^\top v_\mathtt{V}^{\mu_{\delta}} - \pi_{\mu}^\top v_\mathtt{V}^{\mu}
= \delta \pi_{\mu_{\delta}}^\top \textbf{D}_{\mathtt{V}}(\mu,\mu').
\end{equation}

However, the above classical result can only be applied to the case of long-run average reward.  
The following proposition provides the key result of this subsection, 
which shows how to generalize Equation~\eqref{eq:rewarddeviation} 
from long-run average reward to the case of long-run efficiency under the ratio objective. 

\begin{mypro} \label{prop:ratiodeviationequation}
Let $\mu, \mu' \in \Pi^S_\M$ be two stationary policies, 
 $\mathtt{R}: S\times A \to \mathbb{R}$ be the reward function, 
 $\mathtt{C}: S \times A \to \mathbb{R}_+$ be the cost function, 
 and $\delta\in (0,1)$ be the perturbation degree.
Let   $\mu_{\delta}=(1-\delta)\mu+\delta \mu'$ be the perturbed policy. 
If $\M^\mu$ is unichain, then we have
\begin{align}\label{eq:new}
    &J^{\mu_{\delta}}(s_0, \mathtt{R}, \mathtt{C})-J^{\mu}(s_0, \mathtt{R}, \mathtt{C}) \\
= &\frac{\delta}{  \pi_{\mu_{\delta}}^\top v_\mathtt{C}^{\mu_{\delta}}     }    \pi_{\mu_{\delta}}^{\top}
\left( \textbf{D}_{\mathtt{R}}(\mu,\mu') -J^{\mu}(s_0, \mathtt{R}, \mathtt{C})\textbf{D}_{\mathtt{C}}(\mu,\mu')\right).\nonumber \medskip \medskip
 \end{align}
 \end{mypro}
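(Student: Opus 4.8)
The plan is to reduce both $J^{\mu_\delta}(s_0,\mathtt{R},\mathtt{C})$ and $J^{\mu}(s_0,\mathtt{R},\mathtt{C})$ to ratios of stationary averages and then feed the classical deviation identity~\eqref{eq:rewarddeviation} into a short algebraic manipulation. The first thing I would establish is that $\M^{\mu_\delta}$ inherits the unichain property of $\M^{\mu}$ whenever $\delta\in(0,1)$: since $\mathbb{P}^{\mu_\delta}_{s,s'}\ge(1-\delta)\mathbb{P}^{\mu}_{s,s'}$, every edge of $\M^{\mu}$ survives in $\M^{\mu_\delta}$, so from any state one can still reach the unique recurrent class $\mathcal{R}$ of $\M^{\mu}$, and $\mathcal{R}$ remains a communicating set in $\M^{\mu_\delta}$. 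Hence any recurrent (closed) class of $\M^{\mu_\delta}$ that reaches $\mathcal{R}$ must contain all of $\mathcal{R}$, which forces $\M^{\mu_\delta}$ to possess exactly one recurrent class.

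Second, I would record that for any unichain policy $\nu\in\{\mu,\mu_\delta\}$ one has $J^{\nu}(s_0,\mathtt{R},\mathtt{C})=\frac{\pi_{\nu}^{\top}v^{\nu}_\mathtt{R}}{\pi_{\nu}^{\top}v^{\nu}_\mathtt{C}}$, independently of $s_0$ — this is essentially the content behind~\eqref{eq:ratioobjective} and Claim~\ref{cla:ratiooptpolicyexist}. The argument: along a sample path the Ces\`aro sums $\frac1N\sum_{i\le N}\mathtt{R}(s_i,a_i)$ and $\frac1N\sum_{i\le N}\mathtt{C}(s_i,a_i)$ converge a.s.\ (strong law of large numbers for finite-state Markov chains) to $\pi_{\nu}^{\top}v^{\nu}_\mathtt{R}$ and $\pi_{\nu}^{\top}v^{\nu}_\mathtt{C}$, respectively; since $\mathtt{C}>0$ the denominator stays bounded below by $\min_{s,a}\mathtt{C}(s,a)>0$, so the ratio in~\eqref{eq:ratioobjectdef} converges a.s.\ to $\frac{\pi_{\nu}^{\top}v^{\nu}_\mathtt{R}}{\pi_{\nu}^{\top}v^{\nu}_\mathtt{C}}$ and is uniformly bounded, whence bounded convergence collapses the $\liminf$ of expectations to that constant. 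Unichain-ness makes $\pi_{\nu}=(\mathbb{P}^{\nu})^{\star\top}\pi_0$ coincide with the unique stationary distribution regardless of $\pi_0$, giving both the initial-state independence and the strict positivity of $\pi_{\nu}^{\top}v^{\nu}_\mathtt{C}$ that legitimize the division.

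Third, using $\pi_{\nu}^{\top}v^{\nu}_\mathtt{V}=W^{\nu}(s_0,\mathtt{V})$ for $\mathtt{V}\in\{\mathtt{R},\mathtt{C}\}$, I would write
\[
J^{\mu_\delta}(s_0,\mathtt{R},\mathtt{C})-J^{\mu}(s_0,\mathtt{R},\mathtt{C})=\frac{\pi_{\mu_\delta}^{\top}v^{\mu_\delta}_\mathtt{R}-J^{\mu}(s_0,\mathtt{R},\mathtt{C})\,\pi_{\mu_\delta}^{\top}v^{\mu_\delta}_\mathtt{C}}{\pi_{\mu_\delta}^{\top}v^{\mu_\delta}_\mathtt{C}},
\]
substitute $\pi_{\mu_\delta}^{\top}v^{\mu_\delta}_\mathtt{R}=\pi_{\mu}^{\top}v^{\mu}_\mathtt{R}+\delta\,\pi_{\mu_\delta}^{\top}\textbf{D}_\mathtt{R}(\mu,\mu')$ and $\pi_{\mu_\delta}^{\top}v^{\mu_\delta}_\mathtt{C}=\pi_{\mu}^{\top}v^{\mu}_\mathtt{C}+\delta\,\pi_{\mu_\delta}^{\top}\textbf{D}_\mathtt{C}(\mu,\mu')$ from~\eqref{eq:rewarddeviation}, and observe that the $\delta$-free part of the numerator, namely $\pi_{\mu}^{\top}v^{\mu}_\mathtt{R}-J^{\mu}(s_0,\mathtt{R},\mathtt{C})\,\pi_{\mu}^{\top}v^{\mu}_\mathtt{C}$, vanishes by the identity $J^{\mu}(s_0,\mathtt{R},\mathtt{C})=\frac{\pi_{\mu}^{\top}v^{\mu}_\mathtt{R}}{\pi_{\mu}^{\top}v^{\mu}_\mathtt{C}}$ from Step 2. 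What remains is exactly the right-hand side of~\eqref{eq:new}.

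The routine part is the third step; the part that needs care is the second — specifically the interchange of $\liminf_{N}$ and $E\{\cdot\}$ in~\eqref{eq:ratioobjectdef} for a \emph{ratio} rather than an average, together with the a priori guarantee (Step 1) that $\M^{\mu_\delta}$ stays unichain so that this reduction is available at all. If the paper already has $J^{\nu}=\pi_{\nu}^{\top}v^{\nu}_\mathtt{R}/\pi_{\nu}^{\top}v^{\nu}_\mathtt{C}$ for unichains on record (as suggested by~\eqref{eq:ratioobjective} and Claim~\ref{cla:ratiooptpolicyexist}), then Step 2 reduces to invoking it alongside the unichain-preservation observation, and the whole proof becomes a few lines of algebra.
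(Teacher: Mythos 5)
Your proposal is correct and follows essentially the same route as the paper's proof: establish that $\M^{\mu_\delta}$ remains unichain by the edge-preservation/recurrent-class argument, express both efficiency values as $\pi_{\nu}^{\top}v^{\nu}_{\mathtt{R}}/\pi_{\nu}^{\top}v^{\nu}_{\mathtt{C}}$ via Equation~\eqref{eq:ratioobjective}, and then cancel the $\delta$-free part of the numerator using Equation~\eqref{eq:rewarddeviation}. The only difference is that you re-derive the stationary-ratio representation from the ergodic theorem and dominated convergence, whereas the paper simply invokes the known formula for unichains; the algebra in your third step is exactly the paper's chain of equalities.
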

\begin{proof}
First, we prove that the perturbed policy $\mu_\delta$ induces unichain MC. Note that if a state $s$ can reach $s'$ in either $\M^{\mu}$ or $\M^{\mu'}$, then $s$ can reach $s'$ in MC $\M^{\mu_\delta}$. Let $R_{\mu}\subseteq S$ be the unique recurrent class in unichain MC $\M^{\mu}$. Then in MC $\M^{\mu}$, all states in $S$ can reach states in $R_\mu$, which also holds for MC $\M^{\mu_\delta}$. Since for any recurrent class of an MC, all states in the recurrent class can reach each other and can not reach states not in this recurrent class, we can prove by contradiction that for any recurrent class $R_{\mu_\delta} \subseteq S$ in MC $\M^{\mu_\delta}$, $R_{\mu} \subseteq R_{\mu_\delta}$. Then it is easy to know that MC $\M^{\mu_\delta}$ only contains one recurrent class, i.e. $\mu_\delta$ induces unichain MC.

Then we have the following equalities 
\begin{equation}
    \begin{aligned}
        & J^{\mu_{\delta}}(s_0)-J^{\mu}(s_0) \\
        =& \frac{\pi_{\mu_{\delta}}^{\top} v_\mathtt{R}^{\mu_{\delta}}}{\pi_{\mu_{\delta}}^{\top} v_\mathtt{C}^{\mu_{\delta}}} - \frac{J^{\mu}(s_0)\pi_{\mu_{\delta}}^{\top} v_\mathtt{C}^{\mu_{\delta}}}{\pi_{\mu_{\delta}}^{\top} v_\mathtt{C}^{\mu_{\delta}}} \\
        = &   \frac{\pi_{\mu_{\delta}}^{\top} v_\mathtt{R}^{\mu_{\delta}}-\pi_{\mu}^{\top} v_\mathtt{R}^{\mu} -J^{\mu}(s_0)(\pi_{\mu_{\delta}}^{\top} v_\mathtt{C}^{\mu_{\delta}}- \pi_{\mu}^{\top} v_\mathtt{C}^{\mu} )}{\pi_{\mu_{\delta}}^{\top} v_\mathtt{C}^{\mu_{\delta}}} \\
        = & \frac{\delta}{\pi_{\mu_{\delta}}^{\top} v_\mathtt{C}^{\mu_{\delta}}}\pi_{\mu_{\delta}}^{\top}(\textbf{D}_{\mathtt{R}}(\mu,\mu') -J^{\mu}(s_0)\textbf{D}_{\mathtt{C}}(\mu,\mu')). 
        \nonumber
    \end{aligned}
\end{equation}
Specifically,  the first and the second equalities hold because $\mu_{\zeta}$ and $\mu$ induce unichain MCs and the efficiency values can be computed by Equation~\eqref{eq:ratioobjective}. 
The last equality comes from Equation~\eqref{eq:rewarddeviation}. This completes the proof.
 \end{proof}
\begin{remark}
Our new result in Equation~\eqref{eq:new} for ratio objective subsumes 
the classical result in Equation~\eqref{eq:rewarddeviation} for the case of long-run average reward. 
Specifically, 
when $\mathtt{C}(s,a)=1, \forall s\in S,a\in A(s)$, 
$J^{\mu}(s_0, \mathtt{R}, \mathtt{C})$ reduces to $W^{\mu}(s_0, \mathtt{R})$. 
For this case, we know that 
$ \pi_{\mu_{\delta}}^\top v_\mathtt{C}^{\mu_{\delta}}=1$ as $v^{\mu }_\mathtt{C}(s)=1,\forall s\in S$. 
Furthermore, we have $\textbf{D}_{\mathtt{C}}(\mu,\mu')=0$ as both policies achieve the same cost. 
Therefore, Equation~\eqref{eq:new} becomes to Equation~\eqref{eq:rewarddeviation} 
and our result provides a more general form of perturbation analysis in terms of deviation vectors.  
\end{remark}

 \subsection{Efficiency Optimization with LTL Tasks over MAEC} \label{sec:ltlMAEC}
 In this subsection, we assume that the communicating MDP $\M$ is an MAEC, i.e., there exists an accepting pair $(B,G) \in Acc$ such that $S \cap B = \emptyset$ and $S \cap G \neq \emptyset$.
Let $\mu_{opt}$ and $\mu_{irr}$ be the maximum efficiency policy and irreducible policy of $\M$ in Section~\ref{sec:review}, respectively.
We perturb the policy $\mu_{opt}$ by the policy $\mu_{irr}$ to obtain a new policy
\begin{equation} \label{eq:mixingpolicy}
    \mu_{pert}:=(1-\delta) \mu_{opt} + \delta \mu_{irr}, \quad 0<\delta < 1, 
\end{equation}
where $\delta$ is the perturbation degree.  
Clearly, this perturbed  policy $\mu_{pert}$ has the following two properties: 
\begin{itemize}[leftmargin=1.5em]
  \item
  First, we have $J^{\mu_{pert}}(s_0)\leq J^{\mu_{opt}}(s_0)$ 
  as $\mu_{opt}$ is already the optimal one to achieve the ratio objective. 
  Furthermore, $ J^{\mu_{pert}}(s_0)\to J^{\mu_{opt}}(s_0)$ as $\delta \to 0$; 
  \item 
  Second, all states in MDP will be visited infinitely often w.p.1. This is because, under policy $\mu_{pert}$, the system always has non-zero probability to execute irreducible policy $\mu_{irr}$. Furthermore, since $\M$ is an MAEC, it can finish LTL task w.p.1 under $\mu_{pert}$.
\end{itemize}

Now let us discuss how to use Proposition~\ref{prop:ratiodeviationequation} to determine the perturbation  degree $\delta$ such that $\epsilon$-optimality holds. 
Note that, in Equation~\eqref{eq:new}, 
term 
$ \textbf{D}_{\mathtt{R}}(\mu,\mu') -J^{\mu}(s_0, \mathtt{R}, \mathtt{C})\textbf{D}_{\mathtt{C}}(\mu,\mu')$ can be computed explicitly based on $\mu$ and $\mu'$. 
However, 
term $\frac{\pi_{\mu_{\delta}}^{\top}}{  \pi_{\mu_{\delta}}^\top v_\mathtt{C}^{\mu_{\delta}}     }    $ cannot be directly computed.
Our approach here is to estimate its bound as follows: 
\begin{itemize}[leftmargin=1.5em]
\item  
Let $c_{min} = \min_{s \in S, a\in A(s)} \mathtt{C}(s,a)$ be minimum cost for all state-action pairs.
Then we have $\pi_{\mu_{\delta}}^\top v_\mathtt{C}^{\mu_{\delta}} \geq c_{min} $.
\item 
Let the infinity norm of the computable part be
\begin{equation} \label{eq:devbound}
\textbf{D}_{\infty}^{\mu,\mu'} =\Vert  \textbf{D}_{\mathtt{R}}(\mu,\mu') -J^{\mu}(s_0)\textbf{D}_{\mathtt{C}}(\mu,\mu') 
\Vert_\infty.
\end{equation}
We have
$ | \pi_{\mu_{\zeta}}^{\top}(\textbf{D}_{\mathtt{R}}(\mu,\mu') -J^{\mu}(s_0)\textbf{D}_{\mathtt{C}}(\mu,\mu')) |\leq \textbf{D}_\infty^{\mu,\mu'}$.
\end{itemize}
These inequalities lead to the following result.

\begin{mypro} \label{prop:boundofrandom}
Let $\M=(S,s_0,A,P, \mathcal{AP},\ell,Acc)$ be a communicating MDP, 
$\mu_{opt} \in \Pi^S_\M$ be the optimal policy for ratio objective, 
$\mu_{irr}\in \Pi^S_\M$ be an irreducible policy, 
and $\mu_{pert}$ be defined in~\eqref{eq:mixingpolicy}. 
If
\begin{equation} \label{eq:deltabound}
       0< \delta \leq \epsilon  \frac{  c_{min}}{\textbf{D}_{\infty}^{\mu_{opt},\mu_{irr}}},
\end{equation}
then we have $J^{\mu_{pert}}(s_0) \geq J^{\mu_{opt}}(s_0) - \epsilon$. Furthermore, if $\M \in \texttt{MAEC}(\M)$, i.e., $\M$ itself is an MAEC, then $\mu_{pert}$ is a solution of Problem~\ref{problem:ratioltl} for $\M$.
\end{mypro}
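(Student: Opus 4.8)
The plan is to invoke Proposition~\ref{prop:ratiodeviationequation} with $\mu=\mu_{opt}$ and $\mu'=\mu_{irr}$, and then estimate the right-hand side of \eqref{eq:new}. First I would observe that the hypothesis of Proposition~\ref{prop:ratiodeviationequation} is satisfied: since $\M$ is communicating, the optimal policy $\mu_{opt}$ built in Section~\ref{sec:review} induces a unichain MC $\M^{\mu_{opt}}$ (this is exactly the structural property from \cite{von2016synthesizing} underlying \eqref{eq:ratioobjective}). Writing $\mu_{pert}=\mu_\delta$, Proposition~\ref{prop:ratiodeviationequation} then gives
\[
J^{\mu_{pert}}(s_0)-J^{\mu_{opt}}(s_0)=\frac{\delta}{\pi_{\mu_{pert}}^\top v_\mathtt{C}^{\mu_{pert}}}\,\pi_{\mu_{pert}}^\top\!\left(\textbf{D}_{\mathtt{R}}(\mu_{opt},\mu_{irr})-J^{\mu_{opt}}(s_0)\textbf{D}_{\mathtt{C}}(\mu_{opt},\mu_{irr})\right).
\]

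Next I would bound the two factors separately, following the two bullet estimates stated just before the proposition. Because $v_\mathtt{C}^{\mu_{pert}}(s)=\sum_{a}\mu_{pert}(s,a)\mathtt{C}(s,a)\ge c_{min}$ for every $s$ and $\pi_{\mu_{pert}}$ is a probability distribution, the denominator obeys $\pi_{\mu_{pert}}^\top v_\mathtt{C}^{\mu_{pert}}\ge c_{min}>0$. Using again that $\pi_{\mu_{pert}}$ is a probability vector, $|\pi_{\mu_{pert}}^\top w|\le\|w\|_\infty$ for any $w\in\mathbb{R}^{|S|}$; applying this with $w=\textbf{D}_{\mathtt{R}}(\mu_{opt},\mu_{irr})-J^{\mu_{opt}}(s_0)\textbf{D}_{\mathtt{C}}(\mu_{opt},\mu_{irr})$ and recalling the definition \eqref{eq:devbound} of $\textbf{D}_{\infty}^{\mu_{opt},\mu_{irr}}$, the numerator is at most $\textbf{D}_{\infty}^{\mu_{opt},\mu_{irr}}$ in absolute value. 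Combining the two bounds yields $|J^{\mu_{pert}}(s_0)-J^{\mu_{opt}}(s_0)|\le \delta\,\textbf{D}_{\infty}^{\mu_{opt},\mu_{irr}}/c_{min}$, so the choice of $\delta$ in \eqref{eq:deltabound} forces the right-hand side to be at most $\epsilon$, giving $J^{\mu_{pert}}(s_0)\ge J^{\mu_{opt}}(s_0)-\epsilon$. (In the degenerate case $\textbf{D}_{\infty}^{\mu_{opt},\mu_{irr}}=0$ the displayed identity already gives $J^{\mu_{pert}}(s_0)=J^{\mu_{opt}}(s_0)$, and the bound on $\delta$ is vacuous.)

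For the ``furthermore'' part, suppose $\M$ is itself an MAEC. The discussion around \eqref{eq:mixingpolicy} shows that under $\mu_{pert}$ every state of $\M$ is visited infinitely often w.p.1, since the irreducible component $\mu_{irr}$ is executed with positive probability $\delta$ at each step; hence the accepting condition of the MAEC holds w.p.1, i.e., $\mu_{pert}\in\Pi^\varphi_\M$, and $\mu_{pert}$ is stationary as a convex combination of stationary policies. Finally, $\Pi^\varphi_\M\subseteq\Pi_\M$ implies $J(s_0,\Pi^\varphi_\M)\le J(s_0,\Pi_\M)=J^{\mu_{opt}}(s_0)$, so $J^{\mu_{pert}}(s_0)\ge J^{\mu_{opt}}(s_0)-\epsilon\ge J(s_0,\Pi^\varphi_\M)-\epsilon$, meaning $\mu_{pert}$ solves Problem~\ref{problem:ratioltl} for $\M$.

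I do not anticipate a real obstacle: the proof is essentially a routine estimate. The only points needing care are verifying the unichain hypothesis of Proposition~\ref{prop:ratiodeviationequation} so that the exact deviation identity \eqref{eq:new} is available, and noting the trivial handling of the case $\textbf{D}_{\infty}^{\mu_{opt},\mu_{irr}}=0$; the rest follows from the elementary fact that $\pi_{\mu_{pert}}$ is a probability distribution together with $c_{min}>0$.
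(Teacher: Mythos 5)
Your proposal is correct and follows essentially the same route as the paper's proof: apply Proposition~\ref{prop:ratiodeviationequation} with $\mu=\mu_{opt}$, $\mu'=\mu_{irr}$, bound the denominator below by $c_{min}$ and the numerator above by $\pi_{\mu_{pert}}^{\top}\mathbf{1}\,\textbf{D}_{\infty}^{\mu_{opt},\mu_{irr}}=\textbf{D}_{\infty}^{\mu_{opt},\mu_{irr}}$, and then for the MAEC case use that all states are visited infinitely often under $\mu_{pert}$ together with $J(s_0,\Pi^\varphi_\M)\le J^{\mu_{opt}}(s_0)$. Your explicit verification of the unichain hypothesis and the remark on the degenerate case $\textbf{D}_{\infty}^{\mu_{opt},\mu_{irr}}=0$ are minor refinements the paper leaves implicit.
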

\begin{proof}
To show \eqref{eq:deltabound}, we have
\begin{equation}\label{eq:choose}
    \begin{aligned}
    & \left | J^{\mu_{pert}}(s_0)-J^{\mu_{opt}}(s_0) \right | \\
    = & \frac{\delta \left | \pi_{\mu_{pert}}^{\top}(\textbf{D}_{\mathtt{R}}(\mu_{opt},\mu_{irr}) -J^{\mu}(s_0)\textbf{D}_{\mathtt{C}}(\mu_{opt},\mu_{irr})) \right |}{\pi_{\mu_{pert}}^{\top} v_\mathtt{C}^{\mu_{pert}}}  \\ 
        \leq &  \frac{\delta}{c_{min}} \left | \pi_{\mu_{pert}}^{\top}(\textbf{D}_{\mathtt{R}}(\mu_{opt},\mu_{irr}) -J^{\mu}(s_0)\textbf{D}_{\mathtt{C}}(\mu_{opt},\mu_{irr})) \right |  \\
         \leq & \frac{\delta}{c_{min}}  \pi_{\mu_{pert}}^{\top} \mathbf{1} \Vert\textbf{D}_{\mathtt{R}}(\mu_{opt},\mu_{irr}) -J^{\mu}(s_0)\textbf{D}_{\mathtt{C}}(\mu_{opt},\mu_{irr}) \Vert_{\infty}   \\
         = &\frac{\delta}{c_{min}} \textbf{D}_{\infty}^{\mu_{opt},\mu_{irr}}  \leq  \epsilon
         \nonumber
    \end{aligned}
\end{equation}
with $\mathbf{1} \in \mathbb{R}^{|S|}$ the vector where all elements are one. 
The first equality comes from Proposition~\ref{prop:ratiodeviationequation}. 
The first inequality holds since $\pi_{\mu_{pert}}^{\top}v_\mathtt{C}^{\mu_{pert}} \geq c_{min} \pi_{\mu_{pert}}^{\top} \mathbf{1} =c_{min}>0$.

Under policy $\mu_{pert}$, all states in $S$ will be visit infinitely often w.p.1. If $\M\in \texttt{MAEC}(\M)$, from definition of MAEC, we know that $\mu_{pert} \in \Pi^\varphi_\M$. From \eqref{eq:deltabound}, $\mu_{pert}$ is also $\epsilon$-optimal policy. Thus $\mu_{pert}$ is a solution of Problem~\ref{problem:ratioltl} for $\M$.
\end{proof}

\begin{remark}
    In general, we should perturb the optimal efficiency policy $\mu_{opt}$ by $\mu_{irr}$ to guarantee the satisfaction of LTL task. However, in some situation, the efficiency maximization may not conflict with LTL task, i.e., there exists $\mu_{opt} \in \Pi^S_\M \cap \Pi^\varphi_\M$ such that $J^{\mu_{opt}}(s_0)=J(s_0,\Pi^\varphi_\M)$. Then we can adopt $\mu_{opt}$ directly without perturbation. One can use procedure in \cite{chatterjee2015measuring} to check whether such stationary policy exists. If not, then we should perturb $\mu_{opt}$ by \eqref{eq:deltabound}.
\end{remark}

\begin{remark} \label{remark:irreduciblepolicy}
In this work, we select an irreducible policy $\mu_{irr}$ to perturb policy $\mu_{opt}$. However, here comes two issues: First, from analysis in this subsection, in general, we only need to select a stationary policy that can finish LTL task w.p.1 and ensure perturbed policy to induce unichain MC rather than an irreducible policy.
 Second, in our approach, we do not specify how to choose the irreducible policy $\mu_{irr}\in \Pi^S_\M$ and directly adopt the uniform policy. How to select a ``good'' policy to perturb $\mu_{opt}$ is beyond the scope of this paper and we take it as a future work. Thus we restrict on irreducible policy in this work for the sake of simplicity in expression.
However, if the efficiency of selected irreducible policy $\mu_{irr}$ is very small, 
then $\textbf{D}_{\infty}^{\mu_{opt},\mu_{irr}}$ will be very large. 
According to Equation~\eqref{eq:deltabound}, it means that we need to select a small perturbation degree $\delta$ 
to ensure $\epsilon$-optimality. 
Then, this also means that we will visit accepting states less frequently although they are still guaranteed to be visited infinitely often w.p.1, which may be undesirable when we want the interval between two arrivals of accepting states not too long. A direct heuristic approach is to obtain $\mu_{irr}$ by modifying  $\mu_{opt}$ so that their difference in efficiency is ``minimized". 
 \end{remark}

\begin{algorithm}[htp]
\caption{Solution for Communicating MDP}\label{alg:cmdpsolution} 
\KwIn{Threshold value $\epsilon > 0$ and communicating MDP $\M= (S,s_0,A,P,\mathcal{AP},\ell,Acc)$}
\KwOut{Optimal policy $\mu^\star \in \Pi^{S}_{\M}$ and its associated maximum efficiency $v^\star$ of MC $\M^{\mu^\star}$} 

Compute $\texttt{MAEC}(\M)=\{ (\S_1,\A_1),\dots,(\S_n,\A_n) \}$\\

\For{\text{each sub-MDP}  $(\S_i,\A_i),i=1,\dots,n$}
{
Compute the optimal objective value $v_i$ of program \eqref{opt1:obj}-\eqref{opt1:con5} and maximum efficiency policy $\mu^i_{opt}$ by Equation~\eqref{eq:optcmdppolicy} for MDP $(\S_i,\A_i)$
}
$i^\star\gets \arg\max_{i}\{ v_1,v_2,\dots,v_n\}$, $v^\star \gets v_{i^\star}$ \\

Compute irreducible policy $\mu_{irr}^i$ for $(\S_{i^\star},\A_{i^\star})$ \\

Pick $\delta>0$ satisfying \eqref{eq:deltabound} w.r.t. $\epsilon, \mu^{i^\star}_{opt}$ and $\mu_{irr}^{i^\star}$ \\

Get perturbed policy $\mu^{i^\star}_{pert}=(1-\delta)\mu^{i^\star}_{opt} + \delta \mu_{irr}^{i^\star}$ \\

For $s \in \S_{i^\star}$, $a \in \A_{i^\star}(s)$,  $\mu^\star(s,a)\gets \mu^{i^\star}_{pert}(s,a)$  \\
 $T\gets S\setminus \S_{i^\star}$, and  $G\gets \S_{i^\star} $ \\
\While{$T \neq \emptyset$}
{
Pick $s \in T,a \in A(s)$ \text{ s.t. } $\sum_{t \in G} P_{s,a,t}>0$ \\
 $\mu^\star(s,a)\gets 1$ \\
$T \gets T\setminus \{ s\}$, and $G\gets G \cup \{s \}$
}
\end{algorithm}

\subsection{Synthesis Algorithm for Communicating MDP} \label{sec:coverall}
Finally, Algorithm~\ref{alg:cmdpsolution} is proposed to solve Problem~\ref{problem:ratioltl} for any communicating MDP.
Specifically, we first compute all MAECs in MDP $\M$ and find the MAEC $(\S,\A)$ achieving highest efficiency value among all MAECs in lines 1-5. Then we compute an $\epsilon$-optimal perturbed policy over $(\S,\A)$ by result of Proposition~\ref{prop:boundofrandom} in lines 6-9 and ensure that all states in MDP will reach $\S$ eventually w.p.1 in lines 10-15.
 \begin{mythm} \label{thm:communicatingMDPprocedure}
 Let $\M= (S,s_0,A,P,\mathcal{AP},\ell,Acc)$ be a communicating MDP. Let $\mu^\star$ and $v^\star$ be the output policy and value of Algorithm~\ref{alg:cmdpsolution} when $\M$ and $\epsilon>0$ are input, respectively. Then $\mu^\star$ is a solution to Problem~\ref{problem:ratioltl} of $\M$ and $\forall s \in S, v^\star = J(s,\Pi_\M^\varphi)$.
 \end{mythm}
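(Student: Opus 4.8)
The plan is to verify that the output of Algorithm~\ref{alg:cmdpsolution} satisfies both requirements of Problem~\ref{problem:ratioltl}: that the induced policy lies in $\Pi^\varphi_\M \cap \Pi^S_\M$, and that it is $\epsilon$-optimal with value $v^\star = J(s,\Pi^\varphi_\M)$ uniformly over all states $s$. I would split the argument into three parts, following the three blocks of the algorithm.

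\emph{Step 1: The value $v^\star$ is the correct optimum.} First I would argue that $J(s_0,\Pi^\varphi_\M) = \max_i v_i$, where $v_i$ is the optimal value of the linear-fractional program~\eqref{opt1:obj}--\eqref{opt1:con5} restricted to the sub-MDP $(\S_i,\A_i)$. The key facts are: (i) by Proposition~\ref{prop:stationaryisenough}, it suffices to optimize over stationary policies; (ii) any policy achieving the LTL task w.p.1 must eventually stay forever inside some MAEC, so by the decomposition~\eqref{eq:ratiomiddle}--\eqref{eq:MECreachingratio} its efficiency is a convex combination of the per-MAEC efficiencies $W^\mu(s_{(\S,\A)},\mathtt{R})/W^\mu(s_{(\S,\A)},\mathtt{C})$ over the MAECs it can be absorbed into; hence the best one can do is $\max_i v_i$; and (iii) this bound is achievable in the limit since, by the result of~\cite{von2016synthesizing} recalled in Section~\ref{sec:review}, $v_i$ is attained by a stationary unichain policy $\mu^i_{opt}$ on $(\S_i,\A_i)$. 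Since $\M$ is communicating, every state $s\in S$ can reach $\S_{i^\star}$ under some policy, so $J(s,\Pi^\varphi_\M)=J(s_0,\Pi^\varphi_\M)=v^\star$ for all $s$; this handles the uniform-in-$s$ claim.

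\emph{Step 2: $\mu^\star$ achieves the LTL task w.p.1.} On $\S_{i^\star}$ the policy equals the perturbed policy $\mu^{i^\star}_{pert}=(1-\delta)\mu^{i^\star}_{opt}+\delta\mu^{i^\star}_{irr}$ with $\delta>0$, so by the second bullet preceding Proposition~\ref{prop:boundofrandom} all states of $(\S_{i^\star},\A_{i^\star})$ are visited infinitely often w.p.1 once $\S_{i^\star}$ is entered, and since $(\S_{i^\star},\A_{i^\star})$ is an MAEC, the accepting condition is met w.p.1. The while-loop in lines 10--15 assigns to each state in $S\setminus\S_{i^\star}$ a deterministic action that moves it strictly closer (in the BFS sense maintained by the set $G$) to $\S_{i^\star}$; since $\M$ is communicating this loop terminates with $T=\emptyset$, and the resulting sub-digraph on $S\setminus\S_{i^\star}$ is acyclic towards $\S_{i^\star}$, so from every initial state the probability of reaching $\S_{i^\star}$ is one. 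Composing, $\textsf{Pr}^{\mu^\star}_\M(s_0\models\varphi)=1$, i.e. $\mu^\star\in\Pi^\varphi_\M$; it is stationary by construction, so $\mu^\star\in\Pi^\varphi_\M\cap\Pi^S_\M$.

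\emph{Step 3: $\epsilon$-optimality.} Since the transient part (lines 10--15) is reached only finitely often w.p.1 and absorbs into $\S_{i^\star}$, the efficiency $J^{\mu^\star}(s)$ equals the efficiency of $\mu^{i^\star}_{pert}$ evaluated inside the MAEC $(\S_{i^\star},\A_{i^\star})$ for every $s\in S$ (the finite prefix does not affect the $\liminf$ of the Cesàro-type ratio in~\eqref{eq:ratioobjectdef}; this should be spelled out using the unichain structure). By the choice of $\delta$ in line~8 satisfying~\eqref{eq:deltabound}, Proposition~\ref{prop:boundofrandom} gives $J^{\mu^{i^\star}_{pert}}(s_0)\ge J^{\mu^{i^\star}_{opt}}(s_0)-\epsilon = v^\star-\epsilon$. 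Combining with Step 1, $J^{\mu^\star}(s)\ge v^\star-\epsilon = J(s,\Pi^\varphi_\M)-\epsilon$ for all $s$, which is exactly the requirement of Problem~\ref{problem:ratioltl}.

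\emph{Main obstacle.} The routine parts are the termination and acyclicity of the reachability loop and the application of Proposition~\ref{prop:boundofrandom}. The genuinely delicate step is Step 1 together with the reduction at the start of Step 3: one must argue carefully that \emph{no} policy in $\Pi^\varphi_\M$ — including history-dependent ones — can beat $\max_i v_i$, which requires invoking Proposition~\ref{prop:stationaryisenough} and the MEC-decomposition of the efficiency functional, and that evaluating $J^{\mu^\star}$ can be localized to the absorbing MAEC despite the $\liminf$-of-expectation form of the objective. I would be most careful there.
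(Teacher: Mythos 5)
Your proposal is correct and follows essentially the same route as the paper: it establishes LTL satisfaction w.p.1 via the perturbed policy on the best MAEC plus the reachability loop, bounds every stationary policy in $\Pi^\varphi_\M$ by $\max_i v_i$ through the recurrent-class decomposition of the efficiency functional, invokes Proposition~\ref{prop:stationaryisenough} to extend this to all policies, and uses Proposition~\ref{prop:boundofrandom} for $\epsilon$-optimality. The only cosmetic difference is your description of the reachability loop as producing an acyclic digraph (it need not be acyclic, but the standard positive-probability-path argument the paper cites from \cite{puterman} gives the same conclusion).
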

\begin{proof}
In line 5 we select the $(\S_{i^\star},\A_{i^\star}) \in \texttt{MAEC}(\M)$ achieving highest efficiency value among all MAECs. By perturbation in line 8, from result of Proposition~\ref{prop:boundofrandom}, $\S_{i^\star}$ consists a recurrent class in MC $\M^{\mu^\star}$. By action assignment procedure in lines 10-15, from \cite{puterman}, we know that $\M^{\mu^\star}$ has only one recurrent class $\S_{i^\star}$. Thus $\mu^\star \in \Pi^\varphi_\M$.

Consider any $\mu \in \Pi^\varphi_\M \cap \Pi^S_\M$. Let $R_1,R_2,\dots,R_K \subseteq S$ be the recurrent classes in $\M^{\mu}$. 
Since $\mu \in \Pi^{\varphi}_{\M}$, for any $R_k$, we can find $(\S,\A) \in \texttt{MAEC}(\M)$ such that $R_k \subseteq \S$. 
Let $r_k$ be the efficiency value restricted on recurrent class $R_k$ and $r_k^\star$ be maximum efficiency of the MAEC that $R_k$ belongs to. Then
\begin{equation}\label{eq:middleoffirsttheo}
J^\mu(s_0)=\sum_{k=1}^K \beta(k) r_k \leq \sum_{k=1}^K \beta(k) r^\star_k \leq v_{i^\star} \leq J^{\mu^\star}(s_0)+\epsilon,
\end{equation}
where $\beta(k)$ is the probability of staying forever in $R_k$ under policy $\mu$ such that $\sum_{k=1}^K\beta(k)=1$. The first equality comes from \eqref{eq:multichainratio}. The second inequality is right since line 5 of Algorithm~\ref{alg:cmdpsolution}. The third inequality holds from result of Proposition~\ref{prop:boundofrandom} and lines 7-8 of Algorithm~\ref{alg:cmdpsolution}. Then
\begin{equation}\label{eq:thmcmdpmiddle}
    J^{\mu^\star}(s_0)+\epsilon \geq J(s_0,\Pi^\varphi_\M \cap \Pi^{S}_\M) = J(s_0,\Pi^\varphi_\M)
\end{equation}
where last equality comes from Proposition~\ref{prop:stationaryisenough}. Thus $\mu^\star$ is a solution of Problem~\ref{problem:ratioltl} for $\M$. 

Since $\epsilon>0$ in \eqref{eq:middleoffirsttheo} can be arbitrary small in general, from third inequality of \eqref{eq:middleoffirsttheo} and $\mu^\star \in \Pi^{\varphi}_\M$, we know that $v^\star=J(s_0,\Pi^\varphi_\M)$. Since MC $\M^\star$ is a unichain, it holds that $J^{\mu^\star}(s)=J^{\mu^\star}(s')$ for any $s,s' \in S$. Then from \eqref{eq:thmcmdpmiddle} it holds that for any $s\in S$, $v^\star=J(s,\Pi^\varphi_\M)$.
\end{proof}

\begin{remark} \label{remark:applytonoinitial}
In Proposition~\ref{prop:boundofrandom}, we consider an MDP with initial state $s_0$. The initial state $s_0$ only plays a role in \eqref{eq:deltabound} since computation of $\textbf{D}_{\infty}^{\mu_{opt},\mu_{irr}}$ in \eqref{eq:devbound} requires value $J^{\mu_{opt}}(s_0)$. In Section~\ref{sec:review}, we know that MC $\M^{\mu_{opt}}$ is an unichain, which means that $\forall s,s' \in S,  J^{\mu_{opt}}(s)=J^{\mu_{opt}}(s')$. 
Thus the operation of computing $\delta$ in line 7 of Algorithm~\ref{alg:cmdpsolution} is well-defined although initial state of AMEC $(\S_{i^\star},\A_{i^\star})$ is not assigned. 
Moreover, from Theorem~\ref{thm:communicatingMDPprocedure} we know that the output value $v^\star$ of Algorithm~\ref{alg:cmdpsolution} is independent with initial state of the MDP. 
Thus we can still apply Algorithm~\ref{alg:cmdpsolution} to communicating MDP without knowing its initial state.
\end{remark}

\section{Solution to the General Case} \label{sec:general}
\subsection{Overview of Our Approach}
The approach in the previous section assumes that MDP $\M$ is communicating. 
In general, however, the MDP may not be communicating and 
the optimal ratio objective policy may induce a multi-chain MC, i.e., an MC containing more than one recurrent classes. 
Our approach for handling the general case consists of the following  steps: 
\begin{enumerate}[leftmargin=1.5em]
    \item 
    First, we decompose the MDP into several AMECs, i.e., communicating sub-MDPs in $\texttt{MEC}_\varphi(\M)$. 
    Eventually, the system needs to stay within AMECs in order to achieve the LTL task; 
    \item  
    Next, for each AMEC in $\texttt{MEC}_\varphi(\M)$, since it is communicating, 
    we can get solution of Problem~\ref{problem:ratioltl} and optimal efficiency value under LTL task constraint for the AMEC by inputting it to Algorithm~\ref{alg:cmdpsolution} in Section~\ref{sec:coverall}; 
    \item 
    Then, we construct a standard long-run average reward (per-stage) optimization problem, in which the reward for each state in AMEC is the optimal efficiency value under LTL task constraint of its associated AMEC. 
    Note that, since we consider long-run objective, the efficiency value only depends on the AMECs that it stays in forever.
    Therefore, the optimal policy of the average reward problem is a \emph{basic policy} determining which AMECs it should stay in forever. 
    \item 
    Finally, 
    for AMEC $(\S,\A) \in \texttt{MEC}_\varphi(\M)$, if it is recurrent under basic policy, it should be stayed in forever. Then we substitute basic policy by output policy of Algorithm~\ref{alg:cmdpsolution} over $(\S,\A)$ to get a final policy, which ensures $\epsilon$-optimality of the efficiency value and LTL task satisfaction.
\end{enumerate}

Before presenting our formal algorithm, 
we further introduce some necessary concepts.
Now suppose that $\M$ has $n$ AMECs, i.e., $\texttt{MEC}_\varphi(\M)=\{ (\S_1,\A_1),\dots,(\S_n,\A_n) \}$. 
For each AMEC $(\S_i,\A_i)$, 
we denote by $v_{i}^\star$ the output value of Algorithm~\ref{alg:cmdpsolution}, when $(\S_i,\A_i)$ is input.
Let $K\in \mathbb{R}$ be a real number. 
Then based on $K$ and $v^\star_i,i=1,\dots,n$, 
we define a new reward function    $\texttt{R}_{K}:S\times A \to \mathbb{R}$ for the entire $\M$ by: 
  \begin{align}\label{eq:rewardK}
	\texttt{R}_{K}(s,a) \!= \!
		\left\{\!\!
		\begin{array}{cl}
			v_i^\star  & \text{if }    s \in \S_i\wedge a \in \mathcal{A}_i(s)  \\
			K               & \text{otherwise}  
		\end{array}
		\right.\!\!\!.
\end{align}
Intuitively, for each state-action pair in an AMEC, 
the above construction assigns the reward identical to the optimal efficiency value under LTL task constraint one can achieve within this AMEC. 
For the remaining state-action pairs that are not in AMECs, 
we assign them value $K$. 
Clearly, 
to fulfill the LTL task, 
one needs to avoid executing state-action pairs with value $K$. 
Hence, the selected $K$ should be sufficiently small and we discuss it later in Section~\ref{sec:analysis}. 

Later on, we need to solve the classical long-run average reward maximization problem of $\M$
w.r.t. reward function $\texttt{R}_{K}$. 
We denote by $\mu^\star_K \in \Pi_\M^S$ the optimal long-run average reward policy, i.e., 
\begin{equation}\label{eq:optimalrewardKdef}
    W^{\mu^\star_K}(s,\texttt{R}_{K})= W(s,\texttt{R}_{K},\Pi_{\M}), \quad \forall s\in S.
\end{equation}
Such optimal policy $\mu^\star_K$ can be obtained by the standard linear programming approach in~\cite{puterman}, which can also be found in Appendix~\ref{app:program}.
 
\begin{algorithm}[tp]
\caption{Policy Synthesis for the General Case}\label{alg:gmdpsolution} 
\KwIn{MDP $\M= (S,s_0,A,P,\mathcal{AP},\ell,Acc)$ and threshold value $\epsilon>0$}
\KwOut{Policy $\mu^{\star} \in \Pi^{S}_{\M}$ which solve Problem~\ref{problem:ratioltl}} 

Compute $\texttt{MEC}_\varphi(\M)=\{ (\S_1,\A_1),\dots, (\S_n,\A_n) \}$\; 
Compute $\mu_i^\star$ and $v^\star_i$ for each $(\S_i,\A_i) \in \texttt{MEC}_{\varphi}(\M)$\; 
Define reward $\mathtt{R}_K$ according to Eq.~\eqref{eq:rewardK} and \eqref{eq:Kselect}\;  
Compute policy $\mu_K^\star$ by solving the classical long-run average reward maximization problem w.r.t.\ $\mathtt{R}_K$\; 
$\mu^\star\gets \mu_K^\star$\; 
\For{$(\S_i,\A_i) \in \texttt{AMEC}(\M)$}
{
\If{$\S_i$ contains a recurrent state in   MC $\M^{\mu_K^{\star}}$}
{
$\mu^\star(s,a)\gets0, \quad \forall s \in \S_i, a \in A(s)$ \\
$\mu^\star(s,a)\gets\mu^\star_i(s,a), \quad \forall s \in \S_i, a \in \A_i(s)$
}
}
\textbf{Return} 
$\epsilon$-optimal  policy $\mu^\star$
\end{algorithm} 

\subsection{Main Synthesis Algorithm}
Based on the above informal discussions, our  overall synthesis procedure for the entire MDP $\M$ is provided in Algorithm~\ref{alg:gmdpsolution}. 
Specifically, 
in line 1, we first compute AMECs set $\texttt{MEC}_\varphi(\M)$. 
Then we input each AMEC into Algorithm~\ref{alg:cmdpsolution} and record the output policy and value in line 2. 
These values help us to define reward function $\mathtt{R}_K$,
for which the maximum average reward policy $\mu^\star_K$ is synthesized. 
These are done by lines 3-4. 
Note that $K$ should satisfy \eqref{eq:Kselect} so that MDP will stay in AMEC states w.p.1.
Then in line~5, we choose $\mu^\star_K$ as the initial policy. 
Finally, in lines 6-11, 
we determine whether each AMEC $(\S_i,\A_i)$ contains some recurrent state in MC $\M^{\mu_K^{\star}}$. 
If so, it means that the MDP will achieve higher efficiency value when choosing to stay in this AMEC  forever. 
Therefore, within this AMEC, we replace $\mu^\star_K$ by output policy of Algorithm~\ref{alg:cmdpsolution} when this AMEC is input.
Note that, since each output policy is $\epsilon$-optimal within the AMEC by Theorem~\ref{thm:communicatingMDPprocedure}, the overall policy $\mu^\star$ is also $\epsilon$-optimal.

\subsection{Properties Analysis and Correctness}\label{sec:analysis}
We conclude this section by formally analyzing the properties of the proposed algorithm. 

The following result shows that, 
by selecting $K$ properly, 
the solution  to the long-run average reward maximization problem w.r.t. reward function $\mathtt{R}_K$
indeed achieves the supremum efficiency value among all policies in $\Pi_\M^\varphi$. 

\begin{mypro} \label{prop:rewardisequaltoratio}
Let $\hat{r}=\max_{s\in S, a \in A(s)}|\mathtt{R}(s,a)|$ and $\hat{c}=\min_{s\in S, a \in A(s)}\mathtt{C}(s,a)$. If $K$ is selected such that
\begin{equation}\label{eq:Kselect}
    K< -\frac{\hat{r}}{\hat{c}},
\end{equation}
then we have 
\begin{equation}\label{eq:rewardKequaltaskratio}
    W(s_0,\mathtt{R}_K,\Pi_\M) = J(s_0,\mathtt{R},\mathtt{C},\Pi^\varphi_\M).
\end{equation}
 \end{mypro}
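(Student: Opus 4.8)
The plan is to establish the two inequalities $W(s_0,\mathtt{R}_K,\Pi_\M) \ge J(s_0,\mathtt{R},\mathtt{C},\Pi^\varphi_\M)$ and $W(s_0,\mathtt{R}_K,\Pi_\M) \le J(s_0,\mathtt{R},\mathtt{C},\Pi^\varphi_\M)$ separately, with the choice of $K$ in~\eqref{eq:Kselect} being exactly what is needed to force the optimal average-reward policy to remain within AMECs forever w.p.1. For the direction ``$\ge$'', I would start from an (essentially) optimal policy $\mu \in \Pi^\varphi_\M$ for the efficiency objective. Since $\mu$ satisfies the LTL task w.p.1, the induced Markov chain $\M^\mu$ has all of its recurrent classes inside AMECs, so along almost every trajectory the reward $\mathtt{R}_K$ collected after some finite time equals $v_i^\star$ for whichever AMEC $(\S_i,\A_i)$ is reached — never the value $K$. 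Hence the long-run average of $\mathtt{R}_K$ under $\mu$ equals the $\beta$-weighted average of the $v_i^\star$'s, which by Theorem~\ref{thm:communicatingMDPprocedure} coincides with $\mu$'s efficiency value (up to the $\epsilon$ slack, which can be driven to zero). This shows $W(s_0,\mathtt{R}_K,\Pi_\M) \ge J(s_0,\mathtt{R},\mathtt{C},\Pi^\varphi_\M)$.

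For the direction ``$\le$'', the key observation is that the optimal average-reward policy $\mu^\star_K$ must have \emph{all} of its recurrent states inside AMECs. Suppose not: then some recurrent class $R$ of $\M^{\mu^\star_K}$ contains a state-action pair outside every AMEC, so its long-run average reward includes the value $K$ with positive weight. I would bound the average reward of any such class crudely: the reward $\mathtt{R}_K$ on any AMEC state-action pair is $v_i^\star$, and since $v_i^\star$ is an efficiency value it is bounded above by $\hat r/\hat c$; meanwhile a non-AMEC pair contributes $K < -\hat r/\hat c < 0$. A recurrent class that visits a non-AMEC pair infinitely often therefore has average reward strictly less than $\hat r/\hat c$, in fact $\le$ some value bounded away from (and below) what is achievable by staying inside an AMEC (which yields at least $\min_i v_i^\star \ge -\hat r/\hat c$, or more simply: there exists a policy in $\Pi^\varphi_\M$ achieving average reward equal to its efficiency, which is $\ge -\hat r/\hat c$). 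This contradicts optimality of $\mu^\star_K$, because a policy that reaches some AMEC w.p.1 and then applies the corresponding Algorithm~\ref{alg:cmdpsolution} policy does strictly better. Hence $\mu^\star_K$ stays in AMECs forever w.p.1, so $\mu^\star_K \in \Pi^\varphi_\M$ (modulo the perturbation detail) and its average reward equals its efficiency value, giving $W(s_0,\mathtt{R}_K,\Pi_\M) = W^{\mu^\star_K}(s_0,\mathtt{R}_K) \le J(s_0,\mathtt{R},\mathtt{C},\Pi^\varphi_\M)$.

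The quantitative heart of the argument is the comparison $v_i^\star \le \hat r/\hat c$ and $K < -\hat r/\hat c$: any trajectory that stays in an AMEC forever earns a per-stage reward at least $-\hat r/\hat c$, whereas a trajectory charged the value $K$ infinitely often earns strictly less, so the inequality~\eqref{eq:Kselect} creates a strict gap. I expect the main obstacle to be the bookkeeping around transient behaviour and the reduction from arbitrary policies in $\Pi^\varphi_\M$ (possibly history-dependent) to the stationary/AMEC-decomposition picture — this is where I would lean on Proposition~\ref{prop:stationaryisenough}, the MEC decomposition, and the formula expressing the efficiency of a policy as a convex combination (weighted by staying probabilities $\beta(k)$) of the per-recurrent-class efficiencies, analogous to~\eqref{eq:middleoffirsttheo}. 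A secondary subtlety is that $\mu^\star_K$ as computed might itself put recurrent mass on non-accepting end components that happen to consist only of AMEC-labelled pairs is impossible by construction of $\mathtt{R}_K$ — but one must check that ``$\S_i$ contains a recurrent state'' in line~7 of Algorithm~\ref{alg:gmdpsolution} genuinely forces staying inside that AMEC, which again uses maximality of the end component together with the strict reward gap.
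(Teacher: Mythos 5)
Your overall strategy matches the paper's on the ``$\le$'' direction and diverges on the ``$\ge$'' direction. For ``$\le$'', both you and the paper argue that \eqref{eq:Kselect} forces every recurrent class of $\M^{\mu^\star_K}$ into an AMEC and then pass to a policy that substitutes the Algorithm~\ref{alg:cmdpsolution} policies inside those AMECs. For ``$\ge$'', the paper routes through the auxiliary modified reward $\hat{\mathtt{R}}$ of Appendix~\ref{appendix:proof} (Claims~\ref{claim:hatrewardupperbound} and~\ref{claim:deterpolicyproperty}) and the deterministic optimal efficiency policy for $\hat{\mathtt{R}}$, whereas you directly evaluate the $\mathtt{R}_K$-average reward of a near-optimal stationary policy in $\Pi^\varphi_\M$ (reducing to stationary policies via Proposition~\ref{prop:stationaryisenough}) and use that each recurrent class earns the constant reward $v_i^\star$, which dominates that class's own efficiency. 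Your route is more elementary and avoids the $\hat{\mathtt{R}}$ machinery; the paper's route reuses appendix claims it already needed elsewhere.

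Two steps are loose as written. First, your ``crude bound'' for the contradiction does not suffice on its own: if a recurrent class could mix AMEC pairs (reward up to $\hat r/\hat c$) with non-AMEC pairs (reward $K$), then with small stationary weight $p$ on the latter its average reward is only bounded by $(1-p)\hat r/\hat c + pK$, which is not below $-\hat r/\hat c$ when $K$ is only slightly below $-\hat r/\hat c$. What saves the argument --- and what the paper uses implicitly when it writes $W^{\mu^\star_K}(r,\mathtt{R}_K)=K$ --- is that no mixing can occur: a recurrent class of a stationary policy, together with the support of that policy, is an end component and hence contained in a single MEC; since distinct MECs are disjoint, the class is either entirely inside one AMEC (constant reward $v_i^\star \ge -\hat r/\hat c$ by Claim~\ref{claim:boundforLTLstate}) or entirely outside all AMECs (constant reward $K$). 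You gesture at maximality in your final sentence, but this disjointness fact is the load-bearing step and must be made explicit. Second, in the last line of your ``$\le$'' direction, $W^{\mu^\star_K}(s_0,\mathtt{R}_K)$ is not the efficiency of $\mu^\star_K$: it equals $\sum_{i}\textsf{Pr}^{\mu^\star_K}_{\texttt{R}}(\S_i,\A_i)\,v_i^\star$, and $v_i^\star$ is only within $\epsilon$ of the efficiency of the substituted policy $\hat\mu$, which (unlike $\mu^\star_K$, which may stay in an AMEC without meeting its Rabin pair) is guaranteed to lie in $\Pi^\varphi_\M$. The correct conclusion is $W^{\mu^\star_K}(s_0,\mathtt{R}_K)\le J^{\hat\mu}(s_0)+\epsilon\le J(s_0,\mathtt{R},\mathtt{C},\Pi^\varphi_\M)+\epsilon$ for every $\epsilon>0$; both the membership in $\Pi^\varphi_\M$ and the value comparison require the substitution, not merely the observation that $\mu^\star_K$'s recurrent states sit in AMECs.
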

\begin{proof}
    Let $\mu^\star_K$ the optimal stationary policy for average reward w.r.t. $\mathtt{R}_K$, i.e.,
    \begin{equation} \label{eq:averageoptimal}
        W^{\mu^\star_K}(s,\mathtt{R}_K)=W(s,\mathtt{R}_K,\Pi_\M), \forall s\in S.
    \end{equation}
Existence of such policy $\mu^\star_K$ comes from classic average maximization problem~\cite{puterman}.
We first prove that for $K < -\hat{r}/\hat{c}$, all recurrent states of $\M^{\mu_K^\star}$ are states in AMECs by contradiction. Assume that $r$ is recurrent in MC $\M^{\mu^\star_K}$ and is not in any AMEC. Let $R$ the recurrent class $r$ belongs to in $\M^{\mu^\star_K}$.
Then
\begin{equation} \label{eq:smallKprove}
 W^{\mu^\star_K}(r,\mathtt{R}_K) = K.
\end{equation}
Since we assume that initial from $r$ it can finish LTL w.p.1 under some policy $\mu'$, then $r$ can stay in forever in AMECs w.p.1 under $\mu'$. From claim~\ref{claim:boundforLTLstate} and \eqref{eq:ratiomiddle}, we know that $W^{\mu'}(s,\mathtt{R}_K) \geq -\hat{r}/\hat{c}>K$, which violates \eqref{eq:averageoptimal}. Thus all recurrent states in MC $\M^{\mu_K^\star}$ are in AMECs. 
From Claim~\ref{cla:regularoptimalpolicy} in Appendix~\ref{appendix:proof}, we know that $\mu^\star_K$ is regular.
Let $\texttt{R}(\M) \subseteq \texttt{MEC}(\M)$ be the set of MECs that contain recurrent states in MC $\M^{\mu^\star_K}$.
For $(\S,\A)$, we denote by $\mu_{(\S,\A)}$ the output policy of Algorithm~\ref{alg:cmdpsolution} when $(\S,\A)$ and $\epsilon > 0$ is input. We define a policy $\hat{\mu}$ by
\begin{align}
	\hat{\mu}(s,a) \!= \!
		\left\{\!\!
		\begin{array}{cl}
			\mu_{(\S,\A)}(s,a)  & \text{if }    s\in \S, a \in \A(s), (\S,\A) \in \texttt{R}(\M)  \\
			\mu^{\star}(s,a)               & \text{otherwise.}  
		\end{array}
		\right.\!\!\! \nonumber
  \end{align}
Then we have
\begin{equation} \label{eq:mukstarreward}
\begin{aligned}
     &W^{\mu^\star_K}(s_0,\mathtt{R}_K) \\
     \overset{(a)}{=}& \sum_{(\S,\A) \in \texttt{AMEC}(\M)} \textsf{Pr}^{\mu^\star_K}_\texttt{R}(\S,\A) \mathtt{R}_K(\S,\A) \\
     \overset{(b)}{=} & \sum_{(\S,\A) \in \texttt{AMEC}(\M)}\textsf{Pr}^{\hat{\mu}}_\texttt{R}(\S,\A) V(\S,\A)\\
     \overset{(c)}{\leq} & \sum_{(\S,\A) \in \texttt{AMEC}(\M)}\textsf{Pr}^{\hat{\mu}}_\texttt{R}(\S,\A) (J^{\hat{\mu}}(s_{(\S,\A)},\mathtt{R},\mathtt{C})+\epsilon) \\ 
    \overset{(d)}{=} & J^{\hat{\mu}}(s_0,\mathtt{R},\mathtt{C}) +\epsilon \\
     \overset{(e)}{\leq} &J(s_0,\mathtt{R},\mathtt{C},\Pi_\M^\varphi)+\epsilon.
\end{aligned}
\end{equation}
where $\textsf{Pr}^{\mu^\star_K}_\texttt{R}(\S,\A)$ and $\textsf{Pr}^{\hat{\mu}}_\texttt{R}(\S,\A)$ defined in \eqref{eq:stayingforeverinMEC} are probability of staying forever in MEC $(\S,\A)$ under policy $\mu^\star_K$ and $\hat{\mu}$, respectively, and $\mathtt{R}_K(\S,\A)$ is the constant reward assigned for state action pairs in $(\S,\A)$ by function $\mathtt{R}_K$ in \eqref{eq:rewardK}, and $V(\S,\A)$ is output value of Algorithm~\ref{alg:cmdpsolution} when $(\S,\A)$ and $\epsilon$ is input, with $s_{(\S,\A)} \in \S$ some state in $\S$. $(a)$ holds from \eqref{eq:MECreachingratio}. $(b)$ comes from \eqref{eq:samestayingprobability} and definition of $\mathtt{R}_K$ in \eqref{eq:rewardK}. $(c)$ holds from result of Theorem~\ref{thm:communicatingMDPprocedure} and $(d)$ comes from \eqref{eq:MECreachingratio}. $(e)$ is true since $\hat{\mu} \in \Pi^\varphi_\M$.
Since $\epsilon>0$ can be selected arbitrarily closed to $0$, we have
\begin{equation}\label{eq:anotherhalf}
    W^{\mu^\star_K}(s_0,\mathtt{R}_K) \leq J(s_0,\mathtt{R},\mathtt{C},\Pi_\M^\varphi).
\end{equation}
  Let $\mu^\star \in \Pi^{SD}_\M$ be the stationary deterministic policy defined in Eq.~\eqref{eq:optimaldeterdef}. Then from 1) of Claim~\ref{claim:deterpolicyproperty} in Appendix~\ref{appendix:proof} and \eqref{eq:MECreachingratio}, we have
  \begin{equation}\label{eq:middlerewardK}
         W^{\mu^\star}(s_0,\mathtt{R}_K)=\sum_{(\S,\A) \in \texttt{AMEC}(\M)} \textsf{Pr}^{\mu^\star}_\texttt{R}(\S,\A) \mathtt{R}_K(\S,\A)
  \end{equation}
where $\mathtt{R}_K(\S,\A)$ is defined in \eqref{eq:mukstarreward} and $\textsf{Pr}^{\mu^\star}_\texttt{R}(\S,\A)$ is defined in \eqref{eq:stayingforeverinMEC}. From 1) of Claim~\ref{claim:deterpolicyproperty}, each recurrent class of MC $\M^{\mu^\star}$ is in some MAEC. From Theorem~\ref{thm:communicatingMDPprocedure} and definition of $\mathtt{R}_K$ in \eqref{eq:rewardK}, we know that $\mathtt{R}_K(\S,\A)$ is the value of maximum efficiency among all MAECs in $(\S,\A)$, i.e.,
\begin{equation}\label{eq:Krewardupper}
    \mathtt{R}_K(\S,\A) \geq J^{\mu^\star}(s_{(\S,\A)},\mathtt{R},\mathtt{C}),
\end{equation}
 where $s_{(\S,\A)} \in \S$ is defined in \eqref{eq:mukstarreward}. From \eqref{eq:middlerewardK},\eqref{eq:Krewardupper} and \eqref{eq:MECreachingratio},
\[
W^{\mu^\star}(s_0,\mathtt{R}_K) \geq J^{\mu^\star}(s_0,\mathtt{R},\mathtt{C}).
\]
Combining with 2) of Claim~\ref{claim:deterpolicyproperty}, \eqref{eq:stationaryhalf}, \eqref{eq:optimaldeterdef}, we have\begin{equation}\label{eq:generalhalf}
    W(s_0,\mathtt{R}_{K}, \Pi_\M) \geq W^{\mu^\star}(s_0,\mathtt{R}_K) \geq J(s_0,\mathtt{R},\mathtt{C},\Pi^\varphi_{\M}).
\end{equation}
With \eqref{eq:optimalrewardKdef}, \eqref{eq:anotherhalf} and \eqref{eq:generalhalf}, we complete the proof.
\end{proof}

Based on the above criterion, we can finally establish the correctness result of the synthesis procedure for the general case of non-communicating MDPs.
 \begin{mythm} \label{thm:solutionofproblem}
Given MDP $\M= (S,s_0,A,P,\mathcal{AP},\ell,Acc)$.
Algorithm~\ref{alg:gmdpsolution} correctly solves Problem~\ref{problem:ratioltl} for $\M$.
 \end{mythm}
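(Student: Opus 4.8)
The plan is to verify that the policy $\mu^\star$ returned by Algorithm~\ref{alg:gmdpsolution} is both feasible (i.e. $\mu^\star\in\Pi^\varphi_\M\cap\Pi^S_\M$) and $\epsilon$-optimal, i.e. $J^{\mu^\star}(s_0)\ge J(s_0,\Pi^\varphi_\M)-\epsilon$. These two facts together are exactly what Problem~\ref{problem:ratioltl} demands. The backbone is Proposition~\ref{prop:rewardisequaltoratio}, which already identifies $W(s_0,\mathtt{R}_K,\Pi_\M)$ with $J(s_0,\mathtt{R},\mathtt{C},\Pi^\varphi_\M)$ once $K$ satisfies \eqref{eq:Kselect}; the remaining work is to track how the local surgery in lines 6--11 changes the value relative to the average-reward-optimal policy $\mu^\star_K$.

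First I would establish feasibility. Proposition~\ref{prop:rewardisequaltoratio} (more precisely, the argument inside its proof that all recurrent states of $\M^{\mu^\star_K}$ lie in AMECs, since $K<-\hat r/\hat c$) guarantees that every recurrent class of $\M^{\mu^\star_K}$ sits inside some AMEC $(\S_i,\A_i)\in\texttt{MEC}_\varphi(\M)$. The loop in lines 6--11 leaves $\mu^\star$ unchanged on states not reached recurrently under $\mu^\star_K$, and on each AMEC that does contain a recurrent state it overwrites $\mu^\star$ with the output policy $\mu^\star_i$ of Algorithm~\ref{alg:cmdpsolution}. By Theorem~\ref{thm:communicatingMDPprocedure}, $\mu^\star_i$ keeps the system inside $(\S_i,\A_i)$ forever w.p.1 and satisfies the LTL task there (it induces a unichain whose recurrent class lies in a MAEC). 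The transient part of $\mu^\star_K$ (states outside all recurrent AMECs) is untouched, so from $s_0$ the system still reaches one of these AMECs w.p.1 and then stays there; hence $\mu^\star\in\Pi^\varphi_\M\cap\Pi^S_\M$.

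Next I would prove $\epsilon$-optimality by comparing values. Using the decomposition \eqref{eq:MECreachingratio}/\eqref{eq:ratiomiddle} of a long-run objective over the recurrent AMECs, write both $W^{\mu^\star_K}(s_0,\mathtt{R}_K)$ and $J^{\mu^\star}(s_0,\mathtt{R},\mathtt{C})$ as convex combinations over the same set $\texttt{R}(\M)$ of recurrent AMECs, with the same staying-probabilities $\textsf{Pr}_\texttt{R}(\S_i,\A_i)$ (these coincide by \eqref{eq:samestayingprobability}, since the modification only changes behaviour inside each AMEC, not which AMEC is entered). Within AMEC $(\S_i,\A_i)$ the weight is $\mathtt{R}_K(\S_i,\A_i)=v^\star_i$ for the $\mathtt{R}_K$ objective and $J^{\mu^\star_i}(s_{(\S_i,\A_i)},\mathtt{R},\mathtt{C})$ for the efficiency objective, and Theorem~\ref{thm:communicatingMDPprocedure} gives $v^\star_i=J(s,\Pi^\varphi_{(\S_i,\A_i)})\le J^{\mu^\star_i}(s_{(\S_i,\A_i)},\mathtt{R},\mathtt{C})+\epsilon$. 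Summing,
\[
J^{\mu^\star}(s_0,\mathtt{R},\mathtt{C})\;\ge\; W^{\mu^\star_K}(s_0,\mathtt{R}_K)-\epsilon \;=\; W(s_0,\mathtt{R}_K,\Pi_\M)-\epsilon \;=\; J(s_0,\mathtt{R},\mathtt{C},\Pi^\varphi_\M)-\epsilon,
\]
where the last two equalities are \eqref{eq:optimalrewardKdef} and \eqref{eq:rewardKequaltaskratio} of Proposition~\ref{prop:rewardisequaltoratio}. This is exactly the required bound, completing the proof.

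The main obstacle I anticipate is the bookkeeping that the staying-probabilities $\textsf{Pr}_\texttt{R}(\S_i,\A_i)$ are genuinely unchanged when $\mu^\star_K$ is replaced by $\mu^\star_i$ inside each recurrent AMEC — one must invoke the invariance result \eqref{eq:samestayingprobability} and argue that, since $\mu^\star$ and $\mu^\star_K$ differ only on AMEC-internal state-action pairs while the reachability structure into AMECs is preserved (each modified AMEC remains a trap, each unmodified recurrent AMEC is still reached with the same probability because the transient skeleton is identical), the distribution over which AMEC is eventually absorbed is identical. A secondary point requiring care is that Theorem~\ref{thm:communicatingMDPprocedure} is stated for an AMEC with a designated initial state, whereas here the AMECs are entered at various states; Remark~\ref{remark:applytonoinitial} already resolves this, since $\M^{\mu^\star_i}$ is a unichain and the value $v^\star_i$ and $\epsilon$-optimality are independent of the entry state.
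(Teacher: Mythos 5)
Your proposal is correct and follows essentially the same route as the paper's proof: feasibility from the fact that all recurrent states of $\M^{\mu^\star_K}$ lie in AMECs (Proposition~\ref{prop:rewardisequaltoratio}) combined with LTL satisfaction of each $\mu^\star_i$ inside its AMEC, and $\epsilon$-optimality from the decomposition \eqref{eq:MECreachingratio} over recurrent AMECs, the invariance of staying probabilities \eqref{eq:samestayingprobability}, and the per-AMEC guarantee of Theorem~\ref{thm:communicatingMDPprocedure}. Your additional remarks on the entry-state independence (via Remark~\ref{remark:applytonoinitial}) and on why the absorption distribution over AMECs is preserved are just more explicit versions of steps the paper treats tersely.
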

\begin{proof}
By Proposition~\ref{prop:rewardisequaltoratio}, we know that policy $\mu^\star$ in lines 5 of Algorithm~\ref{alg:gmdpsolution} satisfies that all recurrent states in MC $\M^{\mu^\star}$ is in AMECs. 
Then after action assignment procedure in lines 6-11, we get a policy $\mu^\star$ with efficiency value
\begin{equation} \label{eq:outputofgmdpalg}
    J^{\mu^\star}(s_0,\mathtt{R},\mathtt{C})\geq \sum_{(\S,\A)\in \texttt{AMEC}(\M)}\textsf{Pr}^{\mu^\star}_\texttt{R}(\S,\A) (v^\star(\S,\A)-\epsilon)
\end{equation}
 where $\textsf{Pr}^{\mu^\star}_\texttt{R}(\S,\A)$ defined by \eqref{eq:stayingforeverinMEC} is probability of staying forever in $(\S,\A)$ and $v^\star(\S,\A)$ is the output of Algorithm~\ref{alg:cmdpsolution} when $(\S,\A)$ and threshold value $\epsilon$ are input. From definition of $\mathtt{R}_K$ in \eqref{eq:rewardK}, for $s \in \S, a \in \A(s)$, $v^\star(\S,\A)=\mathtt{R}_K(s,a)$.
 
 From Claim~\ref{cla:regularoptimalpolicy} we know $\mu^\star_K$ is regular. Then from \eqref{eq:samestayingprobability}, for any $(\S,\A) \in \texttt{AMEC}(\M)$, $\textsf{Pr}^{\mu^\star}_\texttt{R}(\S,\A)=\textsf{Pr}^{\mu^\star_K}_\texttt{R}(\S,\A)$. Combining with first equality of \eqref{eq:mukstarreward}, \eqref{eq:outputofgmdpalg}, \eqref{eq:rewardKequaltaskratio} and \eqref{eq:averageoptimal},
 \begin{equation} \label{eq:gmdpalgmiddle}
      J^{\mu^\star}(s_0,\mathtt{R},\mathtt{C})+\epsilon \geq W^{\mu_K^\star}(s_0,\mathtt{R}_K)=J(s_0,\mathtt{R},\mathtt{C},\Pi^\varphi_\M).
 \end{equation}
 Since under $\mu_i^\star$ in line 9 of Algorithm~\ref{alg:gmdpsolution}, it can finish LTL task w.p.1 once reaching AMEC $(\S_i,\A_i)$, we have $\mu^\star \in \Pi^\varphi_\M$. Then by \eqref{eq:gmdpalgmiddle} we know $\mu^\star$ is a solution of problem~\ref{problem:ratioltl} for $\M$.
\end{proof}

\begin{remark}
We briefly discuss the complexity of Algorithm~\ref{alg:gmdpsolution}, which arises from the following three main components: the computation of the MEC in line 1, the call to Algorithm~\ref{alg:cmdpsolution} in line 2, and the solution of the average reward maximization problem in line 4.
From~\cite{baier2008principles}, the complexity of line 1 is quadratic in the size of the MDP. Additionally, the average reward maximization problem involves solving a linear program, and its optimal solution can be computed in polynomial time with respect to the size of $\M$~\cite{boyd2004convex}. Finally, the complexity of Algorithm~\ref{alg:cmdpsolution} is also polynomial in the size of the MDP. Specifically, it requires finding the MAEC in line 1, solving a linear fractional program to compute the optimal ratio value in line 3, and performing a matrix inversion in the potential vector \eqref{eq:potentialvector} to select $\delta$ in line 7. 
Therefore, the overall complexity of our approach is polynomial in the size of the (product) MDP.
\end{remark}

\section{Case Studies} \label{sec:casestudy}
In this section, we present two case studies of robot task planning to illustrate the proposed method. 
All computations are performed on a laptop with 16 GB RAM.  We use \textsf{CVXPY}~\cite{diamond2016cvxpy} to solve convex optimization problems.

\subsection{Case Study 1}

\textbf{Mobility of Robot: }We consider a mobile robot moving in a $9 \times 9$ gird workspace shown in Figure~\ref{fig:workspace}. 
The initial location of the robot is the blue grid in the upper left corner 
and red girds represent obstacle regions the robot cannot enter. 
We assume that the mobility of the robot is fully deterministic. 
That is, at each gird, the robot has at most four actions, left/right/up/down, 
and the robot can deterministically move to the unique corresponding successor grid by taking each action. 
An action is not available if it leads to the boundary.
Therefore, the mobility of the robot can be modeled as a deterministic MDP denoted by 
$\hat{\M}=(\hat{S}, \hat{s}_0, \hat{P}, \hat{A})$ with state space $\hat{S}=\{(i,j):i,j=1,\dots,9\}$.

\textbf{Probabilistic Environment: }
We assume that, at each time instant, when the robot is at grid $\hat{s}\in \hat{S}$, it has probability $p(\hat{s})$ to find an item; 
the probability distribution over the workspace is shown in  Figure~\ref{fig:generation probability}.  
If the robot is empty, then it will pick up the item immediately when find it and the robot can only carry at most one item. The robot delivers the items to one of the destinations in $\hat{D} \subseteq S$, which are denoted by green grids.

\textbf{MDP Model: }The overall behavior of both the deterministic mobility and the probabilistic environment  can be captured by MDP $\M=({S}, {s}_0, {P}, A=\hat{A})$
with augmented state space $S=\hat{S}\times \{ 0,1 \}$, 
where $0$ means that the robot is empty and $1$ means that it is carrying item. 
We assume that the robot is initially empty, i.e.,  $s_0 = (\hat{s}_0,0)$. 
We denote by 
$D=\hat{D}\times \{1\}$ the set of states where the robot is at the destination with item.
Then the transition probability is defined by: 
for any states $s=(\hat{s},i)$, $s'=(\hat{s}',i')$ and action $a \in A=\hat{A}$, 
we have 
\begin{enumerate}
\item 
if $\hat{P}(\hat{s}'\mid \hat{s},a)=0$, then $P(s'\mid s,a)= 0$; 
\item 
otherwise, we have
\end{enumerate} 
\begin{align} 
	 P(s'|s,a)= 
		\left\{
		\begin{array}{ll}
			 p(\hat{s}')    &  \text{if }  i=0\wedge i'=1 \\
            1-p(\hat{s}')    &  \text{if }  i=0 \wedge i'=0 \\
		1          &  \text{if }  i=1 \wedge i'=1 \wedge s\notin D \\ 
			 p(\hat{s}')    &  \text{if }  i=1 \wedge i'=1 \wedge s\in D \\
            1-p(\hat{s}')    &  \text{if }  i=1 \wedge i'=0 \wedge s\in D\\
		\end{array}.
		\right. \nonumber
\end{align}

\textbf{LTL Task: } The states in $D$ are assigned with label $d$. The yellow grid in lower left of Figure~\ref{fig:workspace}, denoted by $c$, is charging station. Let $C=\{ c \}\times \{0,1\}$ be all states in augment MDP that represent charging station with label $c$. The obstacle regions have label $b$. We consider two LTL tasks in the case study. The first task is described by
\begin{equation}
  \varphi_1 = \square \lozenge \nonumber d \wedge \square \neg b,
\end{equation}
i.e., the robot need to find and pick up items in the workspace and then deliver
to the destinations while avoid the obstacles. The second task further take in energy constraint consideration such that the charging station should also be visited infinitely often. Thus we have
\begin{equation}
    \varphi_2 = \varphi_1 \wedge \square \lozenge c. \nonumber
\end{equation}

\textbf{Costs and Rewards: }We assume that moving from each grid incurs a cost. 
Specifically, for each state $s=(\hat{s},i)\in S$ and action $a\in A$, the moving cost $\mathtt{C}(s,a)$ is defined by 
$\mathtt{C}(s,a)=\textsf{cost}(M(\hat{s}))$, 
where $M(\hat{s})$ is the shortest Manhattan distance from $\hat{s}$ to target grids
and $\textsf{cost}(\cdot)$ is shown in Table~\ref{table}. The robot receives a reward when reaching the destinations with item. 
Assume that the rewards for destinations $s_{ll}\in D$  in the lower left   and $s_{ur}\in D$ in the upper right corner are different
with $\mathtt{R}(s_{ll},a)=2$ and $\mathtt{R}(s_{ur},a)=1$ for all $a\in A$. 
The overall objective of the robot is to finish LTL task w.p.1
while maximizing the expected reward-to-cost ratio.
 \begin{table}[tp]
  \begin{center}
    \caption{Cost function based on Manhattan distance}
    \label{table}
    \begin{tabular}{|c|c|c|c|c|c|c|c|c|c|} 
    \hline
     Distance  & 0 & 1 & 2 & 3  & 4 & 5 & 6 & 7  & 8\\
      \hline
      Cost  & 3.2 & 3.0 & 2.7 & 2.5 & 1.5 & 1.0 & 1.0 & 1.0  & 1.0 \\
      \hline
    \end{tabular}
  \end{center}
\end{table}
\begin{figure}  
\subfigure[Workspace of the robot, where arrows indicate optimal actions.] 	{\label{fig:workspace}
    \begin{minipage}[b]{0.47\linewidth}
   \centering
\begin{tikzpicture}
       \draw[fill=myblue, draw = white] (0,0) -- (0.4,0) -- (0.4, -0.4) -- (0, -0.4);

      \draw[fill=mygreen, draw = white] (3.2,0) -- (3.6,0) -- (3.6, -0.4) -- (3.2, -0.4);	
       \draw[fill=mygreen, draw = white] (0,-3.2) -- (0,-3.6) -- (0.4,-3.6) -- (0.4,-3.2);

       \draw[fill=red, draw = white]
       (0.4,-0.4)--(1.2,-0.4)--(1.2,-1.2)--(0.4,-1.2);

        \draw[fill=red, draw = white]
       (0.8,-2.4)--(1.6,-2.4)--(1.6,-3.2)--(0.8,-3.2);

        \draw[fill=red, draw = white]
       (2.4,-1.2)--(3.2,-1.2)--(3.2,-2.0)--(2.4,-2.0);

       \draw[fill=yellow, draw = white]
       (0,-2.8)--(0.4,-2.8)--(0.4,-3.2)--(0,-3.2);

        \foreach \x in {0, 0.4,...,3.6}
        \draw (\x, 0) -- (\x, -3.6);
	
	\foreach \y in {0, -0.4,...,-3.6}
	\draw[densely dotted] (0, \y)--(3.6, \y);

    \draw [line width =1.5pt] (0, 0)--(3.6, 0)--(3.6,-3.6)--(0,-3.6)--(0,0);
    
    	\begin{scope}[red, line width=0.2mm, arrows = {-Stealth[scale=0.5]}]

            \draw (0.75, -3.47) -- (0.45, -3.47);
            \draw (1.15, -3.47) -- (0.85, -3.47);
            \draw (1.55, -3.47) -- (1.25, -3.47);
            \draw (1.95, -3.47) -- (1.65, -3.47);
            \draw (2.35, -3.47) -- (2.05, -3.47);
    	\end{scope}

         	\begin{scope}[blue, line width=0.2mm, arrows = {-Stealth[scale=0.5]}]

            \draw (0.05,-3.4) -- (0.35,-3.4);

            \draw (0.45,-3.3) -- (0.75,-3.3);
            \draw (0.85,-3.3) -- (1.15,-3.3);
            \draw (1.25,-3.3) -- (1.55,-3.3);
            \draw (1.65,-3.3) -- (1.95,-3.3);
            \draw (2.35,-3.3) -- (2.05,-3.3);
    	\end{scope}
    
\end{tikzpicture}
           \end{minipage}
	}
\subfigure[Probabilities for finding items.] 
	{\label{fig:generation probability}
 \begin{minipage}[b]{0.45\linewidth}
	\centering
 \includegraphics[height=3.6cm]{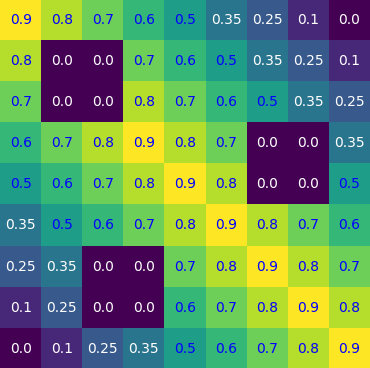}
    \end{minipage}
	}
    \caption{Case study 1.}
   \label{fig:case1}
\end{figure}

\begin{figure}[tp]
	\subfigure[Limit distribution for $\hat{S}\times \{1\}$.] 
	{\label{fig:limitdishave}
 \begin{minipage}[b]{0.47\linewidth}
	\centering
 \includegraphics[height=0.4cm]{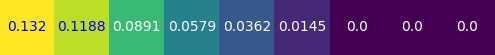}
    \end{minipage}
	}\subfigure[Limit distribution for $\hat{S}\times \{0\}$.] 
	{\label{fig:limitdisno}
 \begin{minipage}[b]{0.47\linewidth}
	\centering
 \includegraphics[height=0.4cm]{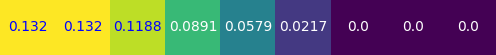}
    \end{minipage}
	}

    \caption{Limit distribution under the optimal policy.}
   \label{fig:limitdis}
\end{figure}

\textbf{Solution Analysis for Task $\varphi_1$: }  
By applying the synthesis algorithm,  the robot will first take an arbitrary transient path and then eventually  circulate along the path shown in Figure~\ref{fig:workspace}. 
Specifically, the red and blue arrows indicate the action  robot should take if it has and has not picked up the items, respectively.
The optimal efficiency value computed is $0.1157$. 
The limit distribution under this policy is shown in Figure~\ref{fig:limitdis}, which only illustrates grids in last row since only these grids have non-zero value.   
For other girds, as stated in Equation~\eqref{eq:optcmdppolicy}, 
they are all transient states and the  optimal action is an action under which robot can reach these six grids. 
Since the synthesized policy can already finish the LTL tasks, according to Remark~4, we do not even need to perturb the policy. 
Note that, there are two considerations to form this solution.  
First, the robot can choose to go to destinations either $s_{ll}$ and $s_{ur}$. 
However, the former one gives more reward. 
Second, as shown in Figure~\ref{fig:generation probability}, 
the further away from the destination state, the greater the probability of finding the item, but also the higher the overall cost incurs.
Therefore, there is a trade-off to decide how far away the robot should leave from the destination, and one solution is actually the optimal one. 

\textbf{Solution Analysis for Task $\varphi_2$: }
Since the reward and cost function are same for LTL tasks $\varphi_1$ and $\varphi_2$, the policy illustrated in Figure~\ref{fig:workspace} also achieves highest ratio objective value. However, LTL task $\varphi_2$ requires robot to visit charge station, i,e., the yellow grid, infinitely often, which means that this optimal efficiency policy can not finish LTL task w.p.1. To ensure the qualitative requirement, we consider irreducible policy such that at grid $(8,1)$, the target grid is $(9,1)$ and at grid $(9,1)$, the target grid is $(8,1)$, i.e., the robot will repeat back and forth between two grids. For remaining states, we choose action under which it will reach these two grids in finite steps. Note that as discussed in Remark~\ref{remark:irreduciblepolicy}, the chosen irreducible policy not necessarily induces an irreducible MC. We only need to ensure perturbed policy to induce unichain MC and finish LTL task w.p.1. The ratio objective value is $0.032$ under irreducible policy. 

In order to select the perturbation degree (PD) $\delta$ to ensure $\epsilon$-optimality, we consider both estimation method (ES) and exact method (EX). Specifically, to select PD, ES method uses \eqref{eq:deltabound} in Proposition~\ref{prop:boundofrandom} and EX method uses bisection search over interval $(0,1)$. The result is illustrated in Table~\ref{tab:perturbationdegree}-\ref{tab:limitdistribution}.
Specifically, for each sub-optimal threshold (TH) $\epsilon$, we perturb optimal efficiency policy by irreducible policy using ES and EX methods and record two values: (1) the selected PD in Table~\ref{tab:perturbationdegree}; (2) the limit probability of charging station (yellow grid) in Table~\ref{tab:limitdistribution}.
From Table~\ref{tab:perturbationdegree}, the PD in \eqref{eq:deltabound} is about an order of magnitude smaller than maximum PD, which is the cost of determining PD faster. One drawback of low PD is that the limit probability of visiting charging station is also low, as shown in Table~\ref{tab:limitdistribution}.
Note that in Problem~\ref{problem:ratioltl}, we only require that the synthesized policy is $\epsilon$-optimal among all policies that can finish LTL task w.p.1. The LTL task $\varphi_2$ only requires charge station can be visited infinitely often, which is equivalent to non-zero limit probability of visiting charge station. Therefore, although by ES method the selection of PD is conservative and limit probability is relatively low, it still achieve requirement of Problem~\ref{problem:ratioltl}. However, in some situation, it is better to have higher frequency of visiting charging station under $\epsilon$-optimal constraint. Such requirement cannot be expressed by LTL formula and is out of scope of this work. We regard it as a future direction to investigate how to optimize the quantitative objective under constraint of visiting frequency in infinite horizon.

     \begin{table}[tpb]
      \caption{Perturbation Degree $\delta$ for Different Threshold}
  \label{tab:perturbationdegree}
  \centering
  \begin{threeparttable}[b]
     \begin{tabular}{cccccccccc}
      \toprule
     TH & $0.005$ & $0.01$  & $0.05$  & $0.1$  \\ 
      \midrule
     ES  & $0.0087$ & $0.017$ & $0.087$  & $0.17$ \\
     EX  & $0.064$ & $0.12$ & $0.48$  & $0.99$\\
      \bottomrule
    \end{tabular}
  \end{threeparttable}
\end{table}

     \begin{table}[tpb]
      \caption{Limit Probability of Charing Station for Different Threshold}
  \label{tab:limitdistribution}
  \centering
  \begin{threeparttable}[b]
     \begin{tabular}{cccccccccc}
      \toprule
     TH & $0.005$ & $0.01$  & $0.05$  & $0.1$  \\ 
      \midrule
     ES  & $0.0012$ & $0.0024$ & $0.013$  & $0.03$ \\
     EX  & $0.0093$ & $0.019$ & $0.14$  & $0.48$\\
      \bottomrule
    \end{tabular}

  \end{threeparttable}
\end{table}

\subsection{Case Study 2}
\textbf{Mobility of Robot: } The workspace of robot is a $7\times 7$ smart factory shown in Figure~\ref{fig:workspace2}. The initial location of the robot is indicated by the black arrow. The mobility of the robot is deterministic like that in case study 1. In each grid, the robot has available action whose target grid is indicated by the blue and red arrows. The mobility of robot can be modeled as a deterministic MDP.

\textbf{LTL Task: }The red grid, labeled by $r$, is material station where robot can get spare part.
The green grid, labeled by $g$, is command center where the robot can get permission to obtain spare part. The robot is required to first reach command center to get permission and then reach material station to obtain spare part infinitely often. The LTL task is described by
\[
\varphi = \square (\lozenge ( g \wedge \lozenge r ) ).
\]

\textbf{Costs and Rewards:} Once reaching each grid, the robot will receive a reward and cost, which is the red and green number over corresponding gird in Figure~\ref{fig:case2reward}. Specifically, the reward represents the amount of spare part transporting to each grid and the cost represents the time consumption moving to each grid. The quantitative objective of robot is to maximize the ratio of reward and cost, i.e., the transportation spare part amount per time instant.

\textbf{Solution Analysis: }We first find the optimal efficiency policy under which robot will execute the red arrows action infinitely often receiving efficiency $1.1$. However, this policy cannot finish the LTL task. We can perturb the policy by applying the synthesis algorithm as case study 1 to find a solution of Problem~\ref{problem:ratioltl}. One may ask whether it is possible to modify the reward function to encourage the robot to finish the LTL task such that the optimal efficiency policy can directly finish LTL task w.p.1 and the perturbation procedure may be unnecessary. To this end, we consider a new reward function $\mathtt{R}(i)$ which is same as origin reward function but adds reward $i$ when robot successfully obtains spare part in material station. If $i \leq 63.53$, the optimal efficiency policy is still the same as that under origin reward function. For $i > 65.53$, the optimal efficiency policy is replaced by a new policy under which the robot will repeatedly first go to green grid and then go to red grid. Under such policy, the transportation spare part amount per time instant is $0.75$. By considering the reward functions $\{ \mathtt{R}(i) \}_{i\geq 0}$, there are only two different optimal efficiency policies. Although we can find a optimal efficiency policy satisfying LTL task w.p.1 by selecting sufficiently large $i$, the robot may get a undesired actual efficiency ($0.75$ in this case). Therefore, we may get undesired policy by trivial reward modification. 

\begin{figure}  
\subfigure[Workspace of case 2] 
	{\label{fig:workspace2}
            \begin{minipage}[b]{0.46\linewidth}
               	\centering
\begin{tikzpicture}
\draw[fill=mygreen, draw = white]
       (1.5,-1)--(2,-1)--(2,-1.5)--(1.5,-1.5);

\draw[fill=red, draw = white]
      (3,-3)--(3.5,-3)--(3.5,-3.5)--(3,-3.5);

\draw[fill=black, draw = white]
(1.5,-1.5)--(2,-1.5)--(2,-2)--(1.5,-2);

	\foreach \x in {0,0.5,...,3.5}
	\draw[densely dotted,draw=black] (\x, 0)--(\x, -3.5);
		
	\foreach \y in {0,-0.5,...,-3.5}
	\draw[densely dotted,draw=black] (0,\y)--(3.5,\y);
		
	\draw [line width =1.5 pt,draw=black] (0,0)--(3.5, 0)--(3.5, -3.5)--(0,-3.5)--(0,0);

 \draw [line width =1.5 pt,draw=black] (0.5, -1.5)--(0.5, -0.5)--(3,-0.5)--(3,-3)--(0.5,-3)--(0.5,-2);

  \draw [line width =1.5 pt,draw=black] (1, -1.5)--(1, -1)--(2.5,-1)--(2.5,-2.5)--(1,-2.5)--(1,-2);

        \begin{scope}[blue, line width=0.2mm, arrows = {-Latex[length=1.2mm]}]
    	\draw (0.25,-1.65) -- (0.25,-1.35);
     \draw (0.25,-1.15) -- (0.25,-0.85);
      \draw (0.25,-0.65) -- (0.25,-0.35);
      
      \draw (0.35,-0.25) -- (0.65,-0.25);
      \draw (0.85,-0.25) -- (1.15,-0.25);
      \draw (1.35,-0.25) -- (1.65,-0.25);
      \draw (1.85,-0.25) -- (2.15,-0.25);
      \draw (2.35,-0.25) -- (2.65,-0.25);
      \draw (2.85,-0.25) -- (3.15,-0.25);

      \draw (3.25,-0.35) -- (3.25,-0.65);
      \draw (3.25,-0.85) -- (3.25,-1.15);
      \draw (3.25,-1.35) -- (3.25,-1.65);
      \draw (3.25,-1.85) -- (3.25,-2.15);
      \draw (3.25,-2.35) -- (3.25,-2.65);
      \draw (3.25,-2.85) -- (3.25,-3.15);

      \draw (3.15,-3.25) -- (2.85,-3.25);
      \draw (2.65,-3.25) -- (2.35,-3.25);
      \draw (2.15,-3.25) -- (1.85,-3.25);
      \draw (1.65,-3.25) -- (1.35,-3.25);
      \draw (1.15,-3.25) -- (0.85,-3.25);
      \draw (0.65,-3.25) -- (0.35,-3.25);

      \draw (0.25,-3.15) -- (0.25,-2.85);
      \draw (0.25,-2.65) -- (0.25,-2.35);
      \draw (0.25,-2.15) -- (0.25,-1.85);
    	\end{scope}    

             \begin{scope}[red, line width=0.2mm, arrows = {-Latex[length=1.2mm]}]
    	\draw (0.75,-1.65) -- (0.75,-1.35);
     \draw (0.75,-1.15) -- (0.75,-0.85);
      
      \draw (0.85,-0.75) -- (1.15,-0.75);
      \draw (1.35,-0.75) -- (1.65,-0.75);
      \draw (1.85,-0.75) -- (2.15,-0.75);
      \draw (2.35,-0.75) -- (2.65,-0.75);

      \draw (2.75,-0.85) -- (2.75,-1.15);
      \draw (2.75,-1.35) -- (2.75,-1.65);
      \draw (2.75,-1.85) -- (2.75,-2.15);
      \draw (2.75,-2.35) -- (2.75,-2.65);

      \draw (2.65,-2.75) -- (2.35,-2.75);
      \draw (2.15,-2.75) -- (1.85,-2.75);
      \draw (1.65,-2.75) -- (1.35,-2.75);
      \draw (1.15,-2.75) -- (0.85,-2.75);

      \draw (0.75,-2.65) -- (0.75,-2.35);
      \draw (0.75,-2.15) -- (0.75,-1.85);
    	\end{scope} 

    \begin{scope}[blue, line width=0.2mm, arrows = {-Latex[length=1.2mm]}]
    \draw (1.25,-1.65) -- (1.25,-1.35);

      \draw (1.35,-1.25) -- (1.65,-1.25);
      \draw (1.85,-1.25) -- (2.15,-1.25);

      \draw (2.25,-1.35) -- (2.25,-1.65);
      \draw (2.25,-1.85) -- (2.25,-2.15);

      \draw (2.15,-2.25) -- (1.85,-2.25);
      \draw (1.65,-2.25) -- (1.35,-2.25);

      \draw (1.25,-2.15) -- (1.25,-1.85);
    	\end{scope}

     \begin{scope}[blue, line width=0.2mm,arrows ={Latex[length=1.2mm]}-{Latex[length=1.2mm]}]
    \draw (0.3,-1.75) -- (0.7,-1.75);
    \draw (0.8,-1.75) -- (1.2,-1.75);
     \end{scope}

     \draw[-{Latex}] (-0.3,-1.75) -- (0,-1.75);
\end{tikzpicture}
           \end{minipage}
	}
	\subfigure[Reward and cost of case 2. ] 
	{\label{fig:case2reward}
 \begin{minipage}[b]{0.46\linewidth}
	\centering
 \includegraphics[height=3.6cm]{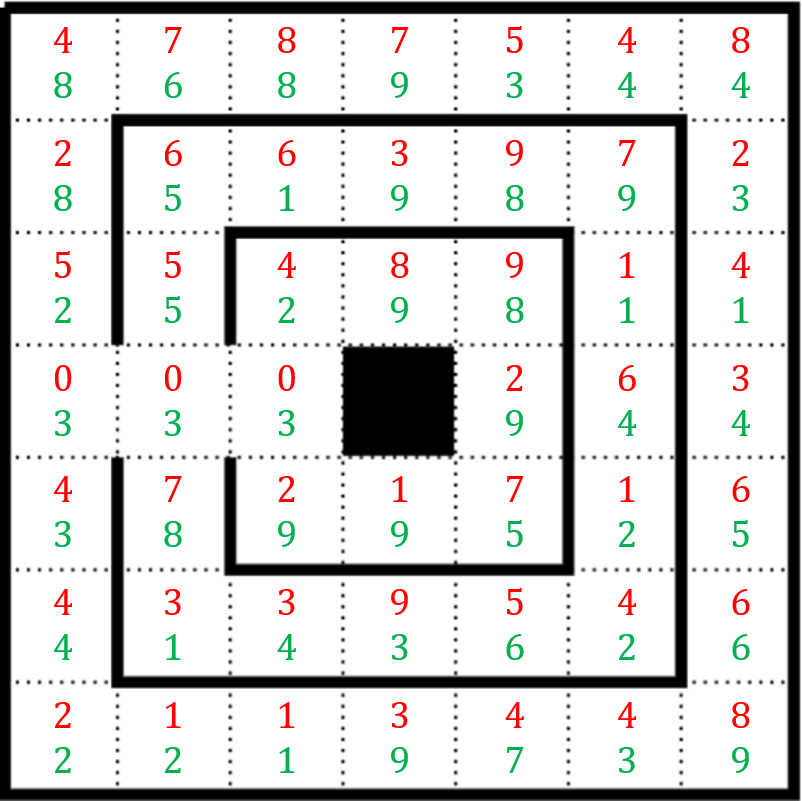}
    \end{minipage}
	}
	
    \caption{Case Study 2.}
   \label{fig:case2}
\end{figure}

\subsection{Discussions}
In this work, the LTL task defines the correctness of the system and is prioritized over the ratio objective, i.e., the ratio objective should be optimized subject to the constraint that the LTL task is satisfied w.p.1. The LTL task can be viewed as a generalized concept of ``safety".  Unlike traditional safety tasks, which typically focus on avoiding obstacles, the LTL task further requires that ``good" outcomes occur infinitely often. For instance, in Case Study 1, the LTL task $\varphi_1$ consists of two components: visiting $d$ infinitely often ($\square \lozenge d$) and never visiting $b$ ($\square \neg b$).

When the LTL formula is simple and the definition of the ratio objective depends on the LTL formula, it is relatively straightforward to avoid conflicts between the LTL task and the ratio objective. For example, in Case Study 1, the ratio objective optimizes the time cost of each visit to $d$ and is compatible with $\varphi_1$.
However, as the complexity of the LTL task increases, avoiding conflicts between the LTL task and the ratio objective becomes more challenging. For instance, in Case Study 1, the task $\varphi_2$ further requires visiting $c$ infinitely often ($\square \lozenge c$), introducing a conflict between the ratio objective and the task $\varphi_2$. One might consider designing a ratio objective that takes the infinite visits to both $d$ and $c$ into account, in order to avoid the conflict. However, in many cases, the quantitative objective is influenced by human preferences. For example, when defining the reward function for $\varphi_2$ in Case Study 1, the designer may prefer the robot to visit $d$ more frequently than $c$, as the robot’s primary task is to carry items, and a higher reward may be assigned to visiting $d$. This may lead to a potential conflict, as the optimal efficiency policy might prioritize visiting $d$ to maximize efficiency and fail to visit $c$ infinitely often.

Furthermore, when additional quantitative objectives, such as the one in Case Study 2, are introduced, designing a suitable ratio objective becomes even more difficult. In Case Study 2, simply adding a sufficiently large reward to the accepting state of the LTL task changes the optimal efficiency policy to one without conflict. However, this policy becomes fixed and does not adapt, even if additional rewards are introduced. Moreover, the true value of the quantitative objective, such as the transportation spare parts amount per time instant in Case Study 2, may become undesirable under this new policy.

From the discussion above, it is clear that conflicts between the LTL task and the quantitative objective are difficult to avoid in general. However, the algorithm proposed in this work guarantees that, even when the two objectives conflict, a conflict-free $\epsilon$-optimal policy is synthesized.

\section{Conclusion}\label{sec:con}
In this paper, we addressed the challenge of maximizing the long-run efficiency of control policies for Markov decision processes, which are characterized by the reward-to-cost ratio, while ensuring that the linear temporal logic task is achieved with probability one. Our results demonstrated that, by exploring stationary policies, it is possible to achieve $\epsilon$-optimality for any threshold value $\epsilon$. Our approach was based on the perturbation analysis technique, originally developed for the classical long-run average reward optimization problem. We extended this technique to the context of long-run efficiency optimization and derived a general formula. Our work not only expanded the theory of perturbation analysis but also highlighted its conceptual simplicity and effectiveness in solving MDPs with both qualitative and quantitative tasks.
In future research, we plan to further investigate how to formulate and solve the multi-objective optimization problem that balances efficiency performance with the visiting frequency of accepting states. 
  
\appendices
\section{Linear programming to solve average reward maximization}\label{app:program}

$\alpha \in \mathbb{R}^{|S|}$ in \eqref{opt2:con6} satisfies that $\alpha(s)>0$ and $\sum_{s \in s} \alpha(s)=1$. The intuitions of the linear program are as follows. 
 The decision variables are $x(s,a)$ and $y(s,a)$ for each state-action pair $s \in S$ and  $a \in A(s)$ in Equation~\eqref{opt2:con7}.
$x(s,a)$ represents steady probability of occupying state $s$ and choosing action $a$, and $y(s,a)$ represents the deviation value at state $s$ and choosing the action $a$.
In Equations~\eqref{opt2:con1} and~\eqref{opt2:con2}, variables $\gamma(s)$ and $\eta(t,s)$ are function of $x(s,a)$ representing the probability of occupying state $s$ and the probability of reaching from states $s$ to $t$, respectively. The variables $\lambda(s)$ and $\zeta(t,s)$ in Equation~\eqref{opt2:con3} and~\eqref{opt2:con4} are function of $y(s,a)$ similar to $\gamma(s)$ and $\eta(t,s)$, respectively. 
Then Equations~\eqref{opt2:con5} and~\eqref{opt2:con6} are constraints for probability flow of stationary distribution and deviation value.  
Finally, objective Equation~\eqref{eq:ratioobjective} compute the average reward for corresponding MC.

\begin{align}
 &\max_{x(s,a),y(s,a)} \quad \sum_{s \in S} \sum_{a \in A(s)} x(s,a)\mathtt{R}_{K}(s,a)   \label{opt2:obj} \\
\!\!\!\!\!\!\!\!\text{s.t. } \ \
&\gamma(s) = \sum_{a \in A(s)} x(s,a), \forall s \in S  \label{opt2:con1}\\ 
&\eta(t,s) = \sum_{a \in A(t)} P(s|t,a) x(t,a), \forall s \in S  \label{opt2:con2}\\ 
&\lambda(s) = \sum_{a \in A(s)} y(s,a), \forall s \in S  \label{opt2:con3}\\
&\zeta(t,s) = \sum_{a \in A(t)}P(s|t,a) y(t,a), \forall s \in S  \label{opt2:con4}\\
&\gamma(s) = \sum_{t \in S} \eta(t,s), \forall s \in S  \label{opt2:con5}\\
&\gamma(s)+\lambda(s) = \sum_{t \in S} \zeta(t,s)+\alpha(s), \forall s \in S  \label{opt2:con6}\\
&x(s,a) \geq 0, y(s,a) \geq 0, \forall s \in S ,\forall a \in A(s)   \label{opt2:con7}
\end{align} 

Let the optimal solution of linear program be $x^{*}(s,a)$ and $y^{*}(s,a)$. We define $S^\star=\{s \in S \mid \sum_{a \in A(s)} x^\star(s,a)>0 \}$. We can constructed a policy $\mu^\star_K$ by following equation:
\begin{align}\label{eq:optimalpolicyforK}
	 \mu^\star_K(s,a)= 
		\left\{
		\begin{array}{ll}
			 x^\star(s,a)/\sum_{a \in A(s)}x^\star(s,a)    &  \text{if } s \in S^\star\\
			 y^\star(s,a)/\sum_{a \in A(s)}y^\star(s,a)  & \text{otherwise.}
		\end{array}
		\right.  \nonumber
\end{align}

\section{Auxiliary Results and Proofs} \label{appendix:proof}
Let $\Phi:\Omega \to \mathbb{R}$ be a pay-off function. The expected pay-off initial from $s \in S$ under policy $\mu \in \Pi_\M$ is $E^\mu_s[\Phi]$. A policy $\mu' \in \Pi_\M$ is optimal w.r.t. pay-off $\Phi$ if $\forall s \in S$, $E^{\mu'}_s[\Phi]=\sup_{\mu \in \Pi_\M} E^\mu_s[\Phi]$.
We say $\Phi$ is \emph{prefix-independent} if for $\omega = s_0a_0s_1a_1\dots \in \Omega$, we have $\omega_n = s_na_ns_{n+1}a_{n+1}\dots \in \Omega$ satisfying $\Phi(\omega)=\Phi(\omega_n)$ for any $n \geq 0$. $\Phi$ is said to be \emph{submixing} if for $\omega = s_0a_0s_1a_1\dots$, $\omega_1 = s_0a_0 s_2 a_2s_4a_4\dots$ and $\omega_2 = s_1a_1s_3a_3s_5a_5\dots\in \Omega$, we have
\[
\Phi(\omega) \leq \max\{ \Phi(\omega_1), \Phi(\omega_2) \}.
\]
Let $\Pi^{SD}_{\M} \subseteq \Pi^{S}_\M$ be stationary deterministic policies set such that for $\mu \in \Pi^{SD}_{\M}$ and $s \in S$, there exists $a \in A(s)$ satisfying $\mu(s,a)=1$.
We now prove the existence of optimal efficiency stationary deterministic policy.
\begin{mycla}\label{cla:ratiooptpolicyexist}
    Given MDP $\M=(S,s_0,A,P,\mathcal{AP},\ell,Acc)$, reward function $\mathtt{R}:S\times A \to 
\mathbb{R}$ and cost function $\mathtt{C}:S\times A\to \mathbb{R}_+$, there exists optimal efficiency policy $\mu^\star \in \Pi^{SD}_\M$, i.e., $\forall s\in S$, $J^{\mu^\star}(s,\mathtt{R},\mathtt{C})=J(s,\mathtt{R},\mathtt{C},\Pi_\M)$.
\end{mycla}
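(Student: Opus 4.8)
Rather than decomposing $\M$ along its maximal end components and stitching together the communicating‑case solutions of \cite{von2016synthesizing}, the plan is to exhibit the ratio objective as the expected value of a \emph{tail} payoff on $\Omega$ for which optimal pure stationary policies are known to exist, and then to reconcile that payoff with the $\liminf$‑of‑expectations in \eqref{eq:ratioobjectdef}. Define $\Phi:\Omega\to\mathbb{R}$ by $\Phi(\omega)=\limsup_{N\to\infty}\big(\sum_{i=0}^{N}\mathtt{R}(s_i,a_i)\big)/\big(\sum_{i=0}^{N}\mathtt{C}(s_i,a_i)\big)$ for $\omega=s_0a_0s_1a_1\cdots$. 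Since $S\times A$ is finite and $\mathtt{C}$ is positive, $c_{\min}:=\min_{s,a}\mathtt{C}(s,a)>0$ and every finite‑horizon ratio lies in $[-\hat r/c_{\min},\hat r/c_{\min}]$ with $\hat r=\max_{s,a}|\mathtt{R}(s,a)|$; hence $\Phi$ is bounded, and it is plainly Borel measurable.

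The first task is to verify that $\Phi$ is prefix‑independent and submixing. Prefix‑independence is immediate: deleting the first $n$ state‑action pairs shifts the numerator and denominator of the $N$‑th partial ratio by fixed constants while the denominator diverges (it exceeds $(N-n+1)c_{\min}$), so the $\limsup$ is unaffected. For submixing, let $\omega_1=s_0a_0s_2a_2\cdots$ and $\omega_2=s_1a_1s_3a_3\cdots$ be the even/odd interleaves of $\omega$; for every $m$, the first $2m+1$ (resp.\ $2m+2$) state‑action pairs of $\omega$ are exactly the first $m+1$ pairs of $\omega_1$ together with the first $m$ (resp.\ $m+1$) pairs of $\omega_2$, so the mediant inequality $\tfrac{a+a'}{b+b'}\le\max\{a/b,a'/b'\}$ (valid for $b,b'>0$) bounds the corresponding partial ratio of $\omega$ by the maximum of a partial ratio of $\omega_1$ and one of $\omega_2$. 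Passing to $\limsup$ over $m$ and using $\limsup_m\max\{f(m),g(m)\}\le\max\{\limsup_m f,\limsup_m g\}$ gives $\Phi(\omega)\le\max\{\Phi(\omega_1),\Phi(\omega_2)\}$. With both properties established, I invoke the existence theorem for pure stationary optimal policies in finite MDPs equipped with a bounded, measurable, prefix‑independent, submixing payoff, which yields a policy $\mu^\star\in\Pi^{SD}_\M$ with $E^{\mu^\star}_s[\Phi]=\sup_{\mu\in\Pi_\M}E^\mu_s[\Phi]$ for every $s\in S$.

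It remains to match $\sup_\mu E^\mu_s[\Phi]$ with $J(s,\mathtt{R},\mathtt{C},\Pi_\M)$ and to confirm $\mu^\star$ attains the latter in the sense of \eqref{eq:ratioobjectdef}. Write $X_N(\omega)$ for the $N$‑th partial ratio. For an \emph{arbitrary} $\mu\in\Pi_\M$, uniform boundedness of $X_N$ together with the reverse Fatou lemma gives $J^\mu(s,\mathtt{R},\mathtt{C})=\liminf_N E^\mu_s[X_N]\le\limsup_N E^\mu_s[X_N]\le E^\mu_s[\limsup_N X_N]=E^\mu_s[\Phi]$. For a \emph{stationary} $\nu\in\Pi^S_\M$, the chain $\M^\nu$ is a finite Markov chain, so $\tfrac1N\sum_{i=0}^{N}\mathtt{R}(s_i,a_i)$ and $\tfrac1N\sum_{i=0}^{N}\mathtt{C}(s_i,a_i)$ converge $\textsf{Pr}^\nu_\M$‑almost surely (the latter to a limit $\ge c_{\min}>0$), whence $X_N$ converges a.s.\ and bounded convergence gives $J^\nu(s,\mathtt{R},\mathtt{C})=\lim_N E^\nu_s[X_N]=E^\nu_s[\Phi]$. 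Applying the first bound to a general $\mu$ and the second identity to $\nu=\mu^\star$ yields $J^\mu(s,\mathtt{R},\mathtt{C})\le E^\mu_s[\Phi]\le E^{\mu^\star}_s[\Phi]=J^{\mu^\star}(s,\mathtt{R},\mathtt{C})$, so $\mu^\star$ is optimal over all policies and $J^{\mu^\star}(s,\mathtt{R},\mathtt{C})=J(s,\mathtt{R},\mathtt{C},\Pi_\M)$ for every $s\in S$.

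The genuinely delicate point is exactly this last reconciliation: \eqref{eq:ratioobjectdef} is a $\liminf$ of expectations of a finite‑horizon quantity, not the expectation of a tail random variable, so the pure‑stationary‑optimality machinery does not apply to it verbatim. The two choices that make it go through are (i) using the $\limsup$ version of the payoff — the $\liminf$ version would make Fatou point the wrong way — and (ii) the almost‑sure convergence of the partial ratios under any stationary policy, which renders the $\limsup$/$\liminf$ distinction invisible to $\mu^\star$. The prefix‑independence and submixing verifications, by contrast, are routine once the mediant inequality is in hand.
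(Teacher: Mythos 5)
Your proof is correct, and its skeleton is the same as the paper's: define a pathwise ratio payoff on $\Omega$, check prefix-independence and submixing via the mediant inequality, and invoke the pure-stationary-optimality theorem of Gimbert for prefix-independent submixing payoffs. The genuine difference lies in how the tail payoff is reconciled with the $\liminf$-of-expectations criterion of \eqref{eq:ratioobjectdef}. The paper takes $\Phi=\liminf_N X_N$ and outsources the reconciliation to a citation of Bierth's equivalence of expected-average criteria; you instead take $\Phi=\limsup_N X_N$ and give a self-contained two-sided argument (reverse Fatou for arbitrary policies, almost-sure convergence of the partial ratios plus bounded convergence for stationary ones). Your version buys two things. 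First, the reconciliation is explicit rather than cited, and the sandwich $J^\mu\le E^\mu_s[\Phi]\le E^{\mu^\star}_s[\Phi]=J^{\mu^\star}$ is airtight given that $\mu^\star$ is stationary. Second, the $\limsup$ choice makes the limit passage in the submixing check a genuine one-liner, since $\limsup_m\max\{f,g\}=\max\{\limsup_m f,\limsup_m g\}$; the corresponding step for the $\liminf$ payoff is more delicate (in general $\liminf_m\max\{f,g\}$ can strictly exceed $\max\{\liminf_m f,\liminf_m g\}$), so the paper's one-line conclusion there is quicker than it should be. One small caveat: Gimbert's submixing hypothesis is stated for arbitrary factorizations of $\omega$ into interleaved finite blocks, not only the even/odd single-letter shuffle you (and the paper) verify; the mediant argument extends verbatim to that generality, but it is worth saying so when you invoke the theorem.
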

\begin{proof}
    Define the pay-off function $\Phi: \Omega \to \mathbb{R}$ such that for $\omega=s_0a_0s_1a_1\dots \in \Omega$,
    \begin{equation}\label{eq:payoff-def}
        \Phi(\omega)=\liminf_{n\to +\infty}\frac{\sum_{i=0}^n\mathtt{R}(s_i,a_i)}{\sum_{i=0}^n\mathtt{C}(s_i,a_i)}.
    \end{equation}
We now prove that pay-off function \eqref{eq:payoff-def} is prefix-independent and submixing. For $\omega = s_0a_0s_1a_1\dots \in \Omega$, let $\textsf{pre}(\mathtt{R},m)=\sum_{i=0}^{m-1} \mathtt{R}(s_i,a_i) $, $\textsf{suf}(\mathtt{R},m,n)=\sum_{i=m}^{n} \mathtt{R}(s_i,a_i) $, $\textsf{pre}(\mathtt{C},m)=\sum_{i=0}^{m-1} \mathtt{C}(s_i,a_i)$ and $\textsf{suf}(\mathtt{C},m,n)=\sum_{i=m}^{n} \mathtt{C}(s_i,a_i)$. Since $\mathtt{C}$ is a positive function, we have $\lim_{n\to \infty}\textsf{suf}(\mathtt{C},m,n) = +\infty$. Then for $\omega=s_0a_0s_1a_1\dots \in \Omega$ and $\omega_m=s_ma_ms_{m+1}a_{m+1}\dots \in \Omega$, we have
\begin{equation}
    \begin{aligned}
        & \Phi(\omega)= \liminf_{n\to +\infty} \frac{\sum_{i=0}^{n}\mathtt{R}(s_i,a_i)}{\sum_{i=0}^{n}\mathtt{C}(s_i,a_i)} \\
        =&\liminf_{n\to+\infty} \frac{\textsf{pre}(\mathtt{R},m)+\textsf{suf}(\mathtt{R},m,n)}{\textsf{pre}(\mathtt{C},m)+\textsf{suf}(\mathtt{C},m,n)} \\
        =&\liminf_{n\to +\infty} \left(\frac{\textsf{pre}(\mathtt{R},m)}{\textsf{suf}(\mathtt{C},m,n)}+\frac{\textsf{suf}(\mathtt{R},m,n)}{\textsf{suf}(\mathtt{C},m,n)}\right)/(\frac{\textsf{pre}(\mathtt{C},m)}{\textsf{suf}(\mathtt{C},m,n)}+1)\\
        =& \liminf_{n\to +\infty} \frac{\textsf{suf}(\mathtt{R},m,n)}{\textsf{suf}(\mathtt{C},m,n)}=\Phi(\omega_m).
        \nonumber
    \end{aligned}
\end{equation}
The last equality holds because $\textsf{pre}(\mathtt{R},m)/\textsf{suf}(\mathtt{C},m,n)$ and $\textsf{pre}(\mathtt{C},m)/\textsf{suf}(\mathtt{C},m,n)$ are zero as $n \to +\infty$.
Thus pay-off function $\Phi$ is prefix-independent.

For $c/a$ and $d/b$ such that $a,b>0$, assume that $c/a \geq d/b$. Then $bc\geq ad$. Thus $ac+bc\geq ad+ac$ and we get $(c+d)/(a+b)\leq c/a$. It holds that  $(c+d)/(a+b)\leq \max\{ c/a ,d/b\}$. Then, for $s_0a_0\dots s_{2n-1}a_{2n-1}$, let $a_n=\sum_{i=0}^{n-1} \mathtt{C}(s_{2i},a_{2i})$, $b_n=\sum_{i=0}^{n-1} \mathtt{C}(s_{2i+1},a_{2i+1})$, $c_n=\sum_{i=0}^{n-1} \mathtt{R}(s_{2i},a_{2i})$, $d_n=\sum_{i=0}^{n-1} \mathtt{R}(s_{2i+1},a_{2i+1})$, we have $(c_n+d_n)/(a_n+b_n)\leq \max\{ c_n/a_n ,d_n/b_n\}$. Since above result holds for any $n$, we know that function $\Phi$ is submixing. 

Since $\Phi$ is prefix-independent and submixing, with result in~\cite{gimbert2007pure}, there exists deterministic policy $\mu^\star \in \Pi^{SD}_\M$ such that $\forall s \in S$, $E^{\mu^\star}_s[\Phi]=\sup_{\mu \in \Pi_\M} E^\mu_s[\Phi]$. It means that criterion
\[
 E \left\{  \liminf_{N\to +\infty} \frac{\sum_{i=0}^{N} \mathtt{R}(s_i,a_i)}{\sum_{i=0}^{N} \mathtt{C}(s_i,a_i)} \right\}
\]
has stationary deterministic optimal policy. Then from \cite{bierth1987expected}, the policy $\mu^\star$ also optimizes the criterion in \eqref{eq:ratioobjectdef}, i.e., $\forall s\in S$, $J^{\mu^\star}(s,\mathtt{R},\mathtt{C})=J(s,\mathtt{R},\mathtt{C},\Pi_\M)$. It completes the proof.
\end{proof}
Given stationary policy $\mu \in \Pi^S_\M$, assume that MC $\M^\mu$ has $k$ recurrent class $R_1,R_2,\dots,R_k \subseteq S$. From \cite{vonsynthesizing} we have
\begin{equation} \label{eq:multichainratio}
    J^{\mu}(s,\mathtt{R},\mathtt{C})=\sum_{i=1}^k\textsf{Pr}^{\mu}(s,R_i) J^{\mu}(s(R_i),\mathtt{R},\mathtt{C})
\end{equation}
where $\textsf{Pr}^{\mu}(s,R_i)$ is the reaching probability in MC $\M^{\mu}$ when initial state is $s$ \cite[Page 759]{baier2008principles} and $s(R_i) \in R_i$ is arbitrary state in $R_i$. Note that every recurrent class is in some MEC. We say a stationary policy $\mu \in \Pi^S_\M$ is \emph{regular} if for each MEC $(\S,\A) \in \texttt{MEC}(\M)$, one of (a) and (b) holds: (a) All states in $\S$ are transient in $\M^{\mu}$; (b) In MC $\M^{\mu}$, only one recurrent class $R \subseteq \S$ is in MEC $(\S,\A)$ and states in $\S \setminus R$ will reach $R$ eventually.
For regular policy $\mu$ and MEC $(\S,\A)$ such that (b) holds, we have $\textsf{Pr}^{\mu}_{\texttt{R}}(\S,\A) = \textsf{Pr}^\mu(s_0,R_j)$ where $\textsf{Pr}^\mu_\texttt{R}(\S,\A)$ is defined in \eqref{eq:stayingforeverinMEC}.
From \eqref{eq:multichainratio}, we have
\begin{equation}\label{eq:MECreachingratio}
\begin{aligned}
    J^{\mu}(s_0,\mathtt{R},\mathtt{C}) & = \sum_{(\S,\A) \in \texttt{MEC}(\M)} \textsf{Pr}^{\mu}_\texttt{R}(\S,\A) J^{\mu}(s_{(\S,\A)},\mathtt{R},\mathtt{C}) \\
    & = \sum_{(\S,\A) \in \texttt{MEC}(\M)} \textsf{Pr}^{\mu}_\texttt{R}(\S,\A) \frac{W^{\mu}(s_{(\S,\A)},\mathtt{R})}{W^{\mu}(s_{(\S,\A)},\mathtt{C})},
\end{aligned}
\end{equation}
such that $s_{(\S,\A)} \in \S$ can be any state in $\S$. Let $\mu^\star$ be the optimal efficiency policy w.r.t. $\mathtt{R}$ and $\mathtt{C}$. We now prove that it is without loss of generality to assume that $\mu^\star$ is regular.
\begin{mycla}\label{cla:regularoptimalpolicy}
Given MDP $\M$, reward $\mathtt{R}$ and cost $\mathtt{C}$. We can find a regular policy $\mu^\star \in \Pi^{SD}_\M$ which is optimal deterministic stationary policy w.r.t. $\mathtt{R}$ and $\mathtt{C}$, i.e.,
\begin{equation}
    J^{\mu^\star}(s,\mathtt{R} ,\mathtt{C})=J(s,\mathtt{R} ,\mathtt{C},\Pi_\M), \forall s\in S. \nonumber
\end{equation}
\end{mycla}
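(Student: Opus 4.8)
The plan is to start from an optimal deterministic stationary policy $\mu^\star\in\Pi^{SD}_\M$, which Claim~\ref{cla:ratiooptpolicyexist} provides, so that $J^{\mu^\star}(s)=J(s,\mathtt{R},\mathtt{C},\Pi_\M)=:V^\star(s)$ for all $s\in S$, and then to ``repair'' it inside the end components so that it becomes regular without lowering its value. Before the construction I would record two facts. First, for the pay-off $\Phi$ of \eqref{eq:payoff-def} (shown prefix-independent in the proof of Claim~\ref{cla:ratiooptpolicyexist}), every stationary $\mu$ satisfies $J^{\mu}(s)=E^{\mu}_s[\Phi]$: on each recurrent class the running ratio converges almost surely to the within-class efficiency by the ergodic theorem, and conditioning on the recurrent class eventually reached together with \eqref{eq:multichainratio} extends this to arbitrary $s$. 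Hence $\mu^\star$ is also optimal for the criterion $E^{\mu}_s[\Phi]$ from every state, i.e.\ $V^\star(s)=\sup_{\mu}E^{\mu}_s[\Phi]=E^{\mu^\star}_s[\Phi]$. Second, $V^\star$ is constant on every MEC $(\S,\A)$: since $(\S,\A)$ is communicating, from any $s\in\S$ there is a policy using only $\A$-actions that reaches any prescribed $s'\in\S$ with probability one in finite time while staying in $\S$; concatenating it with an $\epsilon$-optimal policy from $s'$ and invoking prefix-independence gives $V^\star(s)\ge V^\star(s')-\epsilon$, and symmetry gives equality. Write $V^\star_{(\S,\A)}$ for this common value; in particular all recurrent classes of $\M^{\mu^\star}$ lying inside one MEC share the same value.

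Next I would construct the repaired policy $\mu'$. Call $(\S,\A)\in\texttt{MEC}(\M)$ \emph{active} if $\S$ contains a recurrent state of $\M^{\mu^\star}$; since a recurrent class of $\M^{\mu^\star}$ is a closed, strongly connected set of states, it sits inside exactly one active MEC (MECs being pairwise disjoint). For each active MEC fix one such recurrent class $R_{(\S,\A)}\subseteq\S$ and, exploiting strong connectivity of the digraph of $(\S,\A)$, define a deterministic policy on $\S$ that coincides with $\mu^\star$ on $R_{(\S,\A)}$ (which is closed under $\mu^\star$) and, on $\S\setminus R_{(\S,\A)}$, plays $\A$-actions chosen by backward reachability so that every state is driven into $R_{(\S,\A)}$ while remaining in $\S$. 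Set $\mu'$ equal to this policy on each active MEC and equal to $\mu^\star$ everywhere else. Then $\mu'\in\Pi^{SD}_\M$ and $\mu'$ is regular: on an active MEC, $R_{(\S,\A)}$ is the unique recurrent class of $\M^{\mu'}$ contained in $\S$ and all of $\S\setminus R_{(\S,\A)}$ reaches it, which is case~(b); on an inactive MEC $\mu'$ agrees with $\mu^\star$, so any recurrent class of $\M^{\mu'}$ inside it would already be recurrent under $\mu^\star$ — impossible — hence all its states are transient, which is case~(a).

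Finally I would check $J^{\mu'}(s)=V^\star(s)$ for all $s$; since $\mu'\in\Pi_\M$, only ``$\ge$'' is at issue. Let $\sigma$ be the first time the chain enters an active MEC; under $\mu^\star$ and under $\mu'$ it is almost surely finite because the chain is eventually absorbed into a recurrent class, which lies in an active MEC. By the strong Markov property and prefix-independence, $E^{\rho}_s[\Phi]=E^{\rho}_s\!\big[E^{\rho}_{X_\sigma}[\Phi]\big]$ for $\rho\in\{\mu^\star,\mu'\}$. For $\rho=\mu^\star$ this equals $E^{\mu^\star}_s[V^\star(X_\sigma)]=\sum_{(\S,\A)\,\text{active}}\textsf{Pr}^{\mu^\star}(X_\sigma\in\S)\,V^\star_{(\S,\A)}$, using $E^{\mu^\star}_u[\Phi]=V^\star(u)$ and the MEC-constancy of $V^\star$. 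For $\rho=\mu'$, once at $X_\sigma\in\S$ the policy $\mu'$ keeps the chain inside $\S$ and funnels it into $R_{(\S,\A)}$, so $E^{\mu'}_{X_\sigma}[\Phi]=J^{\mu'}(s(R_{(\S,\A)}))=V^\star_{(\S,\A)}$, whence $J^{\mu'}(s)=E^{\mu'}_s[\Phi]=\sum_{(\S,\A)\,\text{active}}\textsf{Pr}^{\mu'}(X_\sigma\in\S)\,V^\star_{(\S,\A)}$. Since $\mu^\star$ and $\mu'$ agree on all states outside the active MECs, coupling them with common randomness up to $\sigma$ shows $X_\sigma$ has the same distribution under both, so the two sums coincide and $J^{\mu'}(s)=V^\star(s)=J(s,\mathtt{R},\mathtt{C},\Pi_\M)$; thus $\mu'$ is the desired regular optimal deterministic stationary policy.

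The hard part is this last ``$\ge$'', i.e.\ showing the intra-MEC rerouting costs nothing: a priori, driving ``escaping'' states of an active MEC back into a single recurrent class could lower the value, and the whole argument rests on $V^\star$ being constant on each MEC — so that staying inside an active MEC is never worse than leaving it — together with prefix-independence of the ratio pay-off, which makes both the finite rerouting and the first-entry decomposition harmless. The remaining ingredients — that $J^{\mu}(s)=E^{\mu}_s[\Phi]$ for stationary $\mu$, that $\sigma$ is almost surely finite, and that $X_\sigma$ belongs to a unique MEC — are routine consequences of standard finite-Markov-chain facts together with properties of $\Phi$ already established.
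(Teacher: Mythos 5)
Your proof is correct and follows essentially the same route as the paper: start from the optimal deterministic stationary policy of Claim~\ref{cla:ratiooptpolicyexist}, use the constancy of the optimal value on each MEC, and reroute states within each active MEC into a single recurrent class without changing the value. You supply more detail than the paper does (a direct argument for MEC-constancy via prefix-independence instead of citing Puterman, and an explicit first-entry/coupling argument for value preservation), but the underlying idea is identical.
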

\begin{proof}
   By \cite[Thm 8.3.2]{puterman}, for MEC $(\S,\A) \in \texttt{MEC}(\M)$,
   \begin{equation} \label{eq:comequationratio}
J(s,\mathtt{R},\mathtt{C},\Pi_\M)=J(s',\mathtt{R},\mathtt{C},\Pi_\M), \forall s,s' \in \S.
   \end{equation}
Assume that in MC $\M^{\mu^\star}$ there are several recurrent classes in $(\S,\A)$. Let $s, s' \in \S$ be two states in different recurrent classes. Since $J^{\mu^\star}(s,\mathtt{R},\mathtt{C})=J^{\mu^\star}(s',\mathtt{R},\mathtt{C})$ and \eqref{eq:multichainratio}, these two recurrent classes achieve same efficiency value. Thus we can modify $\mu^\star$ such that all states in $\S$ will reach only one of these recurrent classes eventually and achieve same efficiency value. Thus, it is without loss of generality to assume that each MEC has at most one recurrent class. Assume that some MEC $(\S,\A)$ has one recurrent class and $s \in \S$ will leave the MEC eventually with non-zero probability. By \eqref{eq:comequationratio}, we can modify $\mu^\star$ such that $s$ will stay in $(\S,\A)$ forever w.p.1 and achieve same efficiency value. This completes the proof.
\end{proof}
By result of Claim~\ref{cla:regularoptimalpolicy}, we assume that any optimal efficiency policy is regular in this work. For a regular policy $\mu \in \Pi^S_\M$ and an MEC $(\S,\A) \in \texttt{MEC}(\M)$ that contains recurrent class in MC $\M^\mu$, if we modify $\mu$ to $\mu'$ over $(\S,\A)$ such that $\mu'$ is also a regular policy but the recurrent class of $(\S,\A)$ in MC $\M^{\mu'}$ is different, the probability of staying forever in MEC $(\S,\A)$ are same in MC $\M^{\mu}$ and $\M^{\mu'}$, i.e.,
\begin{equation}\label{eq:samestayingprobability}
    \textsf{Pr}^{\mu}_\texttt{R}(\S,\A) = \textsf{Pr}^{\mu'}_\texttt{R}(\S,\A).
\end{equation}
For $\mu \in \Pi_{\M}$, end component $(\hat{\S},\hat{\A})$, 
let
\begin{equation}\label{eq:reachingprobability}
\textsf{Pr}^\mu(\hat{\S},\hat{\A})=\textsf{Pr}^{\mu}_{\M}(\{ \omega \in \Omega \mid \texttt{limit}(\omega) = (\hat{\S},\hat{\A}) \})
\end{equation}
be the probability of sample path which just visits all state-action pairs in EC $(\hat{\S},\hat{\A})$ infinitely often.

For $\mu \in \Pi^S_\M$ and its limit transition matrix $ (\mathbb{P}^{\mu})^\star$, we define $p(\mu)=\min \{ (\mathbb{P}^{\mu})^\star_{s,t} \mid s,t \in S \wedge (\mathbb{P}^{\mu})^\star_{s,t}>0 \}$ the smallest non-zero limit probability under policy $\mu \in \Pi^S_\M$. We define
\begin{equation}\label{eq:minlimitamongdeter}
    \hat{p}=\min\{ p(\mu) \mid \mu \in \Pi^{SD}_\M \}
\end{equation}
 the smallest non-zero limit probability among stationary deterministic policies. Since $\Pi^{SD}_\M$ is finite, the minimum operation in \eqref{eq:minlimitamongdeter} is well-defined. Now suppose that $\M$ has $n$ MAECs, i.e., 
$\texttt{MAEC}=\{ (\S_1,\A_1),(\S_2,\A_2),\dots,(\S_n,\A_n) \}$. We define $\hat{r}= \max_{s\in S, a \in A} |\mathtt{R}(s,a)|$, $\hat{c} = \min_{s\in S, a \in A} \mathtt{C}(s,a)$ and $\Bar{c}=\max_{s \in S, a \in A} \mathtt{C}(s,a)$. Then we define reward function $\hat{\mathtt{R}}$:
\begin{align}\label{eq:modifiedreward}
	 \hat{\mathtt{R}}(s,a)= 
		\left\{
		\begin{array}{ll}
  \mathtt{R}(s,a)    &  \text{if } s \in \S_i \wedge a \in \A_i(s)
			  \\
    -\frac{(1+ \frac{\Bar{c}}{\hat{c}})\hat{r}}{\hat{p}}    &  \text{otherwise. }
		\end{array}
		\right. 
\end{align}
Then we prove Claim~\ref{claim:hatrewardupperbound} and Claim~\ref{claim:boundforLTLstate} to characterize the optimal efficiency under reward $\hat{\mathtt{R}}$ and cost $\mathtt{C}$, i.e., $J(s,\hat{\mathtt{R}} ,\mathtt{C},\Pi_\M)$.
\begin{mycla} \label{claim:hatrewardupperbound}
Given original reward $\mathtt{R}$ and cost $\mathtt{C}$, the modified reward $\hat{\mathtt{R}}$ in \eqref{eq:modifiedreward}, and initial state $s_0$, we have
\begin{equation}\label{eq:stationaryhalf}
     J(s_0,\hat{\mathtt{R}} ,\mathtt{C},\Pi_\M) \geq J(s_0,\mathtt{R},\mathtt{C},\Pi_\M^\varphi).
\end{equation}
\end{mycla}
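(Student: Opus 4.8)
The plan is to reduce the claim to the pointwise statement that $J^\mu(s_0,\hat{\mathtt{R}},\mathtt{C})=J^\mu(s_0,\mathtt{R},\mathtt{C})$ for \emph{every} policy $\mu\in\Pi^\varphi_\M$. Granting this, the claim follows at once by taking suprema, since $\Pi^\varphi_\M\subseteq\Pi_\M$ and therefore
\[
J(s_0,\hat{\mathtt{R}},\mathtt{C},\Pi_\M)\;\geq\;\sup_{\mu\in\Pi^\varphi_\M}J^\mu(s_0,\hat{\mathtt{R}},\mathtt{C})\;=\;\sup_{\mu\in\Pi^\varphi_\M}J^\mu(s_0,\mathtt{R},\mathtt{C})\;=\;J(s_0,\mathtt{R},\mathtt{C},\Pi^\varphi_\M).
\]
This route seems forced: one cannot first invoke Proposition~\ref{prop:stationaryisenough} to reduce to stationary policies, since the proof of that proposition already relies on \eqref{eq:stationaryhalf}.

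To prove the pointwise identity I would fix $\mu\in\Pi^\varphi_\M$ and analyze a generic sample path $\omega=s_0a_0s_1a_1\cdots$. Under any policy, $\texttt{limit}(\omega)=(\S_\omega,\A_\omega)$ forms an end component w.p.1 (a classical fact; see \cite{baier2008principles}), and since $\mu\in\Pi^\varphi_\M$ the state sequence of $\omega$ satisfies the Rabin acceptance condition w.p.1, i.e.\ $\S_\omega\cap G_k\neq\emptyset$ and $\S_\omega\cap B_k=\emptyset$ for some accepting pair; on this probability-one event $(\S_\omega,\A_\omega)$ is thus an AEC and hence is contained in some MAEC $(\S_i,\A_i)$, so $\S_\omega\subseteq\S_i$ and $\A_\omega(s)\subseteq\A_i(s)$ for all $s\in\S_\omega$. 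Since only finitely many pairs $(s,a)$ fail to satisfy $s\in\S_\omega,\ a\in\A_\omega(s)$ and each such pair is used only finitely often, there is a finite (random) time $T(\omega)$ beyond which every executed pair $(s_t,a_t)$ has $s_t\in\S_\omega\subseteq\S_i$ and $a_t\in\A_\omega(s_t)\subseteq\A_i(s_t)$, so $\hat{\mathtt{R}}(s_t,a_t)=\mathtt{R}(s_t,a_t)$ by \eqref{eq:modifiedreward}. Consequently, writing $f_N^{\mathtt{V}}(\omega)=(\sum_{i=0}^N\mathtt{V}(s_i,a_i))/(\sum_{i=0}^N\mathtt{C}(s_i,a_i))$, for $N\geq T(\omega)$ we have $f_N^{\hat{\mathtt{R}}}(\omega)-f_N^{\mathtt{R}}(\omega)=g(\omega)/\sum_{i=0}^N\mathtt{C}(s_i,a_i)$ with $g(\omega)$ independent of $N$; as $\sum_{i=0}^N\mathtt{C}(s_i,a_i)\geq(N+1)\hat c\to\infty$, this forces $f_N^{\hat{\mathtt{R}}}(\omega)-f_N^{\mathtt{R}}(\omega)\to0$ w.p.1.

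The main obstacle, and the step I would treat most carefully, is upgrading this almost-sure agreement to equality of the \emph{$\liminf$ of expectations} appearing in \eqref{eq:ratioobjectdef} --- which is not an expectation of a $\liminf$. Here I would use that both $\{f_N^{\mathtt{R}}\}_N$ and $\{f_N^{\hat{\mathtt{R}}}\}_N$ are uniformly bounded (by $\hat r/\hat c$ and by $(1+\bar c/\hat c)\hat r/(\hat p\,\hat c)$, respectively, since a sum of $N+1$ utility terms is bounded by $N+1$ times its sup-norm while $\sum_{i=0}^N\mathtt{C}(s_i,a_i)\geq(N+1)\hat c$), so that the bounded convergence theorem gives $E\{f_N^{\hat{\mathtt{R}}}\}-E\{f_N^{\mathtt{R}}\}\to0$ and hence $\liminf_N E\{f_N^{\hat{\mathtt{R}}}\}=\liminf_N E\{f_N^{\mathtt{R}}\}$, i.e.\ $J^\mu(s_0,\hat{\mathtt{R}},\mathtt{C})=J^\mu(s_0,\mathtt{R},\mathtt{C})$. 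Two auxiliary points deserve care: stating and citing cleanly that the limit set of a trajectory under a policy in $\Pi^\varphi_\M$ is w.p.1 an accepting end component contained in a MAEC, and the $\liminf$/expectation interchange above. Note finally that this particular claim uses only that $\hat{\mathtt{R}}$ is bounded and agrees with $\mathtt{R}$ on MAEC state--action pairs; neither $\hat r>0$ nor the precise penalty value $-(1+\bar c/\hat c)\hat r/\hat p$ is needed here --- that constant is what will be used in the companion Claim~\ref{claim:boundforLTLstate}.
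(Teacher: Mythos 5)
Your proposal is correct and follows essentially the same route as the paper: the paper likewise reduces the claim to the pointwise identity $J^{\mu}(s_0,\hat{\mathtt{R}},\mathtt{C})=J^{\mu}(s_0,\mathtt{R},\mathtt{C})$ for every $\mu\in\Pi^\varphi_\M$, obtained from the fact that under such a policy the limit set of a sample path is w.p.1 an accepting end component contained in a MAEC, where $\hat{\mathtt{R}}$ and $\mathtt{R}$ agree. The only difference is that you rigorously justify the step the paper asserts in one line (``the objective value depends only on state--action pairs visited infinitely often'') via the positive-cost divergence of the denominator and bounded convergence, which is a welcome tightening rather than a different argument.
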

\begin{proof}
    For $\mu \in \Pi^\varphi_\M$, $(\hat{\S},\hat{\A}) \in \texttt{AEC}(\M)$ and $\textsf{Pr}^\mu(\hat{\S},\hat{\A})$ in \eqref{eq:reachingprobability}, we have $\sum_{(\hat{\S},\hat{\A}) \in \texttt{AEC}(\M)} \textsf{Pr}^\mu(\hat{\S},\hat{\A})=1$ from \cite[Thm 10.122]{baier2008principles}. Thus the  state-action pairs visited infinitely often are in MAECs with probability 1. Since 1) the objective value in \eqref{eq:ratioobjectdef} is only dependent on state-action pairs that appear infinitely often, and 2) $\mathtt{R}$ and $\hat{\mathtt{R}}$ are same over MAECs, we have $J^{\mu}(s_0,\hat{\mathtt{R}},\mathtt{C}) = J^{\mu}(s_0,\mathtt{R},\mathtt{C})$.
Since $\mu$ is arbitrary policy in $\Pi^\varphi_\M$, we complete the proof.
\end{proof}
\begin{mycla}\label{claim:boundforLTLstate}
For $s\in S$, we have $J(s,\hat{\mathtt{R}},\mathtt{C},\Pi_\M)\geq -\frac{\hat{r}}{\hat{c}}$.
\end{mycla}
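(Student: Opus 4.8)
The plan is to exhibit a single policy under which the system spends all but finitely many steps inside MAECs, where the modified reward $\hat{\mathtt{R}}$ coincides with $\mathtt{R}$, and then apply the elementary bounds $\mathtt{R}\ge -\hat r$ and $\mathtt{C}\ge\hat c>0$. By the standing assumption made right after Problem~\ref{problem:ratioltl}, from the state $s$ there exists a policy $\mu\in\Pi^\varphi_\M$ with $\textsf{Pr}^\mu_\M(s\models\varphi)=1$. Exactly as in the proof of Claim~\ref{claim:hatrewardupperbound}, \cite[Theorem~10.122]{baier2008principles} gives $\sum_{(\hat\S,\hat\A)\in\texttt{AEC}(\M)}\textsf{Pr}^\mu(\hat\S,\hat\A)=1$, so w.p.1 the limit $\texttt{limit}(\omega)$ is an accepting end component; since every AEC is contained in some MAEC, w.p.1 there is an index $m=m(\omega)$ such that every visited pair $(s_i,a_i)$ with $i\ge m$ lies in some MAEC, and in particular $\hat{\mathtt R}(s_i,a_i)=\mathtt R(s_i,a_i)$ for all $i\ge m$ by the definition \eqref{eq:modifiedreward}.

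Next I would establish a pathwise lower bound on this probability-one event. For $i\ge m$ we have $\hat{\mathtt R}(s_i,a_i)\ge -\hat r$ and $\mathtt C(s_i,a_i)\ge\hat c>0$, so for every $n\ge m$ the numerator satisfies $\sum_{i=m}^n\hat{\mathtt R}(s_i,a_i)\ge -\hat r(n-m+1)$ and the denominator satisfies $\sum_{i=m}^n\mathtt C(s_i,a_i)\ge\hat c(n-m+1)$; a short case distinction on the sign of the numerator then yields $\tfrac{\sum_{i=m}^n\hat{\mathtt R}(s_i,a_i)}{\sum_{i=m}^n\mathtt C(s_i,a_i)}\ge -\tfrac{\hat r}{\hat c}$. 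Because the first $m$ summands of numerator and denominator are fixed while the denominator diverges (as $\mathtt C\ge\hat c>0$), the same prefix-independence computation used in the proof of Claim~\ref{cla:ratiooptpolicyexist} gives $\liminf_{n\to\infty}\tfrac{\sum_{i=0}^n\hat{\mathtt R}(s_i,a_i)}{\sum_{i=0}^n\mathtt C(s_i,a_i)}\ge -\tfrac{\hat r}{\hat c}$ on this event.

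Finally I would pass from this pathwise inequality to the efficiency value. Writing $X_N(\omega)=\tfrac{\sum_{i=0}^N\hat{\mathtt R}(s_i,a_i)}{\sum_{i=0}^N\mathtt C(s_i,a_i)}$, the finiteness of the reward range together with $\mathtt C\ge\hat c>0$ gives the uniform lower bound $X_N\ge -\tfrac{1}{\hat c}\max_{s,a}|\hat{\mathtt R}(s,a)|$, so Fatou's lemma applies and $J^\mu(s,\hat{\mathtt R},\mathtt C)=\liminf_{N\to\infty}E[X_N]\ge E\!\left[\liminf_{N\to\infty}X_N\right]\ge -\tfrac{\hat r}{\hat c}$; alternatively one may invoke \cite{bierth1987expected} as in Claim~\ref{cla:ratiooptpolicyexist} to identify this value with the expected ratio pay-off. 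Since $J(s,\hat{\mathtt R},\mathtt C,\Pi_\M)=\sup_{\mu'\in\Pi_\M}J^{\mu'}(s,\hat{\mathtt R},\mathtt C)\ge J^\mu(s,\hat{\mathtt R},\mathtt C)$, the claim follows. The only delicate step is this last interchange of $\liminf$ and expectation; the rest is bookkeeping with the already-established reachability structure of MAECs and the prefix-independence of the ratio objective.
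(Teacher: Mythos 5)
Your proof is correct, but it takes a genuinely different route from the paper's. The paper argues through stationary distributions: it picks a recurrent class $R$ inside an MAEC, applies the unichain formula \eqref{eq:ratioobjective} to bound the per-class efficiency by $-\hat{r}/\hat{c}$ using $\hat{\mathtt{R}}\geq-\hat r$ and $\mathtt{C}\geq\hat c$ on MAECs, and then invokes the existence of a policy that stays in MAECs forever w.p.1 together with the decomposition \eqref{eq:MECreachingratio} to transfer the bound to an arbitrary initial state. You instead argue pathwise: the a.s.\ end-component limit theorem gives a random time after which all visited pairs lie in MAECs, a tail-ratio estimate plus the prefix-independence computation from Claim~\ref{cla:ratiooptpolicyexist} yields $\liminf_N X_N\geq-\hat r/\hat c$ almost surely, and Fatou's lemma (justified by the uniform lower bound on $X_N$) converts this into a bound on $\liminf_N E[X_N]$, which is exactly $J^\mu$ as defined in \eqref{eq:ratioobjectdef}. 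Your version is more self-contained in that it bypasses the unichain/multichain stationary-distribution machinery and applies verbatim to non-stationary policies in $\Pi^\varphi_\M$, at the cost of the measure-theoretic interchange of $\liminf$ and expectation; the paper's version is shorter given that \eqref{eq:ratioobjective} and \eqref{eq:MECreachingratio} are already established elsewhere. Your case distinction on the sign of the tail numerator and the uniform lower bound needed for Fatou are both handled correctly, so I see no gap.
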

\begin{proof}
    Let $(\S,\A) \in \texttt{MAEC}(\M)$ and $R \subseteq \S$ be a recurrent class. From \eqref{eq:ratioobjective} the ratio objective value initial from $R$ is
\begin{equation}\label{eq:rewardboundforeasy}
    \begin{aligned}
        & \frac{\sum_{s \in R} \sum_{a \in \A(s)} \pi(s) \mu(s,a) \hat{\mathtt{R}}(s,a)}{\sum_{s \in R} \sum_{a \in \A(s)} \pi(s) \mu(s,a) \mathtt{C}(s,a)} \\
       \geq & \frac{\sum_{s \in R} \sum_{a \in \A(s)} \pi(s) \mu(s,a) -\hat{r}}{\sum_{s \in R} \sum_{a \in \A(s)} \pi(s) \mu(s,a) \hat{c}} \\
       = & -\frac{\hat{r}}{\hat{c}},
    \end{aligned}
\end{equation}
where $\pi$ is the limit distribution over the recurrent class $R$. The inequality holds since $\hat{\mathtt{R}}(s,a) \geq -\hat{r}$ for $s \in \S, a\in \A(s)$. Since we assumed that the LTL task can be finished w.p.1 regardless of initial state, then initial from each $s\in S$, there exists policy under which MDP will stay in MAECs forever w.p.1. Combining with \eqref{eq:rewardboundforeasy} we complete the proof. 
\end{proof}
We finally prove that under optimal policy of efficiency w.r.t. $\hat{\mathtt{R}}$ and $\mathtt{C}$, the recurrent states are in MAECs. 
\begin{mycla}\label{claim:deterpolicyproperty}
We denote by $\mu^\star \in \Pi^{SD}_\M$ the optimal deterministic stationary policy such that
\begin{equation}\label{eq:optimaldeterdef}
    J^{\mu^\star}(s,\hat{\mathtt{R}} ,\mathtt{C})=J(s,\hat{\mathtt{R}} ,\mathtt{C},\Pi_\M), \forall s\in S.
\end{equation}
The existence of policy $\mu^\star$ comes from Claim~\ref{cla:ratiooptpolicyexist}. Then following statements hold without loss of generality:
    \begin{enumerate}
        \item $s$ is transient in $\M^{\mu^\star}$ if $\mu^\star(s,a)=1$, $\hat{\mathtt{R}}(s,a)\neq \mathtt{R}(s,a)$.
    \item The efficiency values are same with rewards $\hat{\mathtt{R}}$ and $\mathtt{R}$, i.e., \begin{equation}\label{eq:starissame}
    J^{\mu^\star}(s_0,\hat{\mathtt{R}},\mathtt{C})=J^{\mu^\star}(s_0,\mathtt{R},\mathtt{C}).
\end{equation}
    \end{enumerate}
\end{mycla}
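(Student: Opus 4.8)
The plan is to prove statement~1 first — it carries all the content — and then obtain statement~2 in one line. Claim~\ref{cla:ratiooptpolicyexist} supplies an optimal deterministic stationary policy and Claim~\ref{cla:regularoptimalpolicy} lets me assume it is regular. Among the (finite) set of optimal deterministic stationary regular policies I would choose $\mu^\star$ minimizing the number of \emph{recurrent} states $s$ whose unique chosen action $a$ satisfies $\hat{\mathtt{R}}(s,a)\neq\mathtt{R}(s,a)$, i.e.\ for which $(s,a)$ lies in no MAEC. Statement~1 asserts exactly that this minimum is $0$, so I would assume for contradiction that it is positive, witnessed by a recurrent $s$ lying in a recurrent class $R$ of $\M^{\mu^\star}$.

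The first key step is a quantitative bound showing $R$ has efficiency at most $-\hat r/\hat c$. Let $\pi$ be the stationary distribution of $\M^{\mu^\star}$ on $R$. Since $\pi(s)=(\mathbb{P}^{\mu^\star})^\star_{s,s}>0$, definition~\eqref{eq:minlimitamongdeter} gives $\pi(s)\geq\hat p$; together with $|\hat{\mathtt{R}}|\leq\hat r$ on MAEC pairs, $\hat{\mathtt{R}}=-(1+\tfrac{\Bar{c}}{\hat c})\hat r/\hat p$ off them (hence at $s$), and $\hat c\leq\mathtt{C}\leq\Bar{c}$ everywhere, formula~\eqref{eq:ratioobjective} for the irreducible sub-chain of $\M^{\mu^\star}$ on $R$ gives
\[
J^{\mu^\star}(s,\hat{\mathtt{R}},\mathtt{C})
=\frac{\sum_{s'\in R}\pi(s')\,v^{\mu^\star}_{\hat{\mathtt{R}}}(s')}{\sum_{s'\in R}\pi(s')\,v^{\mu^\star}_{\mathtt{C}}(s')}
\leq\frac{\hat r-\bigl(1+\tfrac{\Bar{c}}{\hat c}\bigr)\hat r}{\Bar{c}}
=-\frac{\hat r}{\hat c}.
\]
Combined with optimality of $\mu^\star$ and the lower bound $J(s,\hat{\mathtt{R}},\mathtt{C},\Pi_\M)\geq-\hat r/\hat c$ of Claim~\ref{claim:boundforLTLstate}, this forces $R$ to have efficiency exactly $-\hat r/\hat c$.

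The second key step, which I expect to be the main obstacle, is the ``repair'' producing a contradiction with minimality. Let $(\S,\A)\in\texttt{MEC}(\M)$ be the MEC containing $R$; by regularity $R$ is its only recurrent class in $\M^{\mu^\star}$ and $\S\setminus R$ reaches $R$. I would modify $\mu^\star$ on $(\S,\A)$ by a deterministic stationary policy that drives every state of $\S$ into some MAEC w.p.1 and then, inside that MAEC, uses only actions of the MAEC so that its recurrent class consists of MAEC pairs — such a policy exists by the standing assumption that the LTL task is achievable w.p.1 from every state. In the modified chain all of $\S$ becomes transient, every recurrent class lies in an MAEC, and no state's efficiency decreases: the only affected states are those that routed through $R$, which previously saw the minimal value $-\hat r/\hat c$, and a recurrent class inside an MAEC again has efficiency at least $-\hat r/\hat c$ by the estimate of Claim~\ref{claim:boundforLTLstate}. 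Re-regularizing via Claim~\ref{cla:regularoptimalpolicy} if necessary (which keeps all recurrent classes inside MAECs), I obtain an optimal deterministic stationary regular policy with strictly fewer recurrent offending states — $s$ is no longer one — contradicting minimality. The delicate checks are that the redirection lowers no value (this is where the uniform bound of Claim~\ref{claim:boundforLTLstate}, attained by $R$, is indispensable) and that it creates no new offending recurrent pair (ensured by staying within the target MAEC's own actions).

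With statement~1 in hand, statement~2 is immediate: every recurrent class $R_i$ of $\M^{\mu^\star}$ lies in an MAEC, where $\hat{\mathtt{R}}$ and $\mathtt{R}$ coincide, so $v^{\mu^\star}_{\hat{\mathtt{R}}}$ and $v^{\mu^\star}_{\mathtt{R}}$ agree on $R_i$ and hence $J^{\mu^\star}(s(R_i),\hat{\mathtt{R}},\mathtt{C})=J^{\mu^\star}(s(R_i),\mathtt{R},\mathtt{C})$ by~\eqref{eq:ratioobjective} on $R_i$; averaging over $i$ with weights $\textsf{Pr}^{\mu^\star}(s_0,R_i)$ via~\eqref{eq:multichainratio} yields~\eqref{eq:starissame}.
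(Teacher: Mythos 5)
Your first step is, in substance, the paper's entire proof of statement~1 --- but you stop at a non\nobreakdash-strict bound $J^{\mu^\star}(s,\hat{\mathtt{R}},\mathtt{C})\leq-\hat r/\hat c$, and this is what forces you into the ``repair'' argument, which is both unnecessary and the one place where your proof has a genuine gap. The inequality is in fact strict: in your numerator estimate the contribution of the states other than $s$ is $\sum_{s'\in R\setminus\{s\}}\pi(s')\,v^{\mu^\star}_{\hat{\mathtt{R}}}(s')\leq(1-\pi(s))\,\hat r<\hat r$, strictly, because $\pi(s)>0$ (and one may assume $\hat r>0$, since $\hat r=0$ makes $\hat{\mathtt{R}}\equiv\mathtt{R}\equiv 0$ and statement~1 vacuous). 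Hence the numerator is strictly below $-(\Bar{c}/\hat c)\hat r<0$ and $J^{\mu^\star}(s,\hat{\mathtt{R}},\mathtt{C})<-\hat r/\hat c$, which already contradicts the optimality of $\mu^\star$ combined with Claim~\ref{claim:boundforLTLstate}. No minimality/exchange argument is needed; this is exactly how the paper argues. Your derivation of statement~2 from statement~1 matches the paper's.

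The repair step as written does not go through. You modify $\mu^\star$ only on the MEC $(\S,\A)$ containing $R$ and assert the new policy ``drives every state of $\S$ into some MAEC w.p.1 and then, inside that MAEC, uses only actions of the MAEC.'' But $(\S,\A)$ need not contain any MAEC --- the offending recurrent class can sit in a non-accepting MEC --- so the target MAEC may lie outside $\S$. Reaching it and then staying in it then requires altering $\mu^\star$ at states outside $\S$, which breaks your claim that ``the only affected states are those that routed through $R$'': other states whose trajectories pass through the re-routed region can be diverted to recurrent classes of strictly lower value, so optimality at every state is no longer guaranteed. If you instead confine the modification to $\S$, then a trajectory that exits $\S$ reverts to the unmodified $\mu^\star$ and is absorbed into whatever recurrent classes that policy reaches, which need not be MAECs; neither absorption in an MAEC nor the non-decrease of values follows. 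A global surgery of the kind in the proof of Proposition~\ref{prop:rewardisequaltoratio} could patch this, but the strict inequality above makes the whole step moot.
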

\begin{proof}
We prove 1) by contradiction. Assume that $\mu^\star(\hat{s},a)=1$, $\hat{\mathtt{R}}(\hat{s},a)\neq \mathtt{R}(\hat{s},a)$ and $\hat{R} \subseteq S$ is the recurrent class $\hat{s}$ belongs to. Then from \eqref{eq:ratioobjective} we have
\begin{equation}
    \begin{aligned}
        & J^{\mu^\star}(\hat{s},\hat{\mathtt{R}} ,\mathtt{C})\\
        =& \frac{ -\pi_{\hat{s}}(\hat{s})\frac{(1+\frac{\Bar{c}}{\hat{c}})\hat{r}}{\hat{p}} + \sum_{s \in \hat{R} \setminus \{ \hat{s}\} } \sum_{a \in A(s)} \pi_{\hat{s}}(s) \mu^\star(s,a) \hat{\mathtt{R}}(s,a)}{\sum_{s \in \hat{R}} \sum_{a \in A(s)} \pi_{\hat{s}}(s) \mu^\star(s,a) \mathtt{C}(s,a)} \\
       \leq & \frac{-(1+\frac{\Bar{c}}{\hat{c}})\hat{r}+\sum_{s \in \hat{R}\setminus \{s\}} \sum_{a \in A(s)} \pi_{\hat{s}}(s) \mu^\star(s,a) \hat{r}}{\sum_{s \in \hat{R}} \sum_{a \in A(s)} \pi_{\hat{s}}(s) \mu^\star(s,a) \mathtt{C}(s,a)} \\
       < & \frac{-\frac{\Bar{c}}{\hat{c}}\hat{r}}{\Bar{c}} =-\frac{\hat{r}}{\hat{c}},
        \nonumber
    \end{aligned}
\end{equation}
where $\pi_{\hat{s}}$ is the limit distribution of MC $\M^{\mu^\star}$, i.e., the row of state $\hat{s}$ of limit transition matrix $(\mathbb{P}^{\mu^\star})^\star$. Since $\pi_{\hat{s}}(\hat{s})>0$, from definition of $\hat{p}$ in \eqref{eq:minlimitamongdeter}, we have $\pi_{\hat{s}}(\hat{s})>\hat{p}$. Since the denominator is positive and $\hat{\mathtt{R}}(s,a)\leq \hat{r}$, we get first inequality. Since $\sum_{s \in R\setminus \{s\}} \sum_{a \in A(s)} \pi_{\hat{s}}(s) \mu^\star(s,a) \hat{r}=(1-\pi_{\hat{s}}(\hat{s}))\hat{r}<\hat{r}$ and numerator is negative, we know second inequality holds. It violates the result of Claim~\ref{claim:boundforLTLstate}. Thus 1) holds.

To prove 2), from 1), if state $s$ is recurrent in MC $\M^{\mu^\star}$ and $\mu^\star(s,a)=1$, then $\hat{\mathtt{R}}(s,a)=\mathtt{R}(s,a)$. Thus \eqref{eq:starissame} holds. 
\end{proof}

\bibliographystyle{plain}
\bibliography{main}

\end{document}